\newtheorem{thm}{Theorem}
\numberwithin{thm}{section}
\newtheorem{prop}[thm]{Proposition}
\newtheorem{cor}[thm]{Corollary}
\newtheorem{lem}[thm]{Lemma}
\theoremstyle{definition}
\newtheorem{remark}[thm]{Remark} 
\newtheorem{defi}[thm]{Definition} 
\newcommand{\tab}{\textsf{Tab}}
\newcommand{\rst}{\textsf{RST}}
\renewcommand{\S}{\mathbb{S}}
\newcommand{\Y}{\mathbb{Y}}
\newcommand{\ct}[2]{\textsf{CT}_{\mathbb{#1}}[#2]}
\renewcommand{\c}[1]{\textsf{CT}_{\mathbb{#1}}}
\newcommand{\row}[2]{\textsf{row}_{\mathbb{#1}}[#2]}
\newcommand{\col}[2]{\textsf{col}_{\mathbb{#1}}[#2]}
\newcommand{\pars}[1]{\textsf{Par}\left(#1\right)}
\newcommand{\comps}{\textsf{Comp}}
\newcommand{\inv}{\mathrm{inv}}
\newcommandx{\charles}[2][1=]{\todo[linecolor=blue,backgroundcolor=blue!25,bordercolor=blue,#1]{#2 ---Charles}}
\newcommandx{\laura}[2][1=]{\todo[linecolor=red,backgroundcolor=red!25,bordercolor=red,#1]{#2 ---Laura}}
\newcommandx{\jeangabriel}[2][1=]{\todo[linecolor=green,backgroundcolor=green!25,bordercolor=green,#1]{#2 ---Jean Gabriel}}
\title[Vector-valued and highest weight Jack and Macdonald polynomials]{Connections between Vector-valued and highest weight Jack and Macdonald polynomials}
\author[Colmenarejo]{Laura Colmenarejo}
\address{Laura Colmenarejo, University of Massachusetts at Amherst, US}
\email{laura.colmenarejo.hernando@gmail.com}
\urladdr{https://sites.google.com/view/l-colmenarejo/home}
\author[Dunkl]{Charles F. Dunkl}
\address{Charles F. Dunkl, Department of Mathematics, University of Virginia,Charlottesville VA 22904-4137, USA}
\email{cfd5z@virginia.edu}
\urladdr{http://people.virginia.edu/$\sim$cfd5z/}
\author[Luque]{Jean-Gabriel Luque}
\address{Jean-Gabriel Luque, Universit\'e de Rouen Normandie,
Laboratoire d’Informatique, du Traitement de l’Information et des Syst\`emes (LITIS), Avenue de l'Universit\'e - BP 8, 76801 Saint-\'Etienne-du-Rouvray Cedex, France}
\email{jean-gabriel.luque@univ-rouen.fr}
\thanks{This work is partially supported by the ``European Regional Development Fund'' (ERDF) via the regional (GRR) project MOUSTIC}
\keywords{Macdonald and Jack Polynomials, singular polynomials, highest weight polynomials, vector-valued polynomials, representation theory of symmetric group and Hecke algebra}
\begin{document}

\begin{abstract}
We analyze conditions under which a projection from the vector-valued Jack or Macdonald polynomials to scalar polynomials has useful properties, specially commuting with the actions of the symmetric group or Hecke algebra, respectively, and with the Cherednik operators for which these polynomials are eigenfunctions. In the framework of representation theory of the symmetric group and the Hecke algebra, we study the relation between singular nonsymmetric Jack and Macdonald polynomials and highest weight symmetric Jack and Macdonald polynomials. Moreover, we study the quasistaircase partition as a continuation of our study on the conjectures of Bernevig and Haldane on clustering properties of symmetric Jack polynomials. 
\end{abstract}

\maketitle

\setcounter{tocdepth}{3}
\tableofcontents

\section{Introduction}
A \emph{singular polynomial} is a polynomial belonging to the kernel of all the Dunkl operators \cite{D2005}. These operators are of prime importance in the study of polynomial representations of Hecke algebras and, in particular, in the study of Macdonald polynomials (see \cite{Macd1995}). Macdonald polynomials are two parameters $(q,t)$ multivariate polynomials indexed by compositions or partitions which can be symmetric or nonsymmetric and homogeneous or shifted, and which degenerate to (one parameter $\kappa$) Jack polynomials when setting $q=t^{\kappa}$ and sending $t$ to $1$. For certain values of the parameters $(q,t)$ (respectively $\kappa$), some nonsymmetric Macdonald (respectively Jack) polynomials indexed by special compositions are singular. The singular Macdonald polynomials have important properties; for instance, singular shifted Macdonald polynomial equals the corresponding homogeneous one. The notion of singularity has a symmetric counterpart called \emph{highest weight}. Macdonald polynomials are highest weight if they belong to the kernel of the sum of the Dunkl operators. For Jack polynomials, the highest weight polynomials are characterized similarly in terms of the partial sums. 
We propose to investigate the relationship between the notions of singularity and highest weight by using a more general class of polynomials, called \emph{vector-valued} Macdonald and Jack polynomials \cite{DL2011,DL2012}, whose coefficients belong to a representation of the Hecke algebra and which project to the classical case while preserving singularity properties. This work is part of a larger study \cite{CDL2017,DL2015,JL} on the conjectures of Bernevig and Haldane \cite{BH2008} on clustering properties of (symmetric) Jack polynomials.
These clustering properties are closely related to the quasistaircase partition, which get our attention later on the paper. Taking the Jack polynomials case as a guide, we replicate the study for Macdonald polynomials. Our study is based on Theorem \ref{Conjecture}, which is already included in \cite{CDL2017} as a conjecture as part of our conclusions and that will be proved in a forthcoming paper of the authors, \cite{CD2019}.
 
\vspace{0.4cm}

The paper is organized as follows. In Section {\color{blue}\ref{sec:representation theory}}, we set up the representation theory framework for the symmetric group, the Hecke algebra and the polynomials representations. This section also includes a very brief description of the combinatorial objects appearing in our study. In Section {\color{blue}\ref{sec:Vec-Valued}}, we introduce the vector-valued polynomials together with their projections and the characterization of the symmetric elements in the general setting of polynomials. In Section {\color{blue}\ref{sec:JackMacdo}}, Jack and Macdonald polynomials are defined and we state some consequences related to the previous section. Section {\color{blue}\ref{SectionQSP}} is devoted to the quasistaircase partition. Starting with the importance of the quasistaircase and its definition, we analyze the singularity of nonsymmetric Jack polynomials and its consequences for the highest weight symmetric Jack polynomials. In parallel fashion, we finish this section with the analogous analysis for the Macdonald polynomials. In Section {\color{blue}\ref{SectionSP}}, we investigate some consequences of our results and we exhibit some factorization formulas. We finish this paper wrapping up our conclusions and perspective in Section {\color{blue}\ref{sec:conclusion}}. 

\section{Representation theory}\label{sec:representation theory}
Jack and Macdonald polynomials are constructed by use of the representation theory of the symmetric group and the Hecke algebra. In this section we refresh basic concepts and results with the aim of setting up the notation and the starting point.

\subsection{Combinatorial objects}
Let us start recalling the definition of combinatorial objects that appear in our study and setting our notation.

A \emph{partition} $\tau=(\tau_1,\dots,\tau_N)$ is a nonincreasing sequence such that $\tau_i\geq 0$, for all $i$. The length of a partition $\tau$ is the number of nonzero parts of $\tau$, $\ell(\tau)=\max\{i:\ \tau_i>0\}$. Moreover, we say that $\tau$ is a partition of $n$, or that the size of $\tau$ is $n$, if $\sum_i \tau_i=n$. We denote by $\tau \vdash n$ or $|\tau|=n$ if $\tau$ is a partition of $n$ and by $\pars{n}$ the set of partitions of $n$. We consider the following \emph{partial order} on partitions: For $\tau,\gamma \in \pars{n}$, we say that $\tau$ dominates $\gamma$, and we write $\tau \succ \gamma$, if $\tau\neq \gamma$ and $\displaystyle{\sum_{i=1}^j \tau_i \geq \sum_{i=1}^j \gamma_i}$, for all $1\leq j\leq n$.

A \emph{composition} $\alpha=(\alpha_1,\dots,\alpha_N)$ is any permutation of a partition. We denote by $\alpha^+$ the unique nonincreasing rearrangement of $\alpha$ such that $\alpha^+$ is a partition and by $|\alpha|$ the size of the composition, that is, the sum of all its parts. We denote by $\comps$ the set of compositions.

The definition of the partial order on partitions applies also for compositions since it does not use that the sequences are weakly decreasing. Moreover, it can be used to define another order: For $\alpha$ and $\beta$ compositions, we write $\alpha \triangleright \beta$ if $|\alpha|=|\beta|$ and either $\alpha^+\succ \beta^+$, or $\alpha^+=\beta^+$ and $\alpha\succ\beta$.

\begin{remark}
Notice that, by definition, the partitions and compositions appearing in this paper are allowed to have zeros and are standarized to have $N$ entries in total (including the zeros). However, we omit the zero entries in those partitions for which they are not relevant. Moreover, we mostly work with partitions of $N$, $\pars{N}$. That is, the set of partitions $\tau=(\tau_1,\dots,\tau_N)$ with $\sum_i \tau_i = N$. 
\end{remark}

A \emph{Ferrers diagram of shape} $\tau\in\pars{n}$ is obtained by drawing boxes at points $(i,j)$, for $1\leq i \leq \ell(\tau)$ and $1\leq j\leq \tau_i$ (corresponding to French notation). Given a Ferrers diagram of shape $\tau$, we define the \emph{conjugate partition} $\tau^\prime$ as the partition associated to the diagram that is obtained by exchanging rows and columns.

We define three fillings of a Ferrers diagram of shape $\tau\in \pars{n}$:
\begin{itemize}
\item A \emph{column-strict Young tableau} is a filling such that the entries are increasing in the columns and nondecreasing in the rows, and they are in $\{1,2,\dots,n\}$. 

\item A \emph{reverse standard Young tableau} (RSYT) is a filling such that the entries are exactly $\{1,2,\dots,n\}$ and are decreasing in rows and columns. 

\item A \emph{reverse row-ordered standard tableau} is a filling such that the entries are exactly $\{1,2,\dots,n\}$ and are decreasing in rows, with no condition on the columns. 
\end{itemize}

This paper has the reverse standard Young tableaux as one of the main combinatorial objects. Therefore, we denote by $\tab_\tau$ the set of RSYT of shape $\tau$ and let $V_\tau$ be the space spanned by RSYTs of shape $\tau$ with orthogonal basis $\tab_\tau$. Note that $\tab_\tau \subset \rst_\tau$, where $\rst_\tau$ denotes the set of reverse row-ordered standard tableaux of shape $\tau$. 

We finish this subsection introducing useful notation for the tableaux in $\tab_\tau$. 
Let $\S \in \tab_\tau$. The entry $i$ of $\S$ is at coordinates $(\row{S}{i},\col{S}{i})$ and the \emph{content} of the entry is $\ct{S}{i}:= \col{S}{i}-\row{S}{i}$. Then, each $\S\in \tab_\tau$ is uniquely determined by its \emph{content vector} $\c{S} = \left[\ct{S}{i}\right]_{i=1}^N$. 
For instance,
$\S=\scalebox{0.7}{
\begin{tikzpicture}
\draw (0,0) rectangle (0.5,0.5);
\draw (0,0.5) rectangle (0.5,1);
\draw (0.5,0) rectangle (1,0.5);
\draw (0.5,0.5) rectangle (1,1);
\draw (1,0) rectangle (1.5,0.5);
\draw (1,0.5) rectangle (1.5,1);
\draw (1.5,0) rectangle (2,0.5);
\node at (0.25,0.25) {7};
\node at (0.75,0.25) {6};
\node at (1.25,0.25) {5};
\node at (1.75,0.25) {2};
\node at (0.25,0.75) {4};
\node at (0.75,0.75) {3};
\node at (1.25,0.75) {1};
\end{tikzpicture}}$\ has shape $\tau=(4,3)$ and content vector $\c{S}=[1,3,0,-1,2,1,0]$.

There is a partial order on $\tab_\tau$ related to the inversion number:
\begin{eqnarray*}
\text{inv}(\S ) := \# \{ (i,j):\ 1\leq i<j \leq N,\ \ \ct{S}{i}\geq \ct{S}{j}+2 \}.
\end{eqnarray*}
We denote by $\S _0$ the \emph{inv-maximal element} of $\tab_\tau$, which has the numbers $N,N-1,\dots,1$ entered column-by-column, and by $\S _1$ the \emph{inv-minimal element} of $\tab_\tau$, which has these numbers entered row-by-row. 

Given $\S \in \tab_\tau$, and two nonnegative integers, $m$ and $m_0$, such that $m_0$ is a factor of $m$, we associate to $\S$ the sequence $\alpha(\S)$ defined by setting
\begin{eqnarray}\label{defJsing}
\alpha(\S )_i = v\left(\row{S}{i}\right),\ \text{ for }1\leq i\leq N \text{ and }\S \in\tab_\tau,
\end{eqnarray}
where $v$ is the auxiliary function $v$: $v(1)=0$, and $v(j)=m+(j-2)m_0$, for $2\leq j\leq \ell(\tau)$. The sequence $\alpha(\S)$ has the property that in every final part of the sequence any number $\alpha(\S)_i$ occurs at least as often as the number $\alpha(\S)_{i+1}$. This property resembles to the definition of reverse lattice permutations, also known as Yamanouchi words \cite{Macd1995}, and that is why we call $\alpha(\S)$ the \emph{reverse lattice permutation} of $\S$.

For instance, for $m=2$, $m_0=1$ and 
 $\S =\scalebox{0.7}{
 \begin{tikzpicture}
\draw (0,0) rectangle (0.5,0.5);
\draw (0.5,0) rectangle (1,0.5);
\draw (1,0) rectangle (1.5,0.5);
\draw (1.5,0) rectangle (2,0.5);
\draw (2,0) rectangle (2.5,0.5);
\draw (0,0.5) rectangle (0.5,1);
\draw (0.5,0.5) rectangle (1,1);
\draw (0,1) rectangle (0.5,1.5);
\draw (0.5,1) rectangle (1,1.5);
\draw (0,1.5) rectangle (0.5,2);
\node at (0.25,0.25) {10};
\node at (0.75,0.25) {6};
\node at (1.25,0.25) {5};
\node at (1.75,0.25) {3};
\node at (0.25,0.75) {9};
\node at (2.25,0.25) {1};
\node at (0.75,0.75) {4};
\node at (0.25,1.25) {8};
\node at (0.75,1.25) {2};
\node at (0.25,1.75) {7};
\end{tikzpicture}}$, we have 
$\alpha(\S )=(0,3,0,2,0,0,4,3,2,0)$. We do not specify the integers $m$ and $m_0$ in the notation since it will be clear by the context. 

\subsection{The symmetric group and the Hecke algebra}
Let $\mathcal{S}_N$ be the \emph{symmetric group}. That is, it is the set of permutations of $\{1,2,\dots,N\}$, which acts on $\mathbb{C}^N$ by permutation of the coordinates. The group $\mathcal{S}_N$ is generated by simple reflections $s_i:=(i,i+1)$, for $1\leq i<N$, and it is abstractly presented by $\{s_i^2=1:\ 1\leq i<N\}$, together with the \emph{braid relations} ($s_{i+1}s_{i}s_{i+1}=s_{i}s_{i+1}s_{i}$ and $s_{i}s_{j}=s_{j}s_{i}$ for $|i-j|>1$). The associated group algebra $\mathbb{C}\mathcal{S}_N$ is the linear space $\displaystyle{\left\{ \sum_{\sigma \in \mathcal{S}_N} c_\sigma \sigma\right\}}$.
The \emph{Jucys-Murphy elements} for $\mathbb{C}\mathcal{S}_N$ are the elements of the form $\displaystyle{\omega_i :=\sum_{j=i+1}^N s_{ij}}$, for $1\leq i<N$, where $s_{ij}$ denotes the transposition $(i,j)$. 

The symmetric group can be seen as a particular case of the \emph{Hecke algebra}, $\mathcal{H}_N(t)$, defined for a formal parameter $t$ as the associative algebra generated by $\{T_1,T_2,\dots, T_{N-1}\}$ subject to the following relations:
\begin{eqnarray*}
\left\{ \begin{array}{ll}
(T_i+1)(T_i-t)= 0, &\text{ for } 1\leq i < N-1, \\[0.1in]
T_iT_{i+1}T_i = T_{i+1}T_iT_{i+1}, &\text{ for } 1 \leq i< N-2, \text{ and } \\[0.1in]
T_iT_j = T_jT_i,&\text{ for } 1\leq i<j-1\leq N-2.
\end{array}\right.
\end{eqnarray*}
The \emph{Jucys-Murphy elements} for $\mathcal{H}_N(t)$ are defined recursively by $\phi_N = 1$ and $\displaystyle{\phi_i=\frac{1}{t}T_i\phi_{i+1}T_i}$, for $1\leq i <N$.
For $t=1$, $\mathbb{C}\mathcal{S}_N$ and $\mathcal{H}_N(t)$ are identical. Otherwise, there is a linear isomorphism based on the map $s_i \longmapsto T_i$.

\subsection{Irreducible representations}
We start this subsection recalling the definitions of a representation and an irreducible representation in the general framework of group theory. 

A \emph{representation of a group $G$} on a vector space $V$ is a group homomorphism from $G$ to $GL(V)$, the general linear group on $V$. It is common practice to refer to $V$ itself as the representation or $G$-module. A subspace $W$ of $V$ that is invariant under the group action is called a \emph{subrepresentation}. If $V$ has exactly two subrepresentations, namely the zero-dimensional subspace and $V$ itself, then the representation is said to be \emph{irreducible}. 

We describe now the irreducible representations for each case. For the Hecke algebra, the irreducible representations are indexed by partitions of $N$ and, by abuse of notation, we denote them by its indexing partition $\tau \vdash N$. The representations are constructed in terms of RSYT and the action of $\{T_i\}$ on the basis elements. For $\S \in \tab_\tau$ and $i$, with $1\leq i <N$,
\begin{enumerate}[(I)]
\item If $\row{S}{i}=\row{S}{i+1}$, then $\S \tau(T_i)=t\S $.
\item If $\col{S}{i}=\col{S}{i+1}$, then $\S \tau(T_i)=-\S $.
\item If $\row{S}{i}<\row{S}{i+1}$ and $\col{S}{i}>\col{S}{i+1}$, we denote by $\S^{(i)}\in \tab_\tau$ the tableau obtained from $\S $ by exchanging $i$ and $i+1$ and $b=\ct{S}{i}-\ct{S}{i+1}$. Then, $\displaystyle{\S \tau(T_i) = \S ^{(i)}+\frac{t-1}{1-t^{-b}}\S }$.
\item If $\ct{S}{i}-\ct{S}{i+1}\leq -2$, then $\displaystyle{\S \tau(T_i) = \frac{t(t^{b+1}-1)(t^{b-1}-1)}{(t^b-1)^2}\S ^{(i)}+\frac{t^b(t-1)}{t^b-1}\S }$, where we use the same notation than in the previous case. 
\end{enumerate}
Observe that the last case can be obtained from the previous case by interchanging $\S $ and $\S^{(i)}$ and applying the quadratic relation $\tau(T_i)^2=(t-1)\tau(T_i) + t I$, where $I$ denotes the identity operator on $V_\tau$. We will refer to the formulas (I--IV) as the \emph{action formulas for $\tau(T_i)$}. 

Taking $t=1$, we recover the irreducible representation for the symmetric group $\mathcal{S}_N$ and the action of $\tau(s_i)$. We describe them here. For $\S \in \tab_\tau$ and $i$, with $1\leq i <N$, there are four possibilities:
\begin{enumerate}[(I)]
\item If $\row{S}{i}=\row{S}{i+1}$, then $ \S \tau (s_i)=\S $.
\item If $\col{S}{i}=\col{S}{i+1}$, then $\S \tau (s_i)=-\S $.
\item If $\row{S}{i}<\row{S}{i+1}$ and $\col{S}{i}>\col{S}{i+1}$, then $\displaystyle{\S \tau (s_i)= \S ^{(i)}+\frac{1}{b}\S }$, where $b=\ct{S}{i}-\ct{S}{i+1}$ as in the Hecke algebra case. 
\item If $\ct{S}{i}-\ct{S}{i+1}\leq -2$, then $\displaystyle{\S \tau (s_i)= \left(1-\frac{1}{b^2}\right)\S ^{(i)}+\frac{1}{b}\S } $.
\end{enumerate}
We will refer to these formulas as the \emph{action formulas for $\tau(s_i)$}. 

For instance, the Jucys-Murphy elements for $\mathbb{C}\mathcal{S}_N$ act on $\S \in \tab_\tau$ by $\S \tau(\omega_i) =\ct{S}{i}\S $, whereas for $\mathcal{H}_N(t)$, the Jucys-Murphy elements act on $\S \in \tab_\tau$ by $\S \tau(\phi_i)=t^{\ct{S}{i}}\S $.

Consider the following inner product on $V_\tau$: For $\S , \S ^\prime\in\tab_\tau$, $\langle \S ,\S ^{\prime}\rangle_{t}:=\delta_{\S ,\S ^{\prime}}\cdot \gamma(\S ,t)$, with 
\begin{eqnarray*}
\gamma(\S ;t):= \prod_{\substack{i<j \\ \ct{S}{j}-\ct{S}{i}\geq 2}} \frac{\left(1- t^{\ct{S}{j}-\ct{S}{i}-1} \right)\left(1-t^{\ct{S}{j}-\ct{S}{i}+1}\right)}{\left(1-t^{\ct{S}{j}-\ct{S}{i}} \right)^2}.
\end{eqnarray*}
This inner product degenerates to the inner product in $\mathcal{S}_N$: 
\begin{eqnarray*}
\left\langle \S ,\S ^\prime\right\rangle_1 := \lim_{t\rightarrow 1}\langle \S ,\S ^{\prime}\rangle_t= \delta_{\S ,\S ^\prime} \cdot \prod_{\substack{1\leq i<j\leq N \\ \ct{S}{i}\leq \ct{S}{j}-2}} \left(1-\frac{1}{\left[\ct{S}{i}-\ct{S}{j}\right]^2}\right).
\end{eqnarray*}

\subsection{Polynomial representations}\label{SubSectPR}
For $N\geq 2$, let us denote by $x$ the set of variables $\{x_1,x_2,\dots,x_N\}$. For a composition $\alpha \in \comps$, let $x^\alpha=\prod_{i} x_i^{\alpha_i}$ be a monomial of degree $n=|\alpha|$. Let $\mathbb{F}$ be an extension field of $\mathbb{C}$, possibly $\mathbb{C}(\kappa)$ or $\mathbb{C}(q,t)$, for $\kappa$, $q$ and $t$ transcendental or formal parameters. Consider the space of polynomials, $\mathcal{P} :=\text{span}_\mathbb{F} \{ x^\alpha:\alpha\in \comps\}$ and the space of homogeneous polynomials $\mathcal{P}_n :=\text{span}_\mathbb{F} \{ x^\alpha:\ \alpha\in \comps,\ |\alpha|=n\}$, for $n\in\mathbb{Z}_{>0}$.

The action of $\mathcal{S}_N$ on polynomials is defined by saying that $s_i$ permutes the variables $x_i$ and $x_{i+1}$. Therefore, $p(x)s_i:=p(xs_i)$, for $1\leq i<N$. For arbitrary transpositions, $s_{ij}$ exchanges the positions of $x_i$ and $x_j$ and $p(x)s_{ij} := p(xs_{ij})$. In general, $p(x)\sigma=p(x\sigma^{-1})$, where $(x\sigma)_i = x_{\sigma^{-1}(i)}$, since it is an action on the right.

The action of the Hecke algebra $\mathcal{H}_N(t)$ on polynomials is defined by 
\begin{eqnarray*}
p(x)T_i := (1-t)x_{i+1}\frac{p(x)-p(xs_i)}{x_i-x_{i+1}} +tp(xs_i).
\end{eqnarray*}
The defining relations can be verified straightforwardly. 

The following concept will have a relevant role throughout the paper since it characterizes the space of polynomials that transforms according to a specific $\tau$.
\begin{defi}\label{BasisG}
Suppose $\mathcal{V}$ is a linear space of polynomials which is invariant under the actions of $\mathcal{S}_N$ or $\mathcal{H}_N(t)$. We say that the elements of $\mathcal{V}$ are of \emph{isotype} $\tau$ if there is a basis $\{g_\S :\ \S \in\tab_\tau\}$ which transforms under the action formulas for $\tau(s_i)$ or $\tau(T_i)$, respectively. 
\end{defi} 

The following construction shows a canonical space of isotype $\tau$ for the action of $\mathcal{H}_N(t)$ whose elements are of minimal degree. Setting $t=1$ provides the definition for the action of $\mathcal{S}_N$. First, we define the polynomial associated to $\S_0$ as 
\begin{eqnarray}\label{Eqp_S0}
h_{\S_{0}}(x)=\prod_{\substack{1\leq i<j\leq N\\ \textsf{col}_{\mathbb{S}_0}[i] =\textsf{col}_{\mathbb{S}_0}[j] }} \left(tx_{i}-x_{j}\right).
\end{eqnarray}
Now, if $\row{S}{i}< \row{S}{i+1}$ and $\col{S}{i}> \col{S}{i+1}$, then 
\begin{eqnarray}\label{hSgroup}
h_{\S}\left(x\right)T_{i}&=&h_{\S^{(i)}}(x)+\frac{t-1}{1-t^{\ct{S}{i+1}- \ct{S}{i}}}h_{\S}(x), \text{ and } \\
h_{\S}\left(x\right)s_{i}&=&h_{\S^{(i)}}(x)+\frac{1}{\ct{S}{i}- \ct{S}{i+}}h_{\S}(x), \nonumber
\end{eqnarray}
respectively. Since $\inv\left(\S^{(i)}\right)=\inv(\S) +1$, $h_{\S}$ is determined by $\left\{ h_{\S^{\prime}}:\inv\left(S^{^{\prime}}\right)>\inv\left(S\right)\right\}$ and can be computed from $h_{\S_{0}}$ applying successively \eqref{hSgroup}. 

For example, consider $\tau=(2,1)$, for which there are two RSYT of this shape, 
$\S _0=\scalebox{0.7}{
\begin{tikzpicture}
\draw (0,0) rectangle (0.5,0.5);
\draw (0,0.5) rectangle (0.5,1);
\draw (0.5,0) rectangle (1,0.5);
\node at (0.25,0.25) {3};
\node at (0.75,0.25) {1};
\node at (0.25,0.75) {2};
\end{tikzpicture}}$ and $\S _1=\scalebox{0.7}{
\begin{tikzpicture}
\draw (0,0) rectangle (0.5,0.5);
\draw (0,0.5) rectangle (0.5,1);
\draw (0.5,0) rectangle (1,0.5);
\node at (0.25,0.25) {3};
\node at (0.75,0.25) {2};
\node at (0.25,0.75) {1};
\end{tikzpicture}}$ .
By \eqref{Eqp_S0}, $h_{\S _0}(x) = tx_2-x_3$. Moreover, by \eqref{hSgroup},
$h_{\S_0}(x)T_1 = (t-1)tx_2 + t^2x_1 - tx_3$, and so $\displaystyle{h_{\S _1}(x)=t^2x_1 - \frac{t}{t+1}(x_2+x_3)}$.

\begin{remark}
	The dominant monomial in $h_{\S_{0}}$ is the product of the dominant monomials of the Vandermonde determinants associated to the columns. For instance consider
	${}
	\S_0 =\scalebox{0.7}{\begin{tikzpicture}
\draw (0,0) rectangle (0.5,0.5);
\draw (0.5,0) rectangle (1,0.5);
\draw (1,0) rectangle (1.5,0.5);
\draw (0,1) rectangle (0.5,1.5);
\draw (0,0.5) rectangle (0.5,1);
\draw (0.5,0.5) rectangle (1,1);
\node at (0.25,0.25) {6};
\node at (0.75,0.25) {3};
\node at (1.25,0.25) {1};
\node at (0.25,1.25) {4};
\node at (0.25,0.75) {5};
\node at (0.75,0.75) {2};
\end{tikzpicture}}
	$, the dominant monomial in $h_{\S_{0}}$ is $x_{4}^{2}x_{5}\cdot x_{2}\cdot 1$.
	
\end{remark}
\subsection{More operators}\label{OpsSection}
In this subsection we present several fundamental families of operators acting on polynomials. The Jack and Macdonald polynomials are defined through them, but these operators are also an important tool for our study.  

Before introducing the operators, it is important to point out that we work with operators acting on the right. It is less common, but it follows the reading direction. For instance, if we look at the partial derivatives, $\displaystyle{\partial_i= \frac{\partial}{\partial x_i}}$, and the divided differences, $\displaystyle{\partial_{ij}= (1-s_{ij})\frac{1}{x_i-x_j}}$ as right operators, then 
\begin{eqnarray*}
x_i\partial_i = 1 + \partial_i x_i \hspace{0.7cm} \text{ and } \hspace{0.7cm} \partial_{ij} = \partial_{ij} x_i + s_{ij}.
\end{eqnarray*}

We start defining the operators for the symmetric group case. Let $\kappa$ be a parameter. 
The \emph{Dunkl operators} are defined for $1\leq i \leq N$ as:
\begin{eqnarray*}
	\mathcal D_{i}=\partial_i+\kappa\sum_{i\neq j}\partial_{ij}, 
\end{eqnarray*}
	and the \emph{Cherednik operators} are defined as:
\begin{eqnarray*}
\mathcal U_{i}=\mathcal D_{i}x_{i}+1+\kappa\omega_{i}.
\end{eqnarray*}
The operators $\mathcal{D}_i$ and $\mathcal{U}_i$ commute with each other. Moreover, the last ones satisfy the following commutation relations: $s_i\mathcal{U}_is_i= \mathcal{U}_{i+1}+\kappa s_i$, for all $1\leq i<N$, and $s_i\mathcal{U}_j = \mathcal{U}_j s_i$, for $i<j$ or $j+1<i$ (see \cite{Macd1995}).

We also have the operator $\omega$ that acts as the product of all the simple transpositions $\omega=s_1s_2\cdots s_{N-1}$.

For the Hecke algebra case, let $q$ and $t$ be parameters. Then, the \emph{$(q,t)$-Dunkl operators} are defined recursively as
\begin{eqnarray*}
\mathcal D^{q,t}_{N} &=& (1-\xi_{N})x_{N}^{-1}, \\[0.1in]
\mathcal D^{q,t}_{i} &=& \frac1tT_{i}\mathcal D_{i+1}^{q,t}T_{i} \ \ \text{ for } 1\leq i< N,
\end{eqnarray*}
and the \emph{$(q,t)$-Cherednik operators} are defined as
\begin{eqnarray*}
\xi_{i} &=& t^{i-1}T_{i-1}^{-1}T_{i-2}^{-1}\cdots T_{1}^{-1}\omega^{q} T_{N-1}T_{N-2}\cdots T_{i}, \hspace{0.3cm} \text{ where } \\[0.1in]
f\omega^{q} &=& f(qx_{N},x_{1},\dots,x_{N-1}).
\end{eqnarray*}
Since the only repeated notation appears in the Dunkl operators, we include the superscript $\ ^{q,t}$ to avoid confusion with the group case.

Note that the $(q,t)$-Cherednik operators satisfy the recursion: $\displaystyle{\xi_{i}=\frac{1}{t}T_{i}\xi_{i+1}T_{i}}$, and that they commute with each other (see \cite{BF}). 

\subsection{Singularity and the properties SP and SMP}
We introduce two definitions for each of our frameworks, and one of the goals of this paper is to relate them. 
The first concept is a property that characterize some polynomials.
\begin{defi}
A polynomial $p$ with rational coefficients is called \emph{singular} in the framework of the symmetric group if there is a rational number $\kappa_0$ such that, for all $i$, $p\mathcal{D}_i=0$ for $\kappa=\kappa_0$. 
Analogously, a polynomial $p$ is \emph{singular} in the framework of the Hecke algebra if, for all $i$, $p\mathcal{D}_{i}=0$ for a specialization of the value of $(q,t)$ (such as $q=t^{-4}$).
\end{defi}
Let us illustrate this with an example. Consider $p(x)=\prod \limits_{1\leq i<j\leq3}\left(tx_{i}-x_{j}\right)$ and the specialization $q=t^{-2}$. Then, $p(x)$ is singular since a direct computation shows that $pT_{i}=-p$, for $i=1,2$. In fact, $pT_{1}T_{2}=p$ and $p(x)\omega^q=\left(tqx_{3}-x_{1}\right)\left(tqx_{3}-x_{2}\right)\left(tx_{1}-x_{2}\right)=t^{-2}p(x)$. Furtheremore, by specializing $qt=1$, we obtain directly that $p\xi_3 = p$.

The second concept is related to the bases.
\begin{defi}
We say that a basis $\{g_\S :\ \S\in \tab_\tau\}$ satisfies the \emph{property (SP)} if it is a basis for a space of polynomials of isotype $\tau$ for $\mathcal{S}_N$ and each $g_\S $ is singular, all with $\kappa = \kappa_0$. Analogously, we say that a basis $\left\{g_{\S }:\S \in\tab_\tau\right\}$ satisfies the \emph{property (SMP)} if it is a basis for a space of polynomials of isotype $\tau$ for $\mathcal{H}_{N}\left(t\right)$ and each $g_{\S }$ is singular, all with the same specialization of $(q,t)$. Our study in Sections \ref{SectionQSP} and \ref{SectionSP} are focused on specializations of the form $q^mt^n=1$, with $2\leq n \leq N$, and possibly subject to some other conditions, which are the most common specializations. 
\end{defi}

\section{Vector-valued polynomials\label{sec:Vec-Valued}}

We consider the space of vector-valued polynomials $\mathcal{P}_\tau =\mathbb{C}[x_1,x_2,\dots,x_N] \otimes \mathbb{C}\left[\tab_\tau\right]$, for a given partition $\tau \vdash N$. Then, we describe $\mathcal{P}_\tau = \mathcal{P}\otimes V_\tau$ as the span of $x^\alpha \otimes \S $, for $\alpha\in \comps$ and $\S \in \tab_\tau$.

\subsection{Definitions and action of the operators}
Each of the operators defined in Section \ref{sec:representation theory} has its counterpart in the vector-valued polynomial space. The following table summarizes all the operators already defined, both in the symmetric group and Hecke algebra case, together with their vector-valued version. 
\begin{eqnarray*}
\begin{array}{c|c}
\text{Case of polynomials} & \text{Case of vector-valued polynomials} \\[0.1in] \hline \rule{0pt}{4ex} 
	\partial_i = {\partial\over\partial x_{i}}&\displaystyle{\mathbf {d\over d x_{i}}:={\partial_i}\otimes 1}\\[0.12in]
	\partial_{ij}&\mathbf d_{ij}:=\partial_{ij}\otimes\tau(s_{ij})\\[0.1in]
	\sigma&\mathbf t_{\sigma}:=\sigma\otimes \tau(\sigma) \hspace{0.5cm} \text{ for any } \sigma\in \mathcal{S}_N\\[0.1in]
\omega&\mathbf w:=\omega\otimes \tau(s_{1}s_{2}\dots s_{N-1}) \\[0.1in]\mathcal{D}_i & \displaystyle{\mathbf D_{i}:=\mathbf {d\over d x_{i}}+\kappa\sum_{i\neq j}\mathbf d_{ij}} \\[0.13in]
\omega_i & \displaystyle{\mathbf w_{i}:=\sum_{j>i}\mathbf t_{s_{ij}} }\\[0.12in]
\mathcal{U}_i & \mathbf U_{i}:=\mathbf D_{i}\mathbf x_{i}+1\otimes 1+\kappa\mathbf w_{i}\\[0.12in]
T_i & \mathbf T_{i}:=(1-t)\partial_{i}x_{i+1}\otimes 1+s_{i}\otimes \tau(T_{i}) \\[0.12in]
\omega^q & \mathbf w^{q}:=t^{1-N}\omega^{q}\otimes\tau(T_{1}T_{2}\dots T_{N-1}) \\[0.12in]
\xi_i & \mathbf E_{i}:=t^{i-1}\mathbf T_{i-1}^{-1}\mathbf T_{i-2}^{-1}\cdots \mathbf T_{1}^{-1}\mathbf w^{q} \mathbf T_{N-1}\mathbf T_{N-2}\cdots \mathbf T_{i} \\[0.2in]
\mathcal{D}_N^{q,t} & \mathbf D^{q,t}_{N}:=(1-\mathbf E_{N})x_{N}^{-1}\otimes 1 \\[0.1in]
\mathcal{D}_i^{q,t} & \displaystyle{\mathbf D^{q,t}_{i}:=\frac1t\mathbf T_{i}\mathbf D^{q,t}_{i+1}\mathbf T_{i}}
\end{array}
\end{eqnarray*}

Note that we use \textbf{bold font} for the operators that act on vector-valued polynomials. This way, it will be clearer on which space we act and our results will be more nicely presented.

\subsection{Projections}\label{subsect:proj}
We now set up a projection map from vector-valued polynomials to scalar ones for each framework that also behave well with some of the operators. 
\begin{prop}\label{RhoJack}
Let $\{g_\S :\ \S \in\tab_\tau\}$ be a basis for a space of polynomials of isotype $\tau$ for $\mathcal{S}_N$. The linear map
\begin{eqnarray*}
\begin{array}{rccc}
\rho: & \mathcal{P}_\tau & \longrightarrow & \mathcal{P} \\
& \displaystyle{\sum_{\S \in\tab_\tau}f_\S (x) \otimes \S}& \longmapsto & \displaystyle{\sum_{\S \in\tab_\tau} f_\S (x)g_\S (x)}
\end{array}
\end{eqnarray*}
intertwines $\mathbf{t}_\sigma$ with $\sigma$, for all $\sigma\in\mathcal{S}_N$; that is, $\mathbf{t}_\sigma \rho = \rho \sigma$.
\end{prop}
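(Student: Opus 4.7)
The plan is to verify the identity on elementary tensors and then invoke linearity. By linearity of $\mathbf{t}_\sigma$, $\rho$, and right multiplication by $\sigma$, it suffices to check $\rho(F\mathbf{t}_\sigma)=\rho(F)\sigma$ for $F=f(x)\otimes\S$, with $f\in\mathcal{P}$ and $\S\in\tab_\tau$. A preliminary observation is that the hypothesis that $\{g_\S:\S\in\tab_\tau\}$ is of isotype $\tau$ gives the transformation rule for the simple reflections, and since $s_1,\dots,s_{N-1}$ generate $\mathcal{S}_N$, this extends uniquely to a right action agreeing with $\tau$: for every $\sigma\in\mathcal{S}_N$ we have
\begin{eqnarray*}
g_\S\,\sigma \;=\; \sum_{\S'\in\tab_\tau}[\tau(\sigma)]_{\S,\S'}\,g_{\S'},
\end{eqnarray*}
where $[\tau(\sigma)]_{\S,\S'}$ is defined by $\S\tau(\sigma)=\sum_{\S'}[\tau(\sigma)]_{\S,\S'}\S'$.

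With this in hand the verification is a direct bilinear calculation. Using $\mathbf{t}_\sigma=\sigma\otimes\tau(\sigma)$ as a right operator on $\mathcal{P}\otimes V_\tau$ and then applying $\rho$, I would compute
\begin{eqnarray*}
\rho\bigl((f(x)\otimes\S)\mathbf{t}_\sigma\bigr)
\;=\;\rho\Bigl((f(x)\sigma)\otimes\sum_{\S'}[\tau(\sigma)]_{\S,\S'}\,\S'\Bigr)
\;=\;\sum_{\S'}[\tau(\sigma)]_{\S,\S'}\,(f(x)\sigma)\,g_{\S'}(x).
\end{eqnarray*}
On the other hand, using that $\sigma$ acts on $\mathcal{P}$ as an algebra automorphism (permutation of the variables), so that $(fg_\S)\sigma=(f\sigma)(g_\S\sigma)$, the transformation rule above yields
\begin{eqnarray*}
\rho(f(x)\otimes\S)\,\sigma
\;=\;(f(x)g_\S(x))\sigma
\;=\;(f(x)\sigma)\,(g_\S(x)\sigma)
\;=\;\sum_{\S'}[\tau(\sigma)]_{\S,\S'}\,(f(x)\sigma)\,g_{\S'}(x),
\end{eqnarray*}
which matches the previous expression.

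There is really no serious obstacle here; the proof is essentially a bookkeeping check. The one point deserving care is keeping the right-action convention consistent throughout: the tensor-product operator $\sigma\otimes\tau(\sigma)$ acts on a simple tensor factor-by-factor on the right, and the identity $(fg)\sigma=(f\sigma)(g\sigma)$ is what forces the two sides to coincide. Should one prefer, an entirely equivalent approach is to check the identity only for the generators $\sigma=s_i$ (where the action formulas for $\tau(s_i)$ are given explicitly in Section~\ref{sec:representation theory}) and then extend to all of $\mathcal{S}_N$ by noting that $\sigma\mapsto\mathbf{t}_\sigma$ and $\sigma\mapsto\sigma$ are both group homomorphisms, so the set of $\sigma$ satisfying $\mathbf{t}_\sigma\rho=\rho\sigma$ is a subgroup.
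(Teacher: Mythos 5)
Your proposal is correct and follows essentially the same route as the paper: reduce to elementary tensors, use that the $g_\S$ transform under the variable-permutation action according to $\tau$, and exploit multiplicativity $(fg)\sigma=(f\sigma)(g\sigma)$ so that both sides of $\mathbf{t}_\sigma\rho=\rho\sigma$ agree. The only cosmetic difference is that the paper verifies the identity on the generators $s_i$ via the explicit action formulas, whereas you phrase the same computation for general $\sigma$ through the matrix coefficients $[\tau(\sigma)]_{\S,\S'}$ (and note the generator reduction as an alternative), which is exactly how the paper itself argues later in Proposition \ref{SPcomm}.
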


\begin{proof}
It follows immediately from the definition of $\{g_\S :\ \S \in\tab_\tau\}$. For instance, taking $\sigma=s_i$ and applying the action formulas for $\tau(s_i)$, we have that 
\begin{multline*}
\left(f_\S\otimes \S\right) \mathbf{t}_{s_i} \rho = \left(f_{\S }\left(xs_{i}\right)\otimes\left( \S \tau\left(s_{i}\right)\right)\right)\rho= f_\S (xs_i)\left( g_\S (x) + \frac{1}{b}g_{\S ^{(i)}}(x)\right) =f_{\S }\left(xs_{i}\right)g_{\S }\left(xs_{i}\right) = \\
= \left(f_\S(x)g_\S(x)\right) s_i = \left(f_\S (x)\otimes \S\right) \rho s_i.
\end{multline*}
\end{proof}
 
As before, the basis $\left\{g_{\S }:\S \in\tab_\tau\right\}$ provide us a linear map that, roughly speaking, commutes with $T_i$ and we have the Hecke algebra version of the previous result.
\begin{prop}\label{Hcomm}
Let $\left\{g_{\S }:\S \in\tab_\tau\right\}$ be a basis for a space of polynomials of isotype $\tau$ for
$\mathcal{H}_{N}(t)$. Then, the linear map 
\begin{eqnarray*}
\begin{array}{rccc}
\rho: & \mathcal{P}_\tau & \longrightarrow & \mathcal{P} \\
& \displaystyle{\sum_{\S \in\tab_\tau}f_\S (x) \otimes \S}& \longmapsto & \displaystyle{\sum_{\S \in\tab_\tau} f_\S (x)g_\S (x)}
\end{array}
\end{eqnarray*}
intertwines $\mathbf T_i$ with $T_i$, for $1\leq i<N$; that is, $\mathbf{T}_i \rho = \rho T_i$. 
\end{prop}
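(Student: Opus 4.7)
The strategy is to verify the intertwining identity $\mathbf{T}_i\rho=\rho T_i$ on a simple tensor $f\otimes\S$ with $f\in\mathcal{P}$ and $\S\in\tab_\tau$, and then extend by linearity. Two ingredients drive the argument. First, because $\{g_\S:\S\in\tab_\tau\}$ transforms under the action formulas for $\tau(T_i)$, the linearity of $\rho$ forces $\rho(1\otimes\S\tau(T_i))=g_\S T_i$: applying $\tau(T_i)$ to $\S$ in $V_\tau$ produces a linear combination of tableaux whose $\rho$-image is exactly the corresponding linear combination of $g_{\S'}$, which equals $g_\S T_i$ by the defining property of the basis. Second, the scalar operator $T_i$ satisfies a \emph{twisted Leibniz rule},
\begin{equation*}
(pq)T_i \;=\; \bigl(pT_i - t\,(ps_i)\bigr)\,q \;+\; (ps_i)\,(qT_i),
\end{equation*}
for all $p,q\in\mathcal{P}$, which reflects the fact that $T_i-t\,s_i$ acts as multiplication by $(1-t)x_{i+1}$ composed with the divided difference $(1-s_i)/(x_i-x_{i+1})$.

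Granting these, the computation is short. Unpacking the definition of $\mathbf{T}_i$ on a simple tensor,
\begin{equation*}
(f\otimes\S)\mathbf{T}_i \;=\; \bigl(fT_i - t\,(fs_i)\bigr)\otimes\S \;+\; (fs_i)\otimes\S\tau(T_i),
\end{equation*}
since the summand ``$(1-t)\partial_i x_{i+1}\otimes 1$'' of $\mathbf{T}_i$, evaluated on the polynomial factor, is $(1-t)x_{i+1}(f-fs_i)/(x_i-x_{i+1})=fT_i-t(fs_i)$, merely a rearrangement of the defining formula for $T_i$. Applying $\rho$ and using the first ingredient yields
\begin{equation*}
\rho\bigl((f\otimes\S)\mathbf{T}_i\bigr) \;=\; \bigl(fT_i-t(fs_i)\bigr)\,g_\S \;+\; (fs_i)\,(g_\S T_i),
\end{equation*}
which by the twisted Leibniz rule applied to $p=f$ and $q=g_\S$ equals $(fg_\S)\,T_i=\rho(f\otimes\S)\,T_i$.

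The only real work is verifying the twisted Leibniz rule, and that is where I expect any sign or $t$-factor to slip. I would derive it from the decomposition
\begin{equation*}
pq-(ps_i)(qs_i) \;=\; (p-ps_i)\,q + (ps_i)\,(q-qs_i),
\end{equation*}
substituted into the defining formula for $(pq)T_i$ and regrouped using the rearrangement above applied separately to $p$ and to $q$; the $t(ps_i)(qs_i)$ term cancels neatly against a $-t(ps_i)(qs_i)$ produced by the second summand. The identity is purely scalar and does not interact with the representation $\tau$, so the algebra can be concentrated in one clean lemma. A useful sanity check is that sending $t\to 1$ degenerates the twisted Leibniz rule to the ordinary Leibniz rule for $s_i$, the operator $\mathbf{T}_i$ to $\mathbf{t}_{s_i}$, and the whole argument to the proof of Proposition~\ref{RhoJack}.
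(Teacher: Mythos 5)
Your proof is correct and takes essentially the same route as the paper: the paper likewise expands $(f_\S\otimes\S)\mathbf{T}_i\rho$ into the divided-difference term plus $f_\S(xs_i)\otimes\S\tau(T_i)$, uses the isotype property to replace the latter under $\rho$ by $f_\S(xs_i)\,(g_\S T_i)$, and then recognizes the total as $(f_\S g_\S)T_i$ via the decomposition $fg-(fs_i)(gs_i)=(f-fs_i)g+(fs_i)(g-gs_i)$, which is exactly your twisted Leibniz rule. Isolating that identity as a named lemma is only a cosmetic repackaging of the paper's computation.
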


\begin{proof}
This result follows from the study of $\left[(f_\S(xs_i)\otimes \S\tau(T_i)\right] \rho$ according to the action of $\tau(T_i)$ and the fact that
\begin{multline*}
\left( f_\S (x)\otimes \S\right) \mathbf{T}_i \rho = \left[(1-t)x_{i+1}\left(f_\S\partial_i\right)\otimes \S + f_\S (xs_i) \otimes \S\tau(T_i) \right] \rho = \\ 
= (1-t)x_{i+1}\left(f_\S\partial_i\right)g_\S(x) + f_\S (xs_i)\left(g_\S T_i\right) = \left[(1-t)x_{i+1}\frac{f_{\S }(x)-f_{\S }\left(xs_{i}\right)}{x_{i}-x_{i+1}}\right]g_{\S }(x) +\\
+f_{\S }\left(xs_{i}\right)\left[(1-t)x_{i+1}\frac{g_{\S }(x)-g_{\S }\left(xs_{i}\right)}{x_{i}-x_{i+1}}+tg\left(xs_{i}\right)\right] = \left(f_{\S }g_{\S }\right)T_{i} = \left(f_\S(x)\otimes \S\right) \rho T_i.
\end{multline*}
\end{proof}

\subsection{Characterization of the symmetric elements}
In this section, we characterize the vector-valued polynomials of $\mathcal{P}_\tau$ that are symmetric under the two actions, for a fixed partition $\tau$ of $N$.

We start with the action of the symmetric group. 
\begin{thm}\label{SGsymm}
Consider $p=\displaystyle{\sum_{\S \in\tab_\tau} p_\S\otimes \S }$, with $p_\S \in \mathcal{P}$ for all $\S $. Then, $ps_i=p$, for all $1\leq i<N$, if and only if, for each $\S \in\tab_\tau$, $p_\S= \displaystyle{\frac{1}{\left\langle \S ,\S \right\rangle_{1}}g_\S }$, where $\{g_\S:\ \S \in\tab_\tau \}$ is a basis for the polynomials of isotype $\tau$.
\end{thm}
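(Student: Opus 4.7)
The plan is to translate the symmetry condition $p\mathbf t_{s_i}=p$ into a system of scalar equations on the components $p_\S$, using the four-case action formulas for $\tau(s_i)$, and then to match its solutions with the rescaled basis $\{g_\S/\gamma(\S;1)\}_\S$ via an identity relating $\gamma(\S;1)$ to $\gamma(\S^{(i)};1)$ for adjacent tableaux.

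First I would expand $(p_\S\otimes\S)\mathbf t_{s_i}=p_\S(xs_i)\otimes\S\tau(s_i)$ and collect coefficients of each basis vector $\S\in\tab_\tau$ on both sides of $p\mathbf t_{s_i}=p$. Cases (I) and (II) of the action formulas give at once that $p_\S$ must be symmetric, respectively antisymmetric, in $x_i$ and $x_{i+1}$. In Cases (III)--(IV), where $\S$ and $\S^{(i)}$ are distinct, comparing coefficients of $\S$ and $\S^{(i)}$ produces a $2\times 2$ coupled linear system relating $p_\S(xs_i),p_{\S^{(i)}}(xs_i)$ to $p_\S(x),p_{\S^{(i)}}(x)$, with entries depending on $b=\ct{S}{i}-\ct{S}{i+1}$.

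The key algebraic input is the identity
\[
\gamma(\S^{(i)};1)=\left(1-\tfrac{1}{b^2}\right)\gamma(\S;1),
\]
valid when $\S$ satisfies Case (III). I would prove this by inspecting the product formula defining $\gamma(\,\cdot\,;1)$: swapping the values $i$ and $i+1$ in $\S$ induces the bijections $(j,i)\leftrightarrow(j,i+1)$ for $j<i$ and $(i,k)\leftrightarrow(i+1,k)$ for $k>i+1$ on the indexing pairs, preserving the multiset of factors that appear; only the pair $(i,i+1)$ changes its status, contributing the factor $1-1/b^2$ to $\gamma(\S^{(i)};1)$ but not to $\gamma(\S;1)$ (since in Case (III) we have $\ct{S}{i}>\ct{S}{i+1}$, reversing the inequality required by the product).

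With this identity in place, the ``if'' direction becomes direct substitution: plugging $p_\S=g_\S/\gamma(\S;1)$ into the coupled system and using the isotype-$\tau$ transformations $g_\S(xs_i)=g_{\S^{(i)}}+\tfrac{1}{b}g_\S$ and $g_{\S^{(i)}}(xs_i)=(1-\tfrac{1}{b^2})g_\S-\tfrac{1}{b}g_{\S^{(i)}}$ (which are the Cases (III)--(IV) action formulas applied to the basis $\{g_\S\}$), both equations reduce to identities once the $\gamma$-ratio is substituted. For the ``only if'' direction I would invert the construction: given symmetric $p$, set $g_\S:=\gamma(\S;1)\,p_\S$, and observe that the same coupled system, combined with the $\gamma$-identity, forces $\{g_\S\}$ to satisfy exactly the isotype-$\tau$ transformation laws under $g(x)\mapsto g(xs_i)$. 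Hence $\{g_\S\}$ is a basis of a space of polynomials of isotype $\tau$ and $p_\S=g_\S/\gamma(\S;1)$, as claimed. The main obstacle is the $\gamma$-identity above; once it is in hand, the remainder is essentially bookkeeping.
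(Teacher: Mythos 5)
Your proposal is correct and follows essentially the same route as the paper's proof: case analysis on the action formulas for $\tau(s_i)$, a $2\times 2$ coupled system for the adjacent pair $\left(\S,\S^{(i)}\right)$, and the norm-ratio identity $\left\langle \S^{(i)},\S^{(i)}\right\rangle_1/\left\langle \S,\S\right\rangle_1=1-\tfrac{1}{b^2}$, with the converse obtained by running the same computation in reverse. The only difference is that you verify the $\gamma$-ratio identity explicitly by a bijection on the indexing pairs, whereas the paper simply cites it as immediate from the definition of $\langle\cdot,\cdot\rangle_1$.
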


\begin{proof}
Assume that $ps_i=p$, for all $1\leq i < N$. This equation can be written, for a fixed $i$, as 
\begin{eqnarray}\label{EqChar1}
\sum_{\S \in \tab_\tau} p_\S (xs_i) \otimes \left(\S \tau(s_i)\right)= \sum_{\S \in \tab_\tau} p_\S (x)\otimes \S .
\end{eqnarray}
As usual in these arguments, we work with invariance for each simple reflection $s_i$, which suffices to prove the symmetry. For that, we analyze the action of $s_i$ according to the action formulas for $\tau(s_i)$ over $p_\S $.
\begin{enumerate}[(I)]
\item For $\row{S}{i}=\row{S}{i+1}$, $p_\S (xs_i) \otimes \left(\S \tau(s_i)\right)=p_\S (xs_i) \otimes \S =p_\S (x)\otimes \S $, thus $p_\S (xs_i)=p_\S(x)$.

\item For $\col{S}{i}=\col{S}{i+1}$, $p_\S (xs_i) \otimes \left(\S \tau(s_i)\right)=-p_\S (xs_i) \otimes \S =-p_\S (x)\otimes \S $, thus $p_\S (xs_i)=-p_\S(x)$.

\item For $\row{S}{i}<\row{S}{i+1}$, $\S ^{(i)}\in\tab_\tau$ and the equation \eqref{EqChar1} requires that 
$$ p_\S (xs_i)\otimes \left(\S \tau(s_i)\right)+ p_{\S ^{(i)}}(xs_i)\otimes \left(\S ^{(i)}\tau(s_i)\right) = p_\S (x)\otimes \S+ p_{\S ^{(i)}}(x)\otimes \S ^{(i)}.$$
Using the action, we have that
\begin{multline*}
p_\S (xs_i)\otimes \left(\S \tau(s_i)\right)+ p_{\S ^{(i)}}(xs_i)\otimes \left(\S ^{(i)}\tau(s_i)\right) = \\ 
= p_\S (xs_i) \otimes \left\{ \S ^{(i)} + \frac{1}{b}\S \right\} + p_{\S ^{(i)}}(xs_i)\otimes \left\{ \left(1-\frac{1}{b^2}\right)\S- \frac{1}{b}\S ^{(i)}\right\} = \\
= \left\{ \frac{1}{b}p_\S (xs_i) + \left(1-\frac{1}{b^2}\right)p_{\S ^{(i)}}(xs_i)\right\} \otimes \S+ \left\{ p_\S (xs_i) -\frac{1}{b}p_{\S ^{(i)}}(xs_i)\right\} \otimes \S ^{(i)}.
\end{multline*}
Matching up the coefficients and rewriting the equations, we obtain that
\begin{eqnarray*}
p_\S (xs_i) &=& \frac{1}{b}p_{\S ^{(i)}}(xs_i) + p_{\S ^{(i)}}(x),\\
p_{\S ^{(i)}} (xs_i) &=& \frac{b^2}{b^2-1} \left( p_\S (x) - \frac{1}{b}p_\S (xs_i)\right),
\end{eqnarray*}
which imply that
\begin{eqnarray*}
p_\S (xs_i) &=& \frac{1}{b}p_\S (x) + \left(1-\frac{1}{b^2}\right) p_{\S ^{(i)}}(x), \\
p_{\S ^{(i)}} (xs_i) &=& p_\S (x) - \frac{1}{b} p_{\S ^{(i)}}(x).
\end{eqnarray*}
Substitute $p_\S =\displaystyle{\frac{1}{\left\langle \S ,\S \right\rangle_1}g_\S }$ and $p_{\S ^{(i)}}=\displaystyle{\frac{1}{\left\langle {\S ^{(i)}},{\S ^{(i)}}\right\rangle_1}g_{\S ^{(i)}}}$, then
\begin{eqnarray*}
g_\S (xs_i)&=& \frac{1}{b}g_\S (x)+\left(1-\frac{1}{b^2}\right) \frac{\left\langle \S ,\S \right\rangle_1}{\left\langle \S ^{(i)},\S ^{(i)}\right\rangle_1} g_{\S ^{(i)}}(x), \\
g_{\S ^{(i)}}(xs_i) &=& \frac{\left\langle \S ^{(i)},\S ^{(i)}\right\rangle_1}{\left\langle \S ,\S \right\rangle_1}g_\S (x) - \frac{1}{b} g_{\S ^{(i)}}(x).
\end{eqnarray*}
By definition, $\displaystyle{\frac{\left\langle \S ^{(i)},\S ^{(i)}\right\rangle_1}{\left\langle \S ,\S \right\rangle_1}=1-\frac{1}{b^2}}$, which shows that $g_\S $ and $g_{\S ^{(i)}}$ transform according to the relations.

\item This last case follows from the previous one by interchanging $\S $ and $\S ^{(i)}$.
\end{enumerate}
This finishes one implication. 
To prove the converse, let $\{g_\S:\ \S \in\tab_\tau\}$ be a basis for the polynomials of isotype $\tau$ such that $p=\displaystyle{\sum_{\S \in\tab_\tau} p_\S\otimes \S }$, where $p_\S= \displaystyle{\frac{1}{\left\langle \S ,\S \right\rangle_{1}}g_\S }$, for each $\S\in\tab_\tau$. Then, the previous study by cases proves that $ps_i=p$, for all $1\leq i<N$, if it is read \emph{backwards}.
\end{proof}

Now, it is the turn for the Hecke algebra. 
\begin{thm}\label{Hsymm}
Consider $p=\displaystyle{\sum_{\S \in\tab_\tau} p_\S\otimes \S }$, with $p_\S \in \mathcal{P}$ for all $\S$. Then, $pT_i=tp$, 
for $1\leq i<N$, if and only if, for each $\S \in\tab_\tau$, $\displaystyle{p_\S = \frac{1}{\langle \S ,\S \rangle_{t}} g_\S }$, where $\{g_\S: \S\in \tab_\tau\}$ is a basis for polynomials of isotype $\tau$ for $\mathcal{H}_N(t^{-1})$. 
\end{thm}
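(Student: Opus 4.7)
The plan is to mirror the proof of Theorem~\ref{SGsymm}, replacing the action of $s_i$ with that of $\mathbf{T}_i$ and the inner product $\langle\cdot,\cdot\rangle_1$ with $\langle\cdot,\cdot\rangle_t$. Starting from $p\mathbf{T}_i=tp$, one expands
\[
(p_\S\otimes\S)\mathbf{T}_i = (1-t)x_{i+1}\frac{p_\S(x)-p_\S(xs_i)}{x_i-x_{i+1}}\otimes\S + p_\S(xs_i)\otimes\S\tau(T_i)
\]
and matches the coefficient of each basis tableau on both sides. The resulting identities split into four subcases governed by the action formulas (I)--(IV) for $\tau(T_i)$.

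In Cases~(I) and~(II) only the term $p_\S\otimes\S$ is involved. Case~(I), $\row{S}{i}=\row{S}{i+1}$, reduces to $p_\S T_i=tp_\S$, equivalently $p_\S(xs_i)=p_\S$. Case~(II), $\col{S}{i}=\col{S}{i+1}$, yields $(1-t)x_{i+1}\frac{p_\S-p_\S(xs_i)}{x_i-x_{i+1}} - p_\S(xs_i)=tp_\S$, which after multiplication by $-t^{-1}$ becomes the eigenvalue relation $(1-t^{-1})x_{i+1}\frac{g_\S-g_\S(xs_i)}{x_i-x_{i+1}}+t^{-1}g_\S(xs_i)=-g_\S$ for $g_\S=\langle\S,\S\rangle_t\,p_\S$ under the polynomial Hecke operator at parameter $t^{-1}$. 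In both cases this is exactly the isotype-$\tau$ action formula for $\mathcal{H}_N(t^{-1})$ applied to $g_\S$.

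Cases~(III) and~(IV) couple $p_\S$ with $p_{\S^{(i)}}$ through a $2\times 2$ linear system whose coefficients are the Hecke entries $\tfrac{t-1}{1-t^{-b}}$, $\tfrac{t(t^{b+1}-1)(t^{b-1}-1)}{(t^b-1)^2}$, and $\tfrac{t^b(t-1)}{t^b-1}$, where $b=\ct{S}{i}-\ct{S}{i+1}\geq 2$. Substituting $p_\S=g_\S/\langle\S,\S\rangle_t$ and $p_{\S^{(i)}}=g_{\S^{(i)}}/\langle\S^{(i)},\S^{(i)}\rangle_t$ and invoking the ratio
\[
\frac{\langle\S^{(i)},\S^{(i)}\rangle_t}{\langle\S,\S\rangle_t} = \frac{(1-t^{b-1})(1-t^{b+1})}{(1-t^b)^2},
\]
which drops out of the definition of $\gamma(\S;t)$ because the contributions from every pair other than $(i,i+1)$ itself depend only on the multiset $\{\ct{S}{i},\ct{S}{i+1}\}$, the system rewrites as precisely the isotype transformation rule for $(g_\S,g_{\S^{(i)}})$ in cases~(III)--(IV) with $t\mapsto t^{-1}$.

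The main obstacle is the bookkeeping of $q$-rational functions in Cases~(III)--(IV): one must check that the product of the Hecke coefficient $\tfrac{t(t^{b+1}-1)(t^{b-1}-1)}{(t^b-1)^2}$ with $\langle\S,\S\rangle_t/\langle\S^{(i)},\S^{(i)}\rangle_t$ collapses to $t$, together with the companion simplifications $t-\tfrac{(t-1)t^b}{t^b-1}=\tfrac{t(t^{b-1}-1)}{t^b-1}$ and $t+\tfrac{t-1}{t^b-1}=\tfrac{t^{b+1}-1}{t^b-1}$ arising when rewriting the coupled system, and then confirm that each coefficient matches exactly the $t\mapsto t^{-1}$ substitution of the isotype formulas. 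The converse implication is obtained by reading the case analysis backwards, as in Theorem~\ref{SGsymm}.
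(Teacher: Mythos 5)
Your proposal is correct and follows essentially the same route as the paper's own proof: expand the action of $\mathbf{T}_i$, split into the four cases of the action formulas for $\tau(T_i)$, solve the coupled $2\times 2$ system in cases (III)--(IV), normalize by $\langle \S,\S\rangle_t=\gamma(\S;t)$ using the ratio $\gamma(\S^{(i)};t)/\gamma(\S;t)=\frac{(1-t^{b-1})(1-t^{b+1})}{(1-t^b)^2}$ (which is invariant under $t\mapsto t^{-1}$), and read the argument backwards for the converse. The specific simplifications you flag ($t-\tfrac{(t-1)t^b}{t^b-1}=\tfrac{t(t^{b-1}-1)}{t^b-1}$, $t+\tfrac{t-1}{t^b-1}=\tfrac{t^{b+1}-1}{t^b-1}$, and the product with the norm ratio collapsing to $t$) are exactly the ones that make the coefficients match the isotype formulas at parameter $t^{-1}$, so the plan is sound.
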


\begin{proof}
Assume that $pT_i=tp$, for a fixed $i$. 
We analyze the action of $\mathbf T_{i}$ according to the action formulas for $\tau(T_i)$ over $p_\S $.
\begin{enumerate}[(I)]
\item For $\row{S}{i}=\row{S}{i+1}$, we have
that $p_\S (xs_i)=p_\S (x)$ since
\begin{multline*}
\left(p_\S (x)\otimes \S \right)\mathbf T_{i}= (1-t)x_{i+1}\frac{p_\S (x) -p_\S (xs_i)}{x_i-x_{i+1}} \otimes \S +p_\S (xs_i)\otimes \S\tau(T_i)= \\
= (1-t)x_{i+1}\frac{p_\S (x) -p_\S (xs_i)}{x_i-x_{i+1}} \otimes \S +tp_\S (xs_i)\otimes \S =tp_\S (x)\otimes \S .
\end{multline*}
\item For $\col{S}{i}=\col{S}{i+1}$, we have that 
\begin{multline*}
(p_\S (x)\otimes \S )\mathbf T_{i}=(1-t)x_{i+1}\frac{p_\S (x) -p_\S (xs_i)}{x_i-x_{i+1}}\otimes \S +p_\S (xs_i)\otimes \S \tau(T_i) =\\
=(1-t)x_{i+1}\frac{p_\S (x) -p_\S (xs_i)}{x_i-x_{i+1}}\otimes \S -p_\S (xs_i)\otimes \S= tp_\S (x)\otimes \S ,
\end{multline*}
which is equivalent to 
$$\left(\frac{1}{t}-1\right)x_{i+1}\frac{p_\S (x)-p_\S (xs_i)}{x_i-x_{i+1}} -\frac{1}{t}p_\S (xs_i) = p_\S (x).$$
That is $p_\S T_i\left(t^{-1}\right)=-p_\S $.

\item For $\row{S}{i}<\row{S}{i+1}$, the idea is to introduce $p\partial_i=\displaystyle{\frac{p(x)-p(xs_i)}{x_i-x_{i+1}}}$ and solve $ps_i = p-(x_i-x_{i+1})(p\partial_i)$. For doing this, we abbreviate $f=p_{\S }$ and $f^{\prime}=p_{\S ^{(i)}}$ and we require that $t\left(f\otimes \S +f^{\prime}\otimes \S ^{(i)}\right) = \left(f\otimes \S +f^{\prime}\otimes \S ^{(i)}\right) \mathbf T_i$. For the right side, we have that
\begin{multline*}
\left(f\otimes \S +f^{\prime}\otimes \S ^{(i)}\right) \mathbf T_i
=\left[(1-t)x_{i+1}\left(f\partial_{i}\right)\right]\otimes \S +\left[(1-t)x_{i+1}\left(f^{\prime}\partial_{i}\right) \right]\otimes \S ^{(i)}\\
 +fs_{i}\otimes\left(\S ^{(i)}+a_{11}\S \right)+f^{\prime }s_{i}\otimes\left(a_{21}\S +a_{22}\S ^{\left(i\right)}\right),
\end{multline*}
where the coefficients $\left\{a_{11},a_{21},a_{22}\right\}$ are given by the action formulas for $\tau(T_i)$. Replace $fs_{i}$ and $f^{\prime}s_{i}$ by the above expressions in $f\partial_{i}$ and $f^{\prime}\partial_{i}$. Matching up the coefficients of $\S $ and $\S ^{\left(i\right)}$ gives two equations which are solved for $f\partial_{i}$ and $f^{\prime}\partial_{i}$:%
\begin{align*}
f\partial_{i} &=\frac{t^{b+1}-1}{\left(t^{b}-1\right)\left(x_{i+1}t-x_{i}\right)}f+\frac{\left(t^{b+1}-1\right)\left(t^{b}-t\right)}{\left(t^{b}-1\right)^{2}\left(x_{i+1}t-x_{i}\right)}f^{\prime},\\
f^{\prime}\partial_{i}&=\frac{1}{x_{i+1}t-x_{i}}f-\frac{t^{b}-1}{\left(x_{i+1}t-x_{i}\right)\left(t^{b}-1\right)}f^{\prime}.
\end{align*}
Substitute these expressions in $fT_{i}\left(t^{-1}\right)$ and $f^{\prime}T_{i}\left(t^{-1}\right)$ to obtain
\begin{align*}
fT_{i}\left(t^{-1}\right)&=-\frac{\left(t-1\right)t^{b}}{\left(t^{b}-1\right)t}f+\frac{\left(t^{b+1}-1\right)\left(t^{b}-t\right)}{t\left(t^{b}-1\right)^{2}}f^{\prime},\\
f^{\prime}T_{i}\left(t^{-1}\right)&=\frac{1}{t}f+\frac{t-1}{\left(t^{b}-1\right)t} f^{\prime}.
\end{align*}
Now, suppose that $f=\c{S}g_{\S }$ and $f^\prime=\textsf{CT}_{\mathbb{S}^{(i)}} g_{\S ^{(i)}}$, then $\left\{ g_{\S }\right\}$ satisfies the hypotheses for isotype $\tau$ in $\mathcal{H}_{N}\left(t^{-1}\right)$ provided $\displaystyle{\frac{\textsf{CT}_{\mathbb{S}^{(i)}}}{\c{S}}=\frac{\gamma\left(\S ;t\right)}{\gamma\left(\S ^{(i)};t\right)}}$.
\item This case follows from the previous one. 
\end{enumerate}
Note that this ratio is invariant under $t\rightarrow t^{-1}$, and one implication is proved. 

For converse, let $\{g_\S: \S\in \tab_\tau\}$ be a basis for the polynomials of isotype $\tau$ such that $p=\displaystyle{\sum_{\S \in\tab_\tau} p_\S\otimes \S }$, with $\displaystyle{p_\S = \frac{1}{\langle \S ,\S \rangle_{t}} g_\S }$, for each $\S\in\tab_\tau$. Then, the previous study by cases proves that $pT_i=tp$, for all $1\leq i<N$, if it is read \emph{backwards}. 
\end{proof}

\subsection{Singularity and projections}
We now explain the relation between the projections and the singular polynomials. For that, we first use the Jucys-Murphy elements to characterize the singular polynomials and then, we describe the relation in each framework. We start with the symmetric group case.
\begin{lem}\label{LemmaJM}
A polynomial $g$ is singular for $\kappa=\kappa_{0}$ if and only if
$g\mathcal{U}_{i}=g+\kappa_{0}g\omega_{i}$, where $\omega_i$ are the Jucys-Murphy elements for $\mathcal{S}_N$.
\end{lem}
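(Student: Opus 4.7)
The proof is essentially a direct manipulation using the definition of the Cherednik operator. I will use the right-action convention throughout, so that $g\mathcal{U}_i = (g\mathcal{D}_i)x_i + g + \kappa\, g\omega_i$ by the definition $\mathcal{U}_i = \mathcal{D}_i x_i + 1 + \kappa\omega_i$ given in Subsection \ref{OpsSection}.

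For the forward direction, I would assume that $g$ is singular at $\kappa=\kappa_0$, i.e.\ $g\mathcal{D}_i = 0$ for every $i$, after specializing $\kappa=\kappa_0$. Plugging this into the expansion of $\mathcal{U}_i$ immediately gives
\begin{eqnarray*}
g\mathcal{U}_i = (g\mathcal{D}_i)x_i + g + \kappa_0\, g\omega_i = 0\cdot x_i + g + \kappa_0\, g\omega_i = g + \kappa_0\, g\omega_i,
\end{eqnarray*}
which is the desired identity. This direction requires nothing beyond unpacking the definition.

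For the converse, assume that at $\kappa=\kappa_0$ the equality $g\mathcal{U}_i = g + \kappa_0\, g\omega_i$ holds for every $i$. Subtracting the right-hand side from the expansion $g\mathcal{U}_i = (g\mathcal{D}_i)x_i + g + \kappa_0\, g\omega_i$ yields $(g\mathcal{D}_i)x_i = 0$. The map $h \mapsto h\cdot x_i$ is injective on $\mathcal{P}$ (multiplication by a nonzero variable in a polynomial ring over an integral domain has no nonzero kernel), so $g\mathcal{D}_i = 0$ for each $i$, i.e.\ $g$ is singular at $\kappa=\kappa_0$.

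The argument carries no real obstacle; the only point to be careful about is the right-action convention, which forces us to read $\mathcal{D}_i x_i$ as ``first apply $\mathcal{D}_i$, then multiply by $x_i$,'' and to invoke injectivity of multiplication by $x_i$ on scalar polynomials at the end of the converse.
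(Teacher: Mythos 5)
Your argument is correct, and under the paper's stated convention it is essentially the definition-unwinding the paper has in mind; the only difference is the route through the definition of $\mathcal{U}_i$. You read $\mathcal{D}_i x_i$ literally in the left-to-right (right-operator) sense, ``apply $\mathcal{D}_i$, then multiply by $x_i$,'' which is indeed the paper's convention (it is forced by the sample identity $x_i\partial_i = 1+\partial_i x_i$ in Subsection \ref{OpsSection} and confirmed by the spectral values $\varsigma_\alpha(i)=\alpha_i+1+\kappa(N-r(\alpha,i))$, e.g.\ on constants), and with that reading both directions collapse to $(g\mathcal{D}_i)x_i=0 \Leftrightarrow g\mathcal{D}_i=0$, the converse using injectivity of multiplication by $x_i$ exactly as you say. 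The paper instead proves the lemma via the Leibniz-type expansion $(x_i g)\mathcal{D}_i = g + x_i(g\mathcal{D}_i) + \kappa\sum_{j\neq i} g(xs_{ij})$, i.e.\ it effectively works with the equivalent classical presentation $\mathcal{U}_i = x_i\mathcal{D}_i - \kappa\sum_{j<i}s_{ij}$ (multiply first, then apply the Dunkl operator), and that product formula is precisely what shows the two presentations agree. So your proof is valid as written, but be aware that its apparent triviality rests on taking the displayed formula for $\mathcal{U}_i$ at face value; if one starts from the $x_i\mathcal{D}_i - \kappa\sum_{j<i}s_{ij}$ form used in the Dunkl--Luque references, the commutation computation the paper records is exactly the step your argument silently absorbs.
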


\begin{proof}
In general, it holds that $\displaystyle{(x_ig(x))\mathcal{D}_i = g(x) +x_i(g\mathcal{D}_i)+\kappa \sum_{j\neq i} g(xs_{ij})}$. Specializing to $\kappa_0$, we have that $\displaystyle{g\mathcal{U}_i = g(x)+\kappa_0 \sum_{j>i}g(xs_{ij})}$. The converse is proved similarly.
\end{proof}

The following result sets up the key role of the singular polynomials in the projection map. 
\begin{prop}\label{SPcomm}
Suppose $\{g_\S:\ \S \in\tab_\tau\}$ satisfies the \emph{property} \emph{(SP)}. Then, the map $\rho$ defined in Proposition \ref{RhoJack} intertwines $\mathbf{D}_{i}$, $\mathbf{U}_{i}$ with
$\mathcal{D}_{i}$, $\mathcal{U}_{i}$ respectively, for $1\leq i\leq N$. That is, $\mathbf{D}_{i}\rho=\rho\mathcal{D}_{i}$ and $\mathbf{U}_{i}\rho=\rho\mathcal{U}_{i}$, for $1\leq i\leq N$. 
\end{prop}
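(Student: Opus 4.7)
The plan is to establish $\mathbf{D}_i \rho = \rho \mathcal{D}_i$ by a direct calculation that reduces to the singularity of the $g_\S$, and then derive $\mathbf{U}_i \rho = \rho \mathcal{U}_i$ as a formal consequence of the first identity combined with Proposition \ref{RhoJack}.

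For the first identity, I would apply $\mathbf{D}_i$ to a general element $\sum_\S f_\S \otimes \S$. The partial derivative part contributes $\sum_\S (f_\S \partial_i) \otimes \S$, while the sum $\kappa \sum_{j\neq i} \mathbf{d}_{ij}$ contributes $\kappa \sum_\S \sum_{j\neq i} (f_\S \partial_{ij}) \otimes \S \tau(s_{ij})$. Applying $\rho$ afterwards, the isotype-$\tau$ hypothesis on $\{g_\S\}$ converts $\S\tau(s_{ij})$ into the polynomial $g_\S s_{ij}$, yielding
\begin{equation*}
\sum_\S (f_\S\partial_i)\, g_\S + \kappa \sum_\S \sum_{j\neq i} (f_\S \partial_{ij})(g_\S s_{ij}).
\end{equation*}
Now I would expand $\left(\sum_\S f_\S g_\S\right) \mathcal{D}_i$ using the Leibniz rule for $\partial_i$ and the analogous rule for $\partial_{ij}$, which is easily verified from the definition $\partial_{ij} = (1-s_{ij})/(x_i-x_j)$:
\begin{equation*}
(fh)\, \partial_{ij} = (f \partial_{ij})(h\, s_{ij}) + f\, (h \partial_{ij}).
\end{equation*}
After distributing and cancelling the terms $(f_\S \partial_i)g_\S$ and $\kappa (f_\S \partial_{ij})(g_\S s_{ij})$ that appear on both sides, the remaining discrepancy collapses to $\sum_\S f_\S (g_\S \mathcal{D}_i)$, which vanishes at $\kappa = \kappa_0$ because each $g_\S$ is singular by property (SP).

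For the second identity, I would decompose $\mathbf{U}_i = \mathbf{D}_i \mathbf{x}_i + (1\otimes 1) + \kappa \mathbf{w}_i$ and handle each piece. Multiplication by $x_i$ commutes with $\rho$ directly from the definition of $\rho$, so the first identity gives $\mathbf{D}_i \mathbf{x}_i \rho = \mathbf{D}_i \rho \, x_i = \rho \mathcal{D}_i x_i$. The operator $1\otimes 1$ matches the identity on $\mathcal{P}$. Finally, Proposition \ref{RhoJack} provides $\mathbf{t}_{s_{ij}} \rho = \rho s_{ij}$ for every transposition, so summing over $j>i$ yields $\mathbf{w}_i \rho = \rho \omega_i$. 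Assembling the three pieces produces $\mathbf{U}_i \rho = \rho(\mathcal{D}_i x_i + 1 + \kappa \omega_i) = \rho \mathcal{U}_i$.

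The main obstacle is the bookkeeping in the $\mathbf{D}_i$ calculation: one must apply the Leibniz rule for the divided difference correctly and track how the isotype-$\tau$ transformation rule for the $g_\S$ under $s_{ij}$ interacts with the tensor factor $\S \tau(s_{ij})$, so that the surviving terms on each side match up and what is left over is precisely $\sum_\S f_\S\, (g_\S \mathcal{D}_i)$, where the singularity assumption can be invoked.
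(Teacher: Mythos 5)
Your proposal is correct and follows essentially the same route as the paper: the twisted Leibniz rule $(fh)\partial_{ij}=(f\partial_{ij})(h\,s_{ij})+f(h\partial_{ij})$ you use is exactly the paper's product formula for $(fg)\mathcal{D}_i$, the identification of $\rho(p\otimes\S\tau(s_{ij}))$ with $p\cdot g_\S(xs_{ij})$ via the isotype hypothesis is the paper's matrix-coefficient step, and singularity kills the leftover term $f_\S(g_\S\mathcal{D}_i)$. The deduction of $\mathbf{U}_i\rho=\rho\,\mathcal{U}_i$ from the $\mathbf{D}_i$ identity together with Proposition \ref{RhoJack} is also the paper's argument, only spelled out more explicitly.
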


\begin{proof}
For any $f,g\in\mathcal{P}$ and any $\kappa$, we have the general formula
$$\displaystyle{\left(fg\right)\mathcal{D}_{i}=f\left(g\mathcal{D}_{i}\right)+g\frac{\partial}{\partial x_{i}}f+\kappa\sum_{\substack{j=1\\ j\neq i}}^{N}\frac{f(x)-f(xs_{ij})}{x_{i}-x_{j}}g(xs_{ij})}.$$
Let us specialize $\kappa=\kappa_{0}$ and consider 
$$\left(f_{\S }\otimes \S \right)\mathbf{D}_{i}\rho=\left( f_\S \partial_i \otimes \S + \kappa_0 \sum_{i\neq j} f\partial_{ij} \otimes \S\tau(s_{ij})\right) \rho =\left(f_{\S }\partial_i\right)g_{\S }+\kappa_{0}\sum_{\substack{j=1 \\ j\neq i}}^{N}\frac{f(x)-f(xs_{ij})}{x_{i}-x_{j}}g_{\S ,j}(x),$$ 
where $g_{\S ,j}$ corresponds to $\S \tau(s_{ij})$.
Then, for each $\sigma\in\mathcal{S}_{N}$, there is a matrix $\left[\tau\left(\sigma\right)_{\S ,\S ^{\prime}}\right]$ such that $\S \tau(\sigma)=\sum\limits_{\S ^{\prime}\in\tab_\tau}\S ^{\prime}\tau(\sigma)_{\S ,\S ^{\prime}}$. By definition $g_{\S }(x\sigma)=\sum\limits_{\S ^{\prime}\in\tab_\tau}g_{\S ^{\prime}}(x)\tau(\sigma)_{\S ,\S ^{\prime}}$ and, in particular, $g_{\S ,j}(x)=\sum\limits_{\S ^{\prime}\in\tab_\tau}g_{\S ^{\prime}}(x)\tau (s_{ij})_{\S ,\S ^{\prime}}=g_{\S }(xs_{ij})$. Thus,
\begin{eqnarray*}
\left(f_{\S }\otimes \S \right)\rho\mathcal{D}_{i}=(f_\S g_\S )\mathcal{D}_i=f_{\S }\left(g_{\S }\mathcal{D}_{i}\right)+g_{\S }\frac{\partial}{\partial x_{i}}f_{\S }+\kappa_{0}\sum_{\substack{j=1 \\j\neq i}}^{N}\frac{f_{\S }(x)-f_{\S }(xs_{ij})}{x_{i}-x_{j}}g_{\S }(xs_{ij}).
\end{eqnarray*}
By hypothesis, $g_{\S }\mathcal{D}_{i}=0$ and hence, $\left(f_{\S }\otimes \S \right)\mathbf{D}_{i}\rho=\left(f_{\S }\otimes \S \right)\rho \mathcal{D}_{i}$. The fact that $\rho$ commutes with each $\sigma\in\mathcal{\S }_{N}$ implies that $\left(f_{\S }\otimes \S \right)\mathbf{U}_{i}\rho=\left(f_{\S }\otimes \S \right)\rho\mathcal{U}_{i}$, for each $i$.
\end{proof}
Recently, in \cite{GGJL17}, the authors study this projection via its interpretation as a homomorphism of standard modules of the rational Cherednik algebra.

Now, we present the analogous results for the Hecke algebra case.
\begin{prop}\label{singPhi}
A polynomial $p$ is singular if and only if $p\xi_{i}=p\phi_{i}$ for $1\leq i\leq N$, where $\phi_i$ are the Jucys-Murphy elements in $\mathcal{H}_N(t)$.
\end{prop}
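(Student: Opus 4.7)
The plan is to derive a single operator identity that expresses $\phi_i - \xi_i$ as $\mathcal D^{q,t}_i$ composed with an invertible operator, from which both implications read off immediately.

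The crucial structural observation is that all three families — the $(q,t)$-Dunkl operators $\mathcal D^{q,t}_i$, the Cherednik operators $\xi_i$, and the Jucys–Murphy elements $\phi_i$ — obey the same recursion $A_i = \tfrac{1}{t}T_iA_{i+1}T_i$. The base case $i=N$ is immediate from the definitions: since $\phi_N = 1$ and $\mathcal D^{q,t}_N = (1-\xi_N)x_N^{-1}$, we have
\[
\phi_N - \xi_N \;=\; 1 - \xi_N \;=\; \mathcal D^{q,t}_N\, x_N .
\]

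Next I would propagate this identity by downward induction on $i$ to obtain
\[
\phi_i - \xi_i \;=\; \mathcal D^{q,t}_i\, W_i, \qquad\text{where}\qquad W_i \;:=\; T_i^{-1}T_{i+1}^{-1}\cdots T_{N-1}^{-1}\, x_N\, T_{N-1}\cdots T_{i+1}T_i .
\]
The induction is algebraic bookkeeping: apply the shared recursion to $\phi_i - \xi_i$, substitute $\phi_{i+1} - \xi_{i+1} = \mathcal D^{q,t}_{i+1}W_{i+1}$, and use the recursion for $\mathcal D^{q,t}$ in the form $\tfrac{1}{t}T_i\mathcal D^{q,t}_{i+1} = \mathcal D^{q,t}_i T_i^{-1}$ to peel the Hecke generator across.

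The proof is then completed by observing that $W_i$ is injective on $\mathcal P$: every $T_j$ is invertible as an operator (by the quadratic relation $T_j^2 = (t-1)T_j + t$, which gives $T_j^{-1} = \tfrac1t(T_j - (t-1))$), and multiplication by $x_N$ is injective on the polynomial ring. Hence, for each $i$,
\[
p(\phi_i - \xi_i) \;=\; (p\mathcal D^{q,t}_i)\, W_i \;=\; 0 \quad\Longleftrightarrow\quad p\mathcal D^{q,t}_i \;=\; 0,
\]
which is exactly the claimed equivalence. The main obstacle I anticipate is the small technicality that $p\mathcal D^{q,t}_i$ must actually live in $\mathcal P$ (rather than merely in a localization involving $x_N^{-1}$) before the injectivity of $W_i$ on $\mathcal P$ can be invoked; concretely, this reduces to checking that $p(1-\xi_N)$ is divisible by $x_N$, a standard property of the $(q,t)$-Dunkl operators that then propagates through the recursion.
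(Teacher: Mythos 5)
Your proposal is correct and takes essentially the same route as the paper: a downward induction on $i$ starting from $\phi_N-\xi_N=\mathcal{D}^{q,t}_N x_N$ and driven by the fact that $\mathcal{D}^{q,t}_i$, $\xi_i$ and $\phi_i$ all obey the recursion $A_i=\frac{1}{t}T_iA_{i+1}T_i$, together with the invertibility of the $T_j$. Packaging this as the single identity $\phi_i-\xi_i=\mathcal{D}^{q,t}_i W_i$ with $W_i$ injective is only a tidier bookkeeping of the paper's inductive step (where the hypothesis is applied to $pT_i$ and the factor $T_i$ is cancelled), and the technicality you flag is harmless since multiplication by $x_N$ is injective already on rational functions.
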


\begin{proof}
We proceed by induction. 

By definition, $p\mathcal{D}_{N}=0$ if and only if $p\xi_{N}=p=p\phi_{N}$.

Now, suppose that $p\mathcal{D}_{j}=0$, for $i<j\leq N$, if
and only if $p\xi_{j}=p\phi_{j}$, for $i<j\leq N$. Then, $\displaystyle{0=p\mathcal{D}_{i}=\frac{1}{t}pT_{i}\mathcal{D}_{i+1}T_{i}}$ if and only if $pT_{i}\mathcal{D}_{i+1}=0$, which in terms of the Jucys-Murphy elements, is equivalent to say that $ pT_{i}\xi_{i+1}=pT_{i}\phi_{i+1}$. Multiplying both sides on the right by $T_i$ and applying the inductive hypothesis, we get that $0=p\mathcal{D}_{i}$ if and only if $tp\xi_{i}=tp\phi_{i}$, and this completes the proof.
\end{proof}

The following result gives us an idea of how useful the property (SMP) can be. 
\begin{thm}\label{Hrhocomm}
Suppose $\left\{g_{\S }:\S \in\tab_\tau\right\}$ satisfies property (SMP). Then, the map $\rho$ defined in Proposition \ref{Hcomm} intertwines $\mathbf w^q$, $\mathbf E_i$ and $\mathbf{D}_{i}^{q,t}$ with $\omega^q$, $\xi_i$ and $\mathcal{D}_i^{q,t}$ respectively, for $1\leq i \leq N$. That is, $\mathbf w^q \rho =\rho \omega^q$, $\mathbf E_i \rho = \rho \xi_i$ and $\mathbf D_i^{q,t} \rho = \rho \mathcal{D}_i^{q,t}$, for $1\leq i \leq N$.
\end{thm}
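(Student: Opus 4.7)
The plan is to prove the three intertwining identities in sequence: first $\mathbf{w}^q\rho=\rho\,\omega^q$, then $\mathbf{E}_i\rho=\rho\,\xi_i$ by assembling $\xi_i$ out of $\omega^q$ and the $T_j$'s, and finally $\mathbf{D}_i^{q,t}\rho=\rho\,\mathcal{D}_i^{q,t}$ by downward induction on $i$ using the recursive definition.

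The heart of the argument is the first identity. Inverting $\xi_N = t^{N-1} T_{N-1}^{-1}\cdots T_1^{-1}\omega^q$ gives $\omega^q = t^{1-N}\,T_1 T_2\cdots T_{N-1}\,\xi_N$ in $\mathcal{H}_N(t)$. For a singular $g_\S$ from the (SMP) basis, I claim
\[
g_\S\,\omega^q \;=\; t^{1-N}\, g_\S\, T_1 T_2\cdots T_{N-1}.
\]
Indeed, because $\{g_{\S'}:\S'\in\tab_\tau\}$ is an isotype-$\tau$ basis, the polynomial $g_\S T_1\cdots T_{N-1}$ lies in $\operatorname{span}\{g_{\S'}\}$, and every element of that span is singular by (SMP). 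Proposition \ref{singPhi} then forces $\xi_N$ to act on each $g_{\S'}$ as $\phi_N=1$, so $\xi_N$ acts trivially on $g_\S T_1\cdots T_{N-1}$, which proves the claim.

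Now I compute both sides directly. Since $\omega^q$ is the substitution $(x_1,\dots,x_N)\mapsto(qx_N,x_1,\dots,x_{N-1})$, it is an algebra automorphism of $\mathcal{P}$, so $(f_\S g_\S)\omega^q=(f_\S\omega^q)(g_\S\omega^q)$. Combining this with the displayed identity for $g_\S\omega^q$, and using the fact that by the isotype property the coefficients expanding $\S\,\tau(T_1\cdots T_{N-1})$ in the $\tab_\tau$-basis are exactly those expanding $g_\S T_1\cdots T_{N-1}$ in the $g_{\S'}$-basis, I get
\[
(f_\S\otimes\S)\,\mathbf{w}^q\rho \;=\; t^{1-N}(f_\S\omega^q)\bigl(g_\S T_1\cdots T_{N-1}\bigr) \;=\; (f_\S g_\S)\omega^q \;=\; (f_\S\otimes\S)\,\rho\,\omega^q.
\]

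For the second identity, I plug the definitions of $\mathbf{E}_i$ and $\xi_i$ directly; Proposition \ref{Hcomm} (together with its immediate consequence $\mathbf{T}_j^{-1}\rho=\rho\,T_j^{-1}$) lets me slide $\rho$ past each $\mathbf{T}_j^{\pm 1}$, and the first identity lets me slide it past $\mathbf{w}^q$. For the third identity, the base case $i=N$ uses that $x_N^{-1}$ (regarded as a scalar multiplication) clearly commutes with $\rho$, so
\[
\mathbf{D}_N^{q,t}\rho=(1-\mathbf{E}_N)x_N^{-1}\rho=\rho\,(1-\xi_N)x_N^{-1}=\rho\,\mathcal{D}_N^{q,t}.
\]
The inductive step is immediate from $\mathbf{D}_i^{q,t}=\tfrac1t\mathbf{T}_i\mathbf{D}_{i+1}^{q,t}\mathbf{T}_i$ and two applications of Proposition \ref{Hcomm}.

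The main obstacle is the first identity, where both halves of the (SMP) hypothesis — singularity of every $g_\S$ \emph{and} closure of the span under the $T_j$-action — must be used simultaneously to reduce $\omega^q$ to the single operator $t^{1-N}T_1\cdots T_{N-1}$. Once that identity is established, the remaining two intertwining relations follow from purely formal manipulations with the recursive definitions of $\mathbf{E}_i$ and $\mathbf{D}_i^{q,t}$ and the already-proved intertwiner for $\mathbf{T}_i$.
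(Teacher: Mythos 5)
Your proof is correct and follows essentially the same route as the paper: everything is reduced to the identity $g_\S\,\omega^q=t^{1-N}g_\S T_1\cdots T_{N-1}$ (established via Proposition \ref{singPhi}), after which the computation of $(f_\S\otimes\S)\mathbf{w}^q\rho$ and the formal reductions for $\mathbf{E}_i$ and $\mathbf{D}_i^{q,t}$ match the paper's argument. The only (harmless) variation is that you derive the key identity by letting $\xi_N$ act as $\phi_N=1$ on the singular span containing $g_\S T_1\cdots T_{N-1}$, whereas the paper applies $\xi_1=\phi_1$ to $g_\S$ itself and cancels the invertible factor $T_{N-1}\cdots T_1$.
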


\begin{proof}
We analyze the commutation with $\mathbf w^q$ since the other commutations follow from the definitions of $\xi_i,\mathbf E_i, \mathcal{D}_i^{q,t}$ and $\mathbf{D}_i^{q,t}$. 
For $\mathbf w^q$, we have that
\begin{multline*}
\left(f_\S (x)\otimes \S\right) \mathbf w^q \rho = \left[ t^{1-N}\left(f_\S(x)\omega^q\right)\otimes \S\tau(T_1\dots T_{N-1})\right]\rho= \\  
=\left[f_\S(x)\omega^q \otimes t^{1-N}\S\tau(T_1\dots T_{N-1}) \right]\rho  = \left( f_\S(x)\omega^q\right) \left(t^{1-N}g_\S T_1\dots T_{N-1}\right) = \\ = \left( f_\S(x)\omega^q\right)\left( g_\S(x)\omega^q\right) = \left(f_\S(x)g_\S(x)\right) \omega^q = \left(f_\S(x)\otimes \S\right)\rho \omega^q, 
\end{multline*}
where we use that $t^{1-N}g_\S T_1\dots T_{N-1} = g_\S \omega^q$. To see this, consider the Jucys-Murphy element $\phi_1$. By Proposition \ref{singPhi}, $g_\S \xi_1 = g_\S\phi_1$ and then,
\begin{eqnarray*}
g_\S\xi_1 = g_\S\omega^q T_{N-1}\dots T_1 = g_\S\phi_1 = t^{1-N}g_\S T_1\dots T_{N-1}T_{N-1}\dots T_1. 
\end{eqnarray*}
Cancelling out the factor $T_{N-1}\dots T_1$ gives the desired result. 
\end{proof}

\begin{remark}
 Using a similar argument to Lemma \ref{LemmaJM}, we show that
\begin{multline*}
\sum_{i=1}^{N}x_{i}\left(  g\mathcal{D}_{i}\right)    =\sum_{i=1}^{N}x_{i}\frac{\partial g}{\partial x_{i}}+\kappa\sum\limits_{i<j}\left(g\left(  x\right)  -g\left(  xs_{ij}\right)  \right) = \\
 =\sum_{i=1}^{N}x_{i}\frac{\partial g}{\partial x_{i}}+\kappa\left(\frac{N\left(  N-1\right)  }{2}g\left(  x\right)  -\sum_{i=1}^{N}g\left(x\right)  \omega_{i}\right) .
\end{multline*}
Thus, if $g$ is homogeneous of degree $n$, singular for $\kappa=\kappa_{0}$ and of isotype $\tau$, then these parameters satisfy that $n+\kappa_{0}\left(\frac{N\left(N-1\right)}{2}-\Sigma\left(\tau\right)\right)=0$, where $\Sigma\left(\tau\right)$ is the sum of the contents in the Ferrers diagram of $\tau$ (i.e. $\Sigma\left(\tau\right)=\sum_{\left(i,j\right)  \in\tau}\left(j-i\right)$).

The role of this formula is more relevant for Macdonald polynomials indexed by quasistaircase partitions as we describe later in the subsection \ref{SubsubFinalRemark}. 
\end{remark}

\section{Jack and Macdonald polynomials\label{sec:JackMacdo}}
There are several types of Jack and Macdonald polynomials; for instance, symmetric and nonsymmetric, homogeneous and nonhomogeneous (so called shifted) etc. We are interested in the symmetric and nonsymmetric Jack and Macdonald polynomials, and in their vector-valued versions. In this section we present the ones that are relevant for our study. For more details see \cite{DL2012}.

\subsection{Jack polynomials} 
Consider a formal parameter $\kappa$. There is a basis of simultaneous eigenfunctions for the Cherednik operators, called \emph{nonsymmetric Jack polynomials} $\{J_\alpha(x)\}$, with $\vartriangleright$-leading term $x^{\alpha}$ for $\alpha\in\comps$. 
The eigenvectors $\varsigma_\alpha$, called \emph{spectral vectors}, are given by $\varsigma_\alpha(i)=\alpha_i+1+\kappa(N-r(\alpha,i))$, where $r(\alpha,i)$ is the rank function defined by \begin{eqnarray*}
r(\alpha,i):= \#\{ j:\ \alpha_j>\alpha_i\}+\# \{ j:\ 1\leq j\leq i,\ \alpha_j=\alpha_i\}. 
\end{eqnarray*}
Observe that the function $i\mapsto r(\alpha,i)$ is a permutation of $\{1,2,\dots,N\}$ and that $\alpha$ is a partition if and only if $r(\alpha,i)=i$, for all $i$. 

Using the commutation relations of $s_i$ and $\mathcal{U}_i$, it can be shown that
\begin{eqnarray*}
\left\{ \begin{array}{l}
J_{\alpha}s_i=J_\alpha, \hspace{0.5cm} \text{ if } \alpha_i=\alpha_{i+1}, \\[0.05in]
\displaystyle{J_\alpha s_i=J_{\alpha s_i} + \frac{\kappa}{\varsigma_\alpha(i) - \varsigma_\alpha (i+1)} J_\alpha}, \hspace{0.5cm} \text{ if }\alpha_i<\alpha_{i+1}.
\end{array}\right.
\end{eqnarray*}

The \emph{symmetric Jack polynomials}, $\mathrm J_{\lambda}$, are obtained by applying symmetrizing operators to the nonsymmetric Jack polynomial $J_{\alpha}$ such that $\alpha^+=\lambda$. In terms of operators, the symmetric Jack polynomials are the eigenvectors of $\sum_i \mathcal{U}_i$. Moreover, for a partition $\lambda$ and $m\geq 1$, 
\begin{eqnarray*}
\mathrm J_\lambda \sum_{i=1}^N \mathcal{U}_i^m = \sum_{i=1}^N \left[ \lambda_i +1+\kappa (N-i+1) \right]^m \mathrm J_\lambda.
\end{eqnarray*}
To the question of how many of these sums suffice to separate different partitions, the answer is that considering $m=1,2,\dots, N$ is certainly enough. Notice also that considering only $m=1$ does not work. 

Finally, there are simultaneous eigenfunctions of $\{\mathbf U_{i}\}$, called \emph{vector-valued nonsymmetric Jack polynomials}, $J_{\alpha, \S}$, with $(\alpha,\S ) \in \comps \times \tab_\tau$. The eigenvectors are given by the spectral vectors $\varsigma_{\alpha,\S }(i)=\alpha_i +1+\kappa\cdot \ct{S}{r(\alpha,i)}$. For more details, see \cite{DL2011}.

\subsection{Macdonald polynomials}
Consider two formal parameters $q$ and $t$. There is a basis of simultaneous eigenfunctions for the Cherednik-Dunkl operators, the \emph{nonsymmetric Macdonald polynomials} $M_{\alpha}$, labeled by $\alpha\in\comps$, with $\vartriangleright$-leading term $t^{\ast}x^{\alpha}$. For us, $\ast$ denotes the exponent of $t$, which is not relevant for the order on the monomials $x^\alpha$. In this case, the spectral vector is $\zeta_{\alpha}(i)=q^{\alpha_{i}}t^{N-r(\alpha,i)}$. Notice that from the definition of the Cherednik-Dunkl operators, the Macdonald polynomials $M_{\alpha}$ are necessarily homogeneous. 

As for Jack polynomials, Macdonald polynomials satisfy the following recursive formulas:
\begin{eqnarray*}
\left\{\begin{array}{l}
M_{\alpha}T_{i}=tM_{\alpha}, \hspace{0.5cm} \text{ if } \alpha_{i}=\alpha_{i+1},\\[0.05in]
\displaystyle{M_{\alpha}T_{i}=M_{\alpha s_{i}}-\frac{1-t}{1-\zeta_{\alpha}(i+1)/\zeta_{\alpha}(i)}M_{\alpha}},\hspace{0.5cm} \text{ if } \alpha_{i}<\alpha_{i+1}.
\end{array}\right.
\end{eqnarray*}

The \emph{symmetric Macdonald polynomials} $\mathrm M_{\lambda}$ are defined as the eigenvectors of $\sum_{i}\xi_{i}$ and are indexed by partitions $\lambda$. 

In this case, by the eigenfunction properties of symmetric Macdonal polynomials, for a partition $\lambda$ and $m\geq 1$,
\begin{eqnarray*}
\mathrm M_\lambda \sum_{i=1}^N \xi_i^m = \sum_{i=1}^N \left[q^{\lambda_i}t^{N-i}\right]^m \mathrm M_\lambda
\end{eqnarray*}
As a remarkable difference between Jack and Macdonald polynomials, the sum for $m=1$ alone works for generic parameters $q$ and $t$ and the $\lambda_i$ is determined by the exponent of $q$. This is an example of the meta-principle that adding a parameter can simplify the problem. 
 
The \emph{vector-valued nonsymmetric Macdonald polynomials}, $M_{\alpha,\S}$, are defined as the simultaneous eigenfunctions of the operators $\{\mathbf E_{i}\}$ in $\mathcal{P}_\tau$.

\subsection{Projections and highest weight}\label{SubSectProjSingHW}

As we mentioned before, the notion of singularity has a symmetric counterpart. Let us start with the definition and leave the motivation for the beginning of the next section. 
\begin{defi}
A polynomial is \emph{highest weight} if it belongs to the kernel of the sum of the Dunkl operators $\sum_i \mathcal{D}_i$, for the symmetric case, or to the kernel of $\sum_i \mathcal{D}_i^{q,t}$, for the Hecke algebra case. 
\end{defi}
The following two corollaries show that the projections defined in Section \ref{subsect:proj} in a more general setting restrict nicely to the Jack and Macdonald polynomials. 
\begin{cor}
Under the same hypotheses as in Proposition \ref{SPcomm}, $\rho$ maps each vector-valued nonsymmetric Jack polynomial to a scalar simultaneous $\left\{\mathcal{U}_{i}\right\}$-eigenfunction, which is a Jack polynomial if the spectral vectors $\zeta_{\alpha}$ for the corresponding degree determine $\alpha$ uniquely. Otherwise, it maps to zero. 
\end{cor}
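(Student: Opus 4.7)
The proof is a direct application of Proposition \ref{SPcomm}. By definition, the vector-valued nonsymmetric Jack polynomial $J_{\alpha,\S}$ is the simultaneous eigenfunction of $\{\mathbf{U}_i\}$ with spectrum $\varsigma_{\alpha,\S}(i) = \alpha_i + 1 + \kappa\cdot \ct{S}{r(\alpha,i)}$. Specializing to $\kappa=\kappa_0$, each eigenvalue becomes an explicit rational number, and pushing the eigenequation through $\rho$ using the intertwining relation $\mathbf{U}_i\rho = \rho\mathcal{U}_i$ from Proposition \ref{SPcomm} yields
\begin{eqnarray*}
\rho(J_{\alpha,\S})\,\mathcal{U}_i \;=\; \varsigma_{\alpha,\S}(i)\,\rho(J_{\alpha,\S}),
\end{eqnarray*}
so $\rho(J_{\alpha,\S})$ is either zero or a scalar simultaneous $\{\mathcal{U}_i\}$-eigenfunction with that spectrum. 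This already delivers the first assertion of the corollary.

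For the dichotomy, I would invoke the fact that on each homogeneous component of $\mathcal{P}$ the scalar nonsymmetric Jack polynomials $\{J_\beta:|\beta|=n\}$ form a basis of $\{\mathcal{U}_i\}$-eigenfunctions with spectra $\varsigma_\beta(i) = \beta_i+1+\kappa_0(N-r(\beta,i))$. Since all the basis elements $g_\S$ share a common degree and $J_{\alpha,\S}$ is homogeneous in the $x$-variables, $\rho(J_{\alpha,\S})$ is also homogeneous of a well-defined degree. Expanding in the basis $\{J_\beta\}$ of the relevant homogeneous component, only those $\beta$'s whose spectrum equals $\varsigma_{\alpha,\S}$ at $\kappa=\kappa_0$ may carry a nonzero coefficient. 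When this matching $\beta$ is unique, the expansion collapses to a single term proportional to $J_\beta$, so $\rho(J_{\alpha,\S})$ is (a scalar multiple of) the claimed Jack polynomial.

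The main obstacle is the \emph{Otherwise} clause. If no $\beta$ of the correct degree matches the required spectrum, then there is no available $\{\mathcal{U}_i\}$-eigenfunction to receive the image and $\rho(J_{\alpha,\S})$ must vanish. If instead several $\beta$'s have coinciding spectral vectors at $\kappa=\kappa_0$, I would argue by deformation: at generic $\kappa$ the spectrum $\varsigma_\beta$ determines $\beta$ uniquely, and the expansion coefficients of $\rho(J_{\alpha,\S})$ in $\{J_\beta\}$ depend rationally on $\kappa$; the specialization where several $\varsigma_\beta$ collide is precisely where these coefficients are compatible only if they vanish, forcing $\rho(J_{\alpha,\S})=0$. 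I expect this rational-deformation / limit step to be the only delicate point, the rest being formal bookkeeping around the intertwining property.
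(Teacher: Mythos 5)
Your first paragraph is exactly the route the paper takes: the corollary is stated with no proof beyond Proposition \ref{SPcomm}, and pushing the eigenequations for $\mathbf{U}_i$ through the intertwining $\mathbf{U}_i\rho=\rho\,\mathcal{U}_i$ is all that is implicitly used, so up to that point you agree with the paper. The gap is in how you handle the dichotomy. Your expansion of $\rho(J_{\alpha,\S})$ in ``the basis $\{J_\beta\}$ of the relevant homogeneous component'' is not available at the parameter value where it is needed: the hypotheses of Proposition \ref{SPcomm} force $\kappa=\kappa_0$, a negative rational, and this is precisely a non-generic value at which scalar nonsymmetric Jack polynomials can have poles and need not span (the very existence of singular polynomials shows $\kappa_0$ is special). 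The correct substitute is the triangularity of the operators $\mathcal{U}_i$ on monomials with respect to $\triangleright$, which holds for every $\kappa$: it shows that any joint eigenvalue occurring in the given degree must be some $\varsigma_\beta$ specialized at $\kappa_0$, and that when this $\beta$ is unique the joint generalized eigenspace is one-dimensional, so a nonzero eigenfunction with that spectrum has $\triangleright$-leading term $x^\beta$ and is the specialized Jack polynomial. With that replacement the ``unique $\beta$'' half is sound.

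The ``otherwise it maps to zero'' clause is not established by your proposal. The rational-deformation argument does not get off the ground: $\rho$ is built from the basis $\{g_\S\}$, which is singular only at $\kappa=\kappa_0$, so the intertwining property --- and hence the eigenfunction property of the image --- holds only at $\kappa_0$; there is no family over generic $\kappa$ whose expansion coefficients you can follow to the collision locus, and ``compatible only if they vanish'' is an assertion rather than an argument. (For what it is worth, the paper also gives no argument for this clause; but as written your sketch does not prove it, and you should either supply a genuine argument in the degenerate case or state the conclusion there as ``the image is an eigenfunction which need not be a single Jack polynomial.'')
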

\begin{remark}
In general, there might be some negative rational $\kappa$ values for which the spectral vectors $\zeta_\alpha$ do not determine $\alpha$ uniquely. 
\end{remark}

\begin{cor}
Under the same hypotheses as in Proposition \ref{Hrhocomm}, $\rho$ maps each vector-valued nonsymmetric Macdonald polynomial to a scalar simultaneous $\left\{\xi_{i}\right\}$-eigenfunction, which is a Macdonald polynomial if the spectral vectors $\zeta_{\alpha}$ for the corresponding degree determine $\alpha$ uniquely. Otherwise, it maps to zero. 
\end{cor}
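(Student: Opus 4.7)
The approach rests on converting the eigenfunction structure from the vector-valued side to the scalar side through the intertwining of Theorem \ref{Hrhocomm}. Since $\{g_\S : \S \in \tab_\tau\}$ satisfies property (SMP), that theorem gives $\mathbf{E}_i\rho = \rho\xi_i$ for all $1\leq i \leq N$. Applied to the vector-valued nonsymmetric Macdonald polynomial $M_{\alpha,\S}$, which by definition satisfies $M_{\alpha,\S}\mathbf{E}_i = \zeta_{\alpha,\S}(i)\,M_{\alpha,\S}$, the intertwining yields
\[
\rho(M_{\alpha,\S})\,\xi_i \;=\; \rho\bigl(M_{\alpha,\S}\mathbf{E}_i\bigr) \;=\; \zeta_{\alpha,\S}(i)\,\rho(M_{\alpha,\S}),
\]
for each $i$. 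Hence the image is a simultaneous $\{\xi_i\}$-eigenfunction on $\mathcal{P}$ whose eigenvalue tuple is precisely the spectral vector $\zeta_{\alpha,\S}$ of the original vector-valued polynomial.

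The next step is to identify this scalar eigenfunction inside the basis $\{M_\beta\}_\beta$ of scalar nonsymmetric Macdonald polynomials. Since the $M_\beta$'s form a simultaneous eigenbasis of $\{\xi_i\}$, with the $M_\beta$ of degree $n := |\alpha| + \deg g_\S$ spanning the corresponding homogeneous component, we may expand
\[
\rho(M_{\alpha,\S}) \;=\; \sum_{|\beta|=n} c_\beta\, M_\beta,
\]
and matching the $\xi_i$-eigenvalues component by component forces $c_\beta = 0$ unless $\zeta_\beta = \zeta_{\alpha,\S}$ as vectors in $\mathbb{F}^N$. Under the hypothesis that the map $\beta \mapsto \zeta_\beta$ is injective on the compositions of degree $n$ (the uniqueness condition stated in the corollary), at most one $\beta$ can contribute, so $\rho(M_{\alpha,\S})$ is either zero or a nonzero scalar multiple of that unique $M_\beta$, which we identify with a (nonsymmetric) Macdonald polynomial.

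For the contrapositive ``otherwise it maps to zero'' clause, the plan is to argue that when the spectral vectors in the given degree fail to pin down a single $\beta$, the vector-valued spectral vector $\zeta_{\alpha,\S}$ (which in general depends on the contents $\ct{S}{r(\alpha,i)}$ and the specialization of $(q,t)$ coming from property (SMP)) lies outside the image of $\beta \mapsto \zeta_\beta$, so no $M_\beta$ can appear in the expansion above and $\rho(M_{\alpha,\S})$ must vanish. The main obstacle will be making this last step rigorous at the special values of $(q,t)$ where the spectral vectors degenerate, because at such points one must rule out accidental cancellations among would-be matching $M_\beta$'s; this is typically handled by comparing leading monomials under $\vartriangleright$ and invoking that $M_{\alpha,\S}$ has $\vartriangleright$-leading term $t^\ast x^\alpha \otimes \S$, so that its image under $\rho$ has a predictable leading behavior that forces the appropriate vanishing.
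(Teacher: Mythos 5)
Your first step coincides with what the paper does: the corollary is stated there as an immediate consequence of Theorem \ref{Hrhocomm}, and applying the intertwining $\mathbf{E}_i\rho=\rho\,\xi_i$ to the eigenfunction $M_{\alpha,\S}$ to conclude that the image is a simultaneous $\{\xi_i\}$-eigenfunction with eigenvalue vector $\zeta_{\alpha,\S}$ is exactly that argument (the paper supplies no further proof). Up to this point your proposal is correct.

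The rest has genuine gaps. First, the expansion $\rho(M_{\alpha,\S})=\sum_{|\beta|=n}c_\beta M_\beta$ presupposes that, at the specialization of $(q,t)$ forced by property (SMP) (e.g.\ $q^mt^n=1$), the scalar nonsymmetric Macdonald polynomials of degree $n$ still exist and span $\mathcal{P}_n$ as a $\{\xi_i\}$-eigenbasis; but it is precisely at such special values that some $M_\beta$ acquire poles and the operators $\xi_i$ need not act semisimply, so this step requires justification. Under the uniqueness hypothesis the cleaner route is to use triangularity of the $\xi_i$ to see that the joint eigenvalue vector of any nonzero eigenfunction of degree $n$ must be some $\zeta_\beta$, and that the corresponding joint eigenspace is one-dimensional, which identifies the image with a multiple of $M_\beta$. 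Second, your plan for the ``otherwise it maps to zero'' clause is logically off: failure of the spectral vectors to determine the composition uniquely means several $\beta$ of that degree \emph{share} a spectral vector, not that $\zeta_{\alpha,\S}$ lies outside the image of $\beta\mapsto\zeta_\beta$, so ``no $M_\beta$ can appear in the expansion'' does not follow. The proposed fallback via $\vartriangleright$-leading terms also does not work as stated: $\rho$ multiplies each component $f_\S$ by $g_\S$, and monomials arising from different tableaux can coincide and cancel, so the leading term $t^{\ast}x^\alpha\otimes\S$ of $M_{\alpha,\S}$ gives no predictable leading behavior of $\rho(M_{\alpha,\S})$, and in particular cannot force the claimed vanishing. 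As written, the vanishing clause remains unproved in your proposal, and since the paper itself gives no argument for it, you cannot simply defer to the source either.
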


\section{The quasistaircase partition}\label{SectionQSP}
\subsection{The importance of the quasistaircases}
The investigation on highest weight Jack polynomial is motivated by the quest for the
description of the wave functions that model the fractional quantum Hall effect (FQH). The first occurrence
of a Jack polynomial (at that time we did not know it was one yet) is the Laughlin wavefunction itself \cite{Laughlin}:
\begin{eqnarray*}
	\varphi^{m}_{L}(z_{1},\dots,z_{N})=\prod_{i<j}\left(z_{i}-z_{j}\right)^{2m}.
\end{eqnarray*}
This function depends on an integer parameter $m$ and is known to be a \emph{true} staircase Jack polynomial for the parameter $\alpha=-\frac1m$.{}
The study of $\varphi^{m}_{L}$ as a symmetric function generated literature for the purpose of the expansion in the Schur basis and the monomial basis \cite{dFGIL,TSW}. This first case is particularly interesting because it is also related to rectangular Jack polynomials and hyperdeterminants \cite{BBL,BLT,Luque}. Notice that, even in this (simplest) case, Macdonald polynomial provides a more regular framework for the study of these functions. For instance, partitions involved in the expansion in the Schur basis are more easily understood for the $q$-deformation \cite{BL,KTW}.

Although the description of this wave function is not known in all the FQHE configurations, this particularly simple expression led physicists to note that it resulted from the coincidence of two notions: clustering properties and highest weight polynomials. The clustering properties was in particular studied by Feigin \emph{et al.} in the early 2000s \cite{FJMM1,FJMM2} and described in terms of ideal, \emph{wheel} conditions and staircase partitions. The second notion (highest weight) is related to some differential operators (or their $q$-deformations). A family of highest weight Macdonald polynomials was described in \cite{JL}. The part of the results concerning staircases can be recovered from the works of Feigin \emph{et al.} while, for the polynomials indexed by quasistaircases which are not staircases, it is a non trivial consequence of the Lassalle binomial formulas \cite{Lassalle}. 

Several Jack polynomials are used to describe some FQH states (or wave functions thought to be adiabatically related to the true eigenstates) \cite{BH2008,BH2,MR,RR}. Nevertheless, it is not always the case and some FQH states seem not to be directly related to Jack polynomials. The hope of completing the picture with Macdonald polynomials comes from the theory of nonsymmetric Jack and Macdonald polynomials. These polynomials can be considered as elementary building blocks allowing to reconstitute symmetric polynomials by linear combinations. There is a double challenge with this approach. Firstly, we have to reconstruct the clustering and the highest weight notions from properties of nonsymmetric polynomials. Secondly, we have to write FQH wave functions nicely in terms of these elementary building blocks. In the present paper we deal with the first problem. Let us be more precise on this point.
The clustering properties comes from physical interpretation of constraints on the position of the particles. These constraints translate in terms of polynomials as vanishing properties. More precisely, the polynomial representing the wave function vanishes when $k$ particles tends to the same position. The number $k$ is one of the parameters describing the clustering property. A second parameters take into account the number of clusters which are required to observe the vanishing property and another parameter allows to take into account the strength with which particles repel each other. In terms of polynomial, the more strongly the particles repel each other, the faster the polynomial tends to $0$ when $k$ variables tend to the same value. For a mathematical purpose, instead of stating the
results in terms of clustering properties, we prefer an equivalent statement in terms of factorizations \cite{CDL2017}.
The computation of the exact wavefunctions being a very difficult problem, physicists  overcome this difficulty by empirically research for polynomials that are adiabatically related to true eigenvalues and are guided by the clustering properties.
In this  quest of wavefunctions, Bernevig and Haldane \cite{BH2008,BH2}, based on results from Feigin et al \cite{FJMM1}, noted that some of Jack's polynomials indexed by quasistaircase partitions could be good candidates. In that context they gave three conjectures that we translate below in terms of factorization. We assume that the parameters $k+1$ and $s-1$ are coprime and we set $\lambda^{\beta}_{k,r,s}=[(\beta r+s(r-1)+1)^{k},\dots,(s(r-1)+1)^{k},0^{n_{0}}]$ with $n_{0}=(k+1)s-1$ and the number of particles (\emph{i.e.} the size of the partition) is $N=\beta k+n_{0}$. Notice that $\beta$ is not really a parameter since it depends on $N$ and the three other parameters. The three conjectures read:
\begin{enumerate}
	 \item {\bf First clustering property} They considered $s-1$ clusters of $k+1$ particles 
	 $Z_{1}=z_{1}=\cdots=z_{k+1}$, $Z_{2}=z_{k+1}=\cdots=z_{2(k+1)}$,$\dots$,$Z_{s-1}=z_{(s-2)(k+1)+1}=\cdots=z_{(s-1)(k+1)}$, together with 
	 $k$ particle cluster $Z_{F}=z_{(s-1)(k+1)}=\cdots=z_{s(k+1)-1}$.
	  The other particles (variables) remain free. For such a specialization,
	 the Jack polynomial $J^{-{k+1\over r-1}}_{\lambda_{k,r,s}^{\beta}}((k+1)(Z_{1}+\cdots+Z_{s-1})+kZ_{F}+z_{s(k+1)}+\cdots+z_{N})$ behaves as $\displaystyle\prod_{i=s(k+1)^{N}}\left(Z_{F}-z_{i}\right)^{r}$ when 
	 each $z_{i}$ ($i=s(k+1),\dots,N$) tends to $Z_{F}$ but has a remaining polynomial factor which does not tends to $0$.\\ For instance,
	 $
	 J^{(-2)}_{53}(2Z_{1}+Z_{F}+z_{3}+z_{4})= \left( {\it Z_F}-{\it z_4}
 \right) ^{2} \left( {\it Z_F}-{\it z_3} \right) ^{2}P(Z_{1},Z_{F},z_{3},z_{4})$ where $P$ is a degree $4$ polynomial.
	 \item {\bf Second clustering property} They considered a  cluster of $n_{0}=(k+1)s-1$ particles $z_{1}=\cdots=z_{(k+1)s-1}=Z$.
	 The  Jack polynomial $J_{\lambda^{\beta}_{k,r,s}}^{\left(-{k+1\over r-1}\right)}(n_{0}Z+z_{n_{0}+1}+\cdots+z_{N})$ behaves as
	 $\displaystyle\prod_{i=s(k+1)}^{N}(Z-z_{i})^{(r-1)s+1}$ when each $z_{i}$ tends to $Z$. More specifically, for highest weight Jack polynomials, one has the 
	 following explicit formula:
\begin{eqnarray*}
	J_{\lambda_{k,r,s}^{\beta}}^{\left(-{k+1\over r-1}\right)}(n_{0}Z+z_{n_{0}+1}+\cdots+z_{N})\displaystyle{\mathop=^{(*)}}
\prod_{i=s(k+1)}^{N}(Z-z_{i})^{(r-1)s+1}
	J_{\lambda_{k,r,1}^{\beta-1}}^{\left(-{k+1\over r-1}\right)}(z_{n_{0}+1}+\cdots+z_{N}).
\end{eqnarray*}
	 \item {\bf Third clustering property} It is obtained by forming  $s-1$ clusters of $2k+1$ particles $Z_{1}=z_{1}=\cdots=z_{2k+1}$,$\dots$,{}
	 $Z_{s-1}=z_{(s-2)(2k+1)+1}=\cdots=z_{(s-1)(2k+1)}$. A highest weight Jack $J_{\lambda^{\beta}_{k,r,s}}$ satisfies
	 \begin{multline*}
\displaystyle J_{\lambda^{\beta}_{k,r,s}}^{\left(-{k+1\over r-1}\right)}((2k+1)(Z_{1}+\cdots+Z_{s-1})+z_{(s-1)(2k+1)+1}+\cdots+z_{N}) \\
\displaystyle{\mathop=^{(*)}}
\prod_{1\leq i<j\leq s-1}(Z_{i}-Z_{j})^{k(3r-2)}\displaystyle\prod_{i=1}^{s-1}\prod_{\ell=(s-1)(2k+1)+1}^{N}(Z_{i}-z_{\ell})^{2r-1} \\
\times J_{\lambda^{\beta-s+1}_{k,r}}(z_{(s-1)(2k+1)+1}+\cdots+z_{N}).
	 \end{multline*}
 \end{enumerate}

Notice that authors proved the second conjecture \cite{CDL2017} and, although the proof is certainly more technical, we guess that the third conjecture should be proved following a similar method. The first conjecture seems more difficult because it involves a factor that has not been identified. Our strategy has been to extend these results in three directions :
\begin{enumerate}
	\item To nonsymmetric polynomials in order to use the inductions involved in the Yang-Baxter graph; 
	\item To nonhomogeneous polynomials, because the vanishing properties are  simpler to study in that context;
	\item And to Macdonald polynomials in order to \emph{polarize} the vanishing properties in the sense that, in the factorization formulas, the linear factors are pairwise distinct; this is easier to check algebraically.
	\end{enumerate}

 Nonsymmetric Jack polynomials were introduced by Opdam \cite{Opdam} and were used to describe the polynomial part of the eigenfunctions of the Calogero--Sutherland model on a circle and define families of orthogonal polynomials which are multivariate analogues of Hermite and Laguerre polynomials \cite{BF}. Nonsymmetric Macdonald polynomials were introduced in \cite{Cherednik, Macdonald}. Marshall \cite{Marshall} tied the theories of symmetric and nonsymmetric Macdonald polynomials by adapting the approach of \cite{BF} to the $(q,t)$-deformation.

The nonsymmetric counterpart of the highest weight is the notion of singularity. As we define previously, a polynomial is singular if it belongs to the kernels of the Dunkl operators \cite{Dunkl1}. In \cite{D2005} one of the authors shows that quasistaircase Jack polynomials, under some conditions, are singular. The notion of isotype for Jack polynomials used in the current paper comes from this article. One of our goals in this paper is to adapt the results and proofs developed in \cite{D2005} to the Macdonald case.

Clustering properties of highest weight Macdonald polynomials are consequences of their factorizations into linear factors when submitted to some specializations. The main tool to show the factorization is another kind of Macdonald polynomials: the shifted Macdonald polynomials are the nonhomogeneous version of the symmetric Macdonald polynomials and have the additional property of being defined alternatively by vanishing conditions (see e.g. \cite{Lascoux1,Lascoux2}). The factorizations are obtained by combining homogeneity of highest weight shifted Macdonald polynomials together with their vanishing properties. These factorizations were investigated by the authors in two previous papers: for rectangular partitions in \cite{DL2015} and for general quasistaircases in \cite{CDL2017}.
Numerical evidences suggest that the clustering properties have nonsymmetric analogues. We gave several examples at the end of \cite{CDL2017} that involve quasistaircase partitions. The
purpose of the present paper is to establish links between the notion of highest weight polynomials and singular polynomials. The problem of proving clustering properties of quasistaircase nonsymmetric Macdonald polynomials will be done in a future work.

\subsection{The quasistaircase partition}
The most general \emph{quasistaircase partition}, or simply quasistaircase, is defined as follows. 
\begin{defi}
A quasistaircase partition is a partition of the form
\begin{eqnarray*}
qs(m,n,p,K,s,\nu):=\left[((K-1)m+p)^{\nu},((K-2)m+p)^{n},\dots,p^{n},0^{s}\right],
\end{eqnarray*}
with $1\leq \nu\leq n-1$ and where $a^l$ means that the part $a$ occurs $l$ times. 
\end{defi}
For instance, $qs(5,3,2,4,7,2)=\left[17^2,12^3,7^3,2^3,0^7 \right]$.

Instead of staying with this general definition, we look at a specific description of the quasistaircase. Consider a pair $(m,n)\in\mathbb{N}_{0}^{2}$, with $2\leq n\leq N$
and $\frac{m}{n}\notin\mathbb{N}_{0}$. Moreover, let $d=\gcd\left(m,n\right)$, $\displaystyle{m_{0}=\frac{m}{d}}$, $\displaystyle{n_{0}=\frac{n}{d}}$, $\displaystyle{\kappa_{0}=-\frac{m_{0}}{n_{0}}=-\frac{m}{n}}$, and
$\displaystyle{l=\left\lceil \frac{N-n+1}{n_{0}-1}\right\rceil +1}$. We consider a partition $\tau=\left[n-1,(n_{0}-1)^{l-2},\tau_{l}\right]$ of $N$ with $\tau_{l}\leq n_{0}-2$. Then, we consider quasistaircase partitions of the form $\mu=qs(m_{0},n_{0}-1,m,l-1,n-1,\tau_{l})$. 

For instance, consider $N=10$, $n=6$, $m=4$, and $\tau=(5,2,2,1)$. Then, $n_0=3$, $m_0=2$, $\kappa_0=\frac{-2}{3}$ and $\mu =\left[8,6,6,4,4,0^5\right]$.

Although we have introduced more parameters in our description of the quasistaircase, they are advantageous for the specialization of the parameters $(q,t)$ and the study of the Macdonald and Jack polynomials. Furthermore, from now on, $\mu$ denotes the quasistaircase partition $\mu=qs(m_{0},n_{0}-1,m,l-1,n-1,\tau_{l})$.

\subsection{The symmetric group case}
In this subsection we summarize the work already done for the symmetric group case in order to provide the results that we use as a guide for the Hecke algebra case. We also give a brief overview of those proofs that are relevant. 
\subsubsection{Singularity of quasistaircase nonsymmetric Jack polynomials}
The following proposition summarizes the results obtained in \cite{D2005} related to the symmetric group case.
 \begin{prop}[\cite{D2005}]
There exists an irreducible $\mathcal{S}_{N}$-module of isotype $\tau$ with Ferrers diagram $\mu=qs(m_{0},n_{0}-1,m,l-1,n-1,\tau_{l})$, and constituted with polynomials that are singular for $\kappa=\kappa_{0}$. 
Moreover, for each $\S \in\tab_\tau$, $\alpha(\S )^+= \mu$ and the set $\left\{g_{\S }=J_{\alpha\left(\S \right)}:\ \S \in\tab_\tau\right\}$ satisfies property (SP), where the nonsymmetric Jack polynomials are specialized to $\kappa=\kappa_{0}$.
\end{prop}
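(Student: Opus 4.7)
The plan is to establish three claims in turn: (a) the combinatorial identity $\alpha(\S)^+=\mu$ for every $\S\in\tab_\tau$; (b) the transformation of $\{J_{\alpha(\S)}:\S\in\tab_\tau\}$ under each simple reflection $s_i$ according to the action formulas for $\tau(s_i)$; and (c) the spectral identity $\varsigma_{\alpha(\S)}(i)=1+\kappa_0\ct{S}{i}$ at $\kappa=\kappa_0$, which, together with Lemma \ref{LemmaJM} and the Jucys--Murphy action on an isotype $\tau$ basis, will yield singularity.

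For (a), I would argue by counting: since $v$ is strictly increasing, the value $v(j)$ occurs in $\alpha(\S)$ with multiplicity equal to the number of cells in row $j$ of $\tau=[n-1,(n_0-1)^{l-2},\tau_l]$. Reading the resulting weakly decreasing rearrangement from largest to smallest gives $\tau_l$ copies of $m+(l-2)m_0$, then $n_0-1$ copies of $m+(j-2)m_0$ for $j=l-1,l-2,\ldots,2$, and finally $n-1$ copies of $0$, which is exactly $\mu=qs(m_0,n_0-1,m,l-1,n-1,\tau_l)$.

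For (b), I would proceed case by case on the relative positions of $i$ and $i+1$ in $\S$, invoking the Knop--Sahi recursion: $J_\alpha s_i=J_\alpha$ when $\alpha_i=\alpha_{i+1}$, and $J_\alpha s_i=J_{\alpha s_i}+\kappa(\varsigma_\alpha(i)-\varsigma_\alpha(i+1))^{-1}J_\alpha$ when $\alpha_i<\alpha_{i+1}$. Case I ($\row{S}{i}=\row{S}{i+1}$) is immediate since then $\alpha_i=\alpha_{i+1}$. In cases III and IV the spectral gap at $\kappa_0$ evaluates combinatorially to an expression proportional to $b=\ct{S}{i}-\ct{S}{i+1}$, producing the coefficients $1/b$ and $1-1/b^2$ of the isotype $\tau$ action formulas. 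Once (b) holds, every $\omega_i$ acts on $J_{\alpha(\S)}$ as the scalar $\ct{S}{i}$, so (c) reduces to evaluating $\varsigma_{\alpha(\S)}(i)=\alpha(\S)_i+1+\kappa_0(N-r(\alpha(\S),i))$ and showing it equals $1+\kappa_0\ct{S}{i}$. Because $v$ is strictly increasing, the rank $r(\alpha(\S),i)$ can be read off from the row and column structure of $\S$, and the resulting numerical identity is a direct check after substituting $\kappa_0=-m_0/n_0$, $m=m_0d$, $n=n_0d$. Lemma \ref{LemmaJM} then upgrades this to $J_{\alpha(\S)}\mathcal{D}_i=0$ for every $i$.

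The main obstacle is case II ($\col{S}{i}=\col{S}{i+1}$): the spectral gap vanishes at $\kappa_0$, so the Knop--Sahi recursion develops an apparent pole, while the target relation is $J_{\alpha(\S)}s_i=-J_{\alpha(\S)}$. The cleanest resolution, following the strategy of \cite{D2005}, is to construct the basis along a Yang--Baxter graph rooted at the anchor $h_{\S_0}$ of \eqref{Eqp_S0} and propagated via \eqref{hSgroup}, verifying at each step that the transition remains regular at $\kappa_0$. For the degenerate case II, one then shows that $\alpha(\S)s_i$ falls off the spectral manifold supporting the module at $\kappa_0$, so that $J_{\alpha(\S)s_i}$ specializes to zero with exactly the residue needed to cancel the pole and produce the sign flip. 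This residue bookkeeping is the technical heart of the argument, and it is what ultimately forces the specific choice $\kappa_0=-m_0/n_0$ and the auxiliary function $v$.
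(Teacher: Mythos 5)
Your steps (a) and (c), and the generic cases of (b), do track the paper: Proposition \ref{SPeigv} is exactly your computation of $\varsigma_{\alpha(\S)}(i)=1+\kappa_0\ct{S}{i}$, and the proof of Theorem \ref{Jsing} handles $\row{S}{i}=\row{S}{i+1}$ and $\row{S}{i}<\row{S}{i+1}$ precisely as you do. The genuine gap is your treatment of the case $\col{S}{i}=\col{S}{i+1}$, which you correctly identify as the heart of the matter but then resolve with a sketch whose mechanism is misstated and whose key claim is not proved. First, the degeneracy is not a vanishing spectral gap: at $\kappa_0$ the gap is $\varsigma_{\alpha(\S)}(i)-\varsigma_{\alpha(\S)}(i+1)=\kappa_0\left(\ct{S}{i}-\ct{S}{i+1}\right)=-\kappa_0\neq0$, so the recursion coefficients stay finite; what happens is that the coefficient $1-1/b^2$ of $J_{\alpha(\S)s_i}$ vanishes (here $b=-1$), and the real danger is that $J_{\alpha(\S)s_i}$ itself may fail to be regular at the negative rational $\kappa_0$, in which case ``$0\cdot J_{\alpha(\S)s_i}$'' need not vanish and the sign flip does not follow. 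Your proposed fix --- that $J_{\alpha(\S)s_i}$ ``specializes to zero with exactly the residue needed'' --- is asserted, not shown, and anchoring a Yang--Baxter induction at $h_{\S_0}$ conflates the minimal-degree isotype basis of \eqref{Eqp_S0} with the Jack family $\{J_{\alpha(\S)}\}$, which are different objects in the paper. The paper's own resolution avoids all residue analysis: it sets $f=J_{\alpha(\S)}+J_{\alpha(\S)}s_i$, uses $s_i\omega_is_i=\omega_{i+1}+s_i$ to show $f\omega_i=\ct{S}{i+1}f$, $f\omega_{i+1}=\ct{S}{i}f$, $f\omega_j=\ct{S}{j}f$ otherwise, and concludes $f=0$ because this eigenvalue list is not the content vector of any RSYT of shape $\tau$ --- an argument you do not reproduce.

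A second, related omission: your route to singularity (isotype property plus the spectral identity plus Lemma \ref{LemmaJM}) is legitimate in outline, but it presupposes that every $J_{\alpha(\S)}$ is pole-free at $\kappa=\kappa_0$ and remains a simultaneous $\mathcal{U}_i$-eigenfunction there, and that the family really spans an isotype-$\tau$ module at that specialization. The paper grounds this in representation theory imported from \cite{D2005}: $J_{\mu}$ lies in the module induced from the trivial representation of the stabilizer $\mathcal{S}_{\mu}$, a dominance-order comparison of contents pins down the isotype as $\tau$, and (historically) the singularity of $J_{\mu}$ itself is established directly, after which $s_{ij}\mathcal{D}_is_{ij}=\mathcal{D}_j$ propagates singularity through the $\mathcal{S}_N$-orbit. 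None of this input appears in your proposal, so as written the argument does not close; either supply the Jucys--Murphy ``impossible eigenvalue'' argument for the same-column case together with a regularity statement at $\kappa_0$, or import the induced-module and direct-singularity results of \cite{D2005} as the paper does.
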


The following result describes the relation between $\S \in \tab_\tau$, $\alpha(\S )$, and $\kappa_0$.
\begin{prop}\label{SPeigv}
Consider $\S \in\tab_\tau$ and $k$ such that $1\leq k\leq N$. Then,
$$\alpha(\S )_{k}+\kappa_{0}\left(N-r(\alpha(\S ),k)\right)=\kappa_{0}\ct{S}{k}.$$
\end{prop}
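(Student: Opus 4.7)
The plan is to translate the rank function $r(\alpha(\S),k)$ into row-column data of $\S$, and then reduce the identity to a routine check depending only on $j:=\row{S}{k}$.

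First I would observe that $v$ is strictly increasing on $\{1,\dots,\ell(\tau)\}$ (since $m,m_0>0$), so for $j=\row{S}{k}$ one has $\alpha(\S)_i>\alpha(\S)_k$ iff $\row{S}{i}>j$, and $\alpha(\S)_i=\alpha(\S)_k$ iff $\row{S}{i}=j$. Consequently the first summand in the definition of $r$ counts all cells of $\S$ lying in rows above row $j$, giving $\sum_{j'>j}\tau_{j'}$.

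For the second summand I would invoke the defining property of an RSYT: entries decrease from left to right along each row. Since the entry $k$ sits at column $\col{S}{k}$ of row $j$, the entries in row $j$ with value $\leq k$ occupy exactly the columns $\col{S}{k},\col{S}{k}+1,\dots,\tau_j$, contributing $\tau_j-\col{S}{k}+1$. Adding the two counts and subtracting from $N$ yields
\[
N-r(\alpha(\S),k)=\sum_{j'<j}\tau_{j'}+\col{S}{k}-1.
\]

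Substituting this together with $\ct{S}{k}=\col{S}{k}-j$ into the desired identity, the $\col{S}{k}$ terms cancel and the claim reduces to the purely shape-dependent identity
\[
v(j)=\kappa_0\Bigl(1-j-\sum_{j'<j}\tau_{j'}\Bigr),
\]
which is independent of $k$ and of the particular tableau $\S$. The case $j=1$ is immediate, since both sides vanish. For $j\geq 2$ I would plug in the quasistaircase values $\tau_1=n-1$ and $\tau_{j'}=n_0-1$ for $2\leq j'\leq l-1$ (the only ones relevant for $j\leq l$), so that $\sum_{j'<j}\tau_{j'}=(n-1)+(j-2)(n_0-1)$; a short simplification using $\kappa_0=-m/n=-m_0/n_0$ collapses the right side to $m+(j-2)m_0$, which is $v(j)$.

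The main obstacle is really only the bookkeeping in the second summand of $r$: one must correctly match the French-notation orientation of $\S$ against the reverse (decreasing) convention in rows, and this is what gives the $\tau_j-\col{S}{k}+1$ count cleanly. Once that dictionary is in place, the remaining step is a direct algebraic manipulation relying on the explicit form of the quasistaircase $\tau$ and the value of $\kappa_0$.
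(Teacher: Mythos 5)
Your proof is correct and follows essentially the same route as the paper: you compute $r(\alpha(\S),k)$ by counting the cells in rows above $\row{S}{k}$ plus the cells weakly to the right of $k$ in its row, obtain $N-r(\alpha(\S),k)=\sum_{j'<\row{S}{k}}\tau_{j'}+\col{S}{k}-1$, and then verify the resulting shape-dependent identity using the quasistaircase values of $\tau$ and $\kappa_0=-m_0/n_0$. The only difference is organizational: your single reduction with the uniform case $j\geq 2$ subsumes the paper's separate treatment of the two-row and three-or-more-row cases.
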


\begin{proof}
Fix $k$ and $\S \in\tab_\tau$, and let $\row{S}{k}=i$ and $\col{S}{k}=j$. Every entry $u$ in the $i^{\text{th}}$ row with higher column index (to the right) has $\alpha(\S )_u=\alpha(\S )_k$ with $u<k$. Also, every entry $u$ in the $p^{\text{th}}$ row, with $p>i$, has $\alpha(\S )_u>\alpha(\S )_k$. Thus, $\displaystyle{r(\alpha(\S ),k)=\tau_i-j+1+\sum_{p=i+1}^{\ell(\tau)}\tau_p}$, and $\displaystyle{N-r(\alpha(\S ),k)= j-1-\sum_{p=1}^{i-1} \tau_p}$. 
 
We analyze $\alpha(\S )_k$ depending on $i$. 
If $i=1$, then $\alpha(\S)_{k}=0$ and $N-r\left(\alpha(\S ),k\right) =j-1=\ct{S}{k}$.

Suppose now that $\tau$ has two rows and $i=2$, then $N-r\left(\alpha(\S),k\right)=n-1+j-1$ and $\alpha(\S )_{k}=m$. Therefore,
$$\alpha(\S )_{k}-\frac{m}{n}\left(N-r\left(\alpha(\S )k\right)\right)=\frac{m}{n}\left(2-n-j+n\right)=-\frac{m}{n}\ct{S}{k}.
$$
Finally, suppose $\tau$ has 3 or more rows and $i\geq2$. By construction, $\alpha(\S)_{k}=m+(i-2)m_{0}$. Moreover, $N-r\left(\alpha(\S ),k\right)=j-1+(n-1)+\left(i-2\right)\left(n_{0}-1\right)$. Then, 
\begin{multline*}
\alpha(\S )_{k}-\frac{m_{0}}{n_{0}}\left[N-r\left(\alpha\left(\S \right),k\right)\right] =\frac{m_{0}}{n_{0}}\left[dn_{0}+\left(i-2\right)n_{0}-n-j+2-\left(i-2\right)\left(n_{0}-1\right)\right]= \\=\frac{m_{0}}{n_{0}}\left(i-j\right)=\kappa_{0}\ct{S}{k}.
\end{multline*}
\end{proof}

Note also that if some $J_{\alpha}$ is singular, then the $\mathcal{S}_{N}$-invariant subspace generated by the set $\left\{J_{\alpha}\sigma:\ \sigma\in\mathcal{S}_{N}\right\}$ consists of singular polynomials, because $s_{ij}\mathcal{D}_{i}s_{ij}=\mathcal{D}_{j}$, for all $i\neq j$.

The following statements apply to the singular Jack polynomials $J_{\alpha(\S )}$.
\begin{thm}[{{\cite[Theorem 5.7]{D2005}}}]\label{Jsing}
The polynomials $J_{\alpha(\S )}$ with $\kappa=-\frac{m}{n}$ satisfy property (SP).
\end{thm}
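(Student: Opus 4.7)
\medskip

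\noindent\textbf{Proof proposal.} The plan is to decompose the claim into two pieces: (a) that $\{J_{\alpha(\S)}:\S\in\tab_\tau\}$, specialized at $\kappa=\kappa_0$, is a well-defined basis of a space of polynomials of isotype $\tau$ for $\mathcal{S}_N$; and (b) that each $J_{\alpha(\S)}$ is singular at $\kappa=\kappa_0$. Piece (b) will be almost immediate once (a) is in hand, via Lemma~\ref{LemmaJM}.

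\medskip

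\noindent\textit{Step 1 (well-posedness and spectra).} I would first verify that at $\kappa=\kappa_0$ each $J_{\alpha(\S)}$ is well-defined as a simultaneous $\{\mathcal{U}_k\}$-eigenfunction. By Proposition~\ref{SPeigv}, the spectral vector reduces to $\varsigma_{\alpha(\S)}(k)=1+\kappa_0\,\ct{S}{k}$. Since the content vector uniquely determines $\S\in\tab_\tau$, the tuples $(\varsigma_{\alpha(\S)}(1),\dots,\varsigma_{\alpha(\S)}(N))$ are pairwise distinct as $\S$ ranges over $\tab_\tau$; this separates the eigenspaces and makes each $J_{\alpha(\S)}$ uniquely defined by the eigenvalues and the leading-term normalization.

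\medskip

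\noindent\textit{Step 2 (isotype).} The heart of the proof is showing that the polynomials $\{J_{\alpha(\S)}\}$ transform under $\mathcal{S}_N$ exactly by the action formulas for $\tau(s_i)$ of Subsection~2.3. I would compare the Knop--Sahi recursion for $J_\beta s_i$ (specialised to $\beta=\alpha(\S)$ and $\kappa=\kappa_0$) to the four action formulas, using Proposition~\ref{SPeigv} throughout to rewrite $\varsigma_{\alpha(\S)}(i)-\varsigma_{\alpha(\S)}(i+1)=\kappa_0\bigl(\ct{S}{i}-\ct{S}{i+1}\bigr)=\kappa_0 b$, so the spectral coefficient $\kappa/(\varsigma_\alpha(i)-\varsigma_\alpha(i+1))$ collapses to $1/b$. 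When $\row{S}{i}=\row{S}{i+1}$ the two entries of $\alpha(\S)$ at positions $i,i+1$ are equal and the identity $J_{\alpha(\S)}s_i=J_{\alpha(\S)}$ is immediate; when $\row{S}{i}<\row{S}{i+1}$ and $\col{S}{i}>\col{S}{i+1}$ the sequence $\alpha(\S)s_i$ is exactly $\alpha(\S^{(i)})$, and the Knop--Sahi formula gives the case (III) coefficients. The delicate cases are (II), $\col{S}{i}=\col{S}{i+1}$, and (IV), $\ct{S}{i}-\ct{S}{i+1}\leq -2$: in both, $\alpha(\S)s_i$ is not of the form $\alpha(\S')$ for any $\S'\in\tab_\tau$. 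To handle these, I would exploit the structure of the Yang--Baxter graph at $\kappa=\kappa_0$, where exactly such ``forbidden'' arrows correspond to resonances; the polynomial $J_{\alpha(\S)s_i}$ specialises to a scalar multiple of $J_{\alpha(\S)}$ (in case II) or to the combination required by case~(IV). This identification is the main technical obstacle, because it requires controlling the behaviour of nonsymmetric Jack polynomials along paths of the Yang--Baxter graph that pass through singular points of the denominators: I would combine an induction on $\inv(\S)$ with the quadratic relation $\tau(s_i)^2=I-\ldots$ and with the explicit formula \eqref{Eqp_S0}--\eqref{hSgroup} for the canonical minimal-degree isotype $\tau$ basis, which serves as a concrete model to match coefficients.

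\medskip

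\noindent\textit{Step 3 (singularity).} Once Step~2 is established, the Jucys--Murphy element $\omega_i$ acts on any space of isotype $\tau$ by $g_\S\omega_i=\ct{S}{i}\,g_\S$, because in any irreducible $\mathcal{S}_N$-module labelled by $\tau$ the JM elements are diagonal in the RSYT basis with eigenvalues equal to the contents. Combined with the $\mathcal{U}_i$-eigenvalue $1+\kappa_0\,\ct{S}{i}$ of Step~1, we obtain
\[
J_{\alpha(\S)}\mathcal{U}_i=(1+\kappa_0\,\ct{S}{i})\,J_{\alpha(\S)}=J_{\alpha(\S)}+\kappa_0\,J_{\alpha(\S)}\omega_i,
\]
which by Lemma~\ref{LemmaJM} is equivalent to $J_{\alpha(\S)}\mathcal{D}_i=0$ for all $i$. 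Hence each $J_{\alpha(\S)}$ is singular at $\kappa=\kappa_0$, and the basis satisfies property (SP). The hardest part is clearly the isotype check in Step~2, specifically controlling the resonant/forbidden arrows of the Yang--Baxter graph, whereas Steps~1 and~3 are essentially algebraic consequences of the spectral identity of Proposition~\ref{SPeigv}.
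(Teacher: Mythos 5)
There is a genuine gap, and it sits exactly where you say the difficulty is: the isotype verification in your Step 2 for the cases where $\alpha(\S)s_i$ is not of the form $\alpha(\S')$, above all the case $\col{S}{i}=\col{S}{i+1}$, where one must show $J_{\alpha(\S)}s_i=-J_{\alpha(\S)}$. Your proposed route --- that at $\kappa=\kappa_0$ the ``forbidden'' arrows of the Yang--Baxter graph are resonant and $J_{\alpha(\S)s_i}$ specialises to a scalar multiple of $J_{\alpha(\S)}$, to be checked by an induction on $\inv(\S)$ against the minimal-degree basis $h_{\S}$ --- is only a plan, not an argument: the composition $\alpha(\S)s_i$ corresponds to no RSYT, the specialised polynomial $J_{\alpha(\S)s_i}$ need not even exist at $\kappa_0$ (the coefficients of nonsymmetric Jack polynomials can have poles at negative rational $\kappa$), and nothing in the proposal identifies what it would specialise to. The paper closes this case with two ingredients you do not use: (a) a representation-theoretic fact (from \cite{D2005}, via inducing the trivial representation of the stabilizer $\mathcal{S}_\mu$ up to $\mathcal{S}_N$) that $J_\mu$, hence every polynomial obtained from it by the group action, lies in a module of isotype $\tau$; and (b) a Jucys--Murphy eigenvalue argument: setting $f=J_{\alpha(\S)}+J_{\alpha(\S)}s_i$ and using $s_i\omega_is_i=\omega_{i+1}+s_i$, one computes $f\omega_j=\ct{S}{j}f$ for $j\neq i,i+1$ together with $f\omega_i=\ct{S}{i+1}f$ and $f\omega_{i+1}=\ct{S}{i}f$, an eigenvalue pattern impossible in an isotype-$\tau$ module, forcing $f=0$. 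Without (a), or some substitute controlling which isotypes can occur, neither this trick nor your resonance heuristic gets off the ground.

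Two smaller points. First, your Step 1 glosses over well-definedness: distinctness of the spectral vectors $\bigl(1+\kappa_0\ct{S}{k}\bigr)_k$ among the $\alpha(\S)$, $\S\in\tab_\tau$, does not by itself show that $J_{\alpha(\S)}$ is regular at $\kappa=\kappa_0$ (the paper even remarks that at negative rational $\kappa$ spectral vectors may fail to separate compositions in general); in \cite{D2005} existence comes from the direct construction of the singular $J_\mu$ and its orbit. Second, your Step 3 is sound \emph{conditional} on Step 2, and it is in fact the route the paper mentions as a forthcoming alternative (isotype first, then singularity via Lemma \ref{LemmaJM}); the paper's own logic instead imports singularity of $J_\mu$ from \cite{D2005} and propagates it using $s_{ij}\mathcal{D}_is_{ij}=\mathcal{D}_j$. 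So the overall architecture is reasonable, but as written the proposal leaves the decisive step --- the case (II) identity and the isotype property at the specialised parameter --- unproved.
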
 
\begin{proof}
The first step is to prove that $J_{\mu}$ is of isotype $\tau$ by means of a representation-theoretic argument about inducing the trivial representation of $\mathcal{S}_{\mu}$ (the stabilizer of $\mu$) up to $\mathcal{S}_{N}$ \cite[Theorem 5.2, Proposition 5.3]{D2005}\cite{Macd1995}. This implies that any polynomial derived from $J_{\mu}$ by group action is of the same isotype.

It is clear that if $\row{S}{i}=\row{S}{i+1}$, then $\alpha(\S)_{i}=\alpha(\S)_{i+1}$ and $J_{\alpha}s_{i}=J_{\alpha}$. Suppose $\alpha(\S)_{i}<\alpha(\S)_{i+1}$, which is equivalent to $\row{S}{i}<\row{S}{i+1}$. Then, $\S ^{(i)}\in\tab_\tau$, and
\begin{eqnarray*}
\S \tau\left( s_{i}\right) =\S^{(i)}+\frac{1}{\ct{S}{i} - \ct{S}{i+1}}\S .
\end{eqnarray*}

The spectral vector satisfies $\zeta_{\alpha(\S)}(i)=\alpha(\S)_{i}+1+\kappa_{0}\left( N-r\left(\alpha(\S) ,i\right)\right)=1+\kappa_{0}\ct{S}{i}$, thus
\begin{eqnarray*}
J_{\alpha\left( \S \right)}s_{i}=J_{\alpha\left(\S \right)s_{i}}+\frac{\kappa_{0}}{\kappa_{0}\ct{S}{i}-\kappa_{0}\ct{S}{i+1}}J_{\alpha\left(\S \right)}.
\end{eqnarray*}
Notice that $\alpha\left(\S^{(i)}\right) =\alpha(\S)s_{i}$ by definition. It remains to show that $J_{\alpha(\S)}s_{i}=-J_{\alpha(\S)}$ when $\col{S}{i}=\col{S}{i+1}$. By the action of the Jucys-Murphy elements, $J_{\alpha(\S)}\omega_{i}=\ct{S}{i}J_{\alpha(\S)}$. Consider the polynomial $f=J_{\alpha(\S)}+J_{\alpha(\S)}s_{i}$ and the relation $s_{i}\omega_{i}s_{i}=\omega_{i+1}+s_{i}$; then 
\begin{eqnarray*}
f\omega_{i}=\ct{S}{i} J_{\alpha(\S)}+\ct{S}{i+1} J_{\alpha(\S) }s_{i}+J_{\alpha(\S)}=\ct{S}{i+1} f,
\end{eqnarray*}
and similarly $f\omega_{i+1}=\ct{S}{i}f$. Also $f\omega_{j}=\ct{S}{j}f$, for $j<i$ or $j>i+1$ (by the commutation $s_{i}\omega_{j}=\omega_{j}s_{i}$). Then $f$ is of isotype $\tau$ but has impossible eigenvalues for each $\omega_{j}$, thus $f=0$.
\end{proof}
\begin{remark}
Historically, the method of constructing singular polynomials was to show directly that $J_{\mu}$ is singular for certain partitions $\lambda$ and then to analyze the $\mathcal{S}_{N}$-invariant subspace generated by $J_{\mu}$. It turned out that this subspace corresponds to an irreducible representation of $\tau$ and this led to the correspondence between $J_{\alpha}$ and RSYTs of shape $\tau$, where $\alpha$ is a reverse lattice permutation of $\mu$. 

Part of the proof of this fact concerns the property $J_{\alpha(\S)}s_{i}=-J_{\alpha(\S)}$ when $\col{S}{i}=\col{S}{i+1}$, and we sketch it in the proof of Theorem \ref{Jsing}. More details of the argument establishing the $\tau$-isotype property can be found in the proof of Theorem \ref{Hsingp}. In recent work, currently a paper under preparation, there is a direct proof of the isotype property which is used to prove singularity by means of the Jucys-Murphy elements in Lemma \ref{LemmaJM} and in Proposition \ref{singPhi}.
\end{remark}

Next we look at the implications for $\rho.$ We describe sets of Jack polynomials $\left\{J_{\beta,\S }\right\}$ which span irreducible $\mathcal{S}_{N}$-invariant subspaces of $\mathcal{P}_{\tau}$. We need another definition first.
\begin{defi}
For $\alpha\in\comps$ and $\S \in\tab_\tau$ define $\left\lfloor \alpha,\S \right\rfloor $ to be the filling of the Ferrers diagram of $\tau$ obtained by replacing $i$ by $\alpha_{i}^{+}$ in $\S $, for all $i$. 
This does not define a one to one correspondence, and so we consider the following set $\mathcal{T}(\alpha,\S)=\left\{\left(\beta,\S^{\prime}\right):\left\lfloor \beta,\S^{\prime}\right\rfloor=\left\lfloor \alpha,\S \right\rfloor \right\}$.
\end{defi}
For instance, take $\tau=(3,2)$, $\alpha=(1,4,2,0,3)$ and 
$\S =\scalebox{0.7}{\begin{tikzpicture}
\draw (0,0) rectangle (0.5,0.5);
\draw (0.5,0) rectangle (1,0.5);
\draw (1,0) rectangle (1.5,0.5);
\draw (0,0.5) rectangle (0.5,1);
\draw (0.5,0.5) rectangle (1,1);
\node at (0.25,0.25) {5};
\node at (0.75,0.25) {4};
\node at (1.25,0.25) {1};
\node at (0.25,0.75) {3};
\node at (0.75,0.75) {2};
\end{tikzpicture}}$. Then,
$\left\lfloor \alpha,\S \right\rfloor =
\scalebox{0.7}{\begin{tikzpicture}
\draw (0,0) rectangle (0.5,0.5);
\draw (0.5,0) rectangle (1,0.5);
\draw (1,0) rectangle (1.5,0.5);
\draw (0,0.5) rectangle (0.5,1);
\draw (0.5,0.5) rectangle (1,1);
\node at (0.25,0.25) {0};
\node at (0.75,0.25) {1};
\node at (1.25,0.25) {4};
\node at (0.25,0.75) {2};
\node at (0.75,0.75) {3};
\end{tikzpicture}}\ $.

The definition of $\mathcal{T}(\alpha,\S)$ is motivated by the following result. 
\begin{prop}[\cite{DL2011,DL2012}]
$J_{\alpha,\S }$ can be transformed to $J_{\beta ,\S^{\prime}}$ by a sequence of maps of the form $f\rightarrow af+bfs_{i}$ if and only if $\left\lfloor \alpha,\S \right\rfloor =\left\lfloor \beta,\S^{\prime}\right\rfloor $.
Furthermore, it is known that in this case, the spectral vectors of $\left(\alpha,\S \right)$ and $\left(\beta,\S ^{\prime}\right)$ are permutations of each other.
\end{prop}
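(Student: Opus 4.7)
The plan is to work within the Yang--Baxter graph for vector-valued nonsymmetric Jack polynomials, whose vertices are the pairs $(\alpha,\S)\in\comps\times\tab_\tau$ and whose edges record the action of simple transpositions. The main technical input is the explicit formula for $J_{\alpha,\S}\mathbf{t}_{s_i}$, obtained by combining the scalar recursion $J_\alpha s_i=J_{\alpha s_i}+\frac{\kappa}{\varsigma_\alpha(i)-\varsigma_\alpha(i+1)}J_\alpha$ (valid when $\alpha_i<\alpha_{i+1}$) with the tableau action formulas (I)--(IV) applied to $\S$. This yields four cases governed by the comparison of $\alpha_i$ with $\alpha_{i+1}$ and by whether the entries $r(\alpha,i),r(\alpha,i+1)$ share a row or a column of $\S$; in each case the output is a linear combination of at most two joint $\{\mathbf{U}_j\}$-eigenfunctions that can be isolated by an appropriate choice of $a,b$ in a map $f\mapsto af+bfs_i$.

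For the implication $\Leftarrow$, I would first observe two structural facts about the filling: $\lfloor\alpha s_i,\S\rfloor=\lfloor\alpha,\S\rfloor$ holds always (since $(\alpha s_i)^+=\alpha^+$), while $\lfloor\alpha,\S^{(i)}\rfloor=\lfloor\alpha,\S\rfloor$ holds exactly when $\alpha^+_i=\alpha^+_{i+1}$. Given $(\beta,\S')\in\mathcal{T}(\alpha,\S)$, I would then construct a connecting sequence of moves $f\mapsto af+bfs_i$ by inducting on a length function measuring the distance from a canonical pair in the class. The two elementary moves are: (i) rearrange $\alpha$ to $\alpha s_i$ while keeping $\S$ fixed, extracted from $aJ_{\alpha,\S}+bJ_{\alpha,\S}\mathbf{t}_{s_i}$ via the scalar recursion whenever $\alpha_i<\alpha_{i+1}$; and (ii) when $\alpha^+_i=\alpha^+_{i+1}$, interchange $\S$ with $\S^{(i)}$ via the analogous extraction from cases (III)--(IV).

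For the implication $\Rightarrow$, the operations $f\mapsto af+bfs_i$ stay inside the $\mathcal{S}_N$-invariant span generated by $J_{\alpha,\S}$. The commutation $s_i\mathbf{U}_is_i=\mathbf{U}_{i+1}+\kappa\mathbf{t}_{s_i}$ implies that the multiset $\{\varsigma_{\alpha,\S}(j):1\le j\le N\}$ of joint $\{\mathbf{U}_j\}$-eigenvalues is an invariant of this span, so the spectral vectors of any two reachable pairs are permutations of each other, which is already the ``Furthermore'' clause. Moreover, the combinatorial identity $\varsigma_{\alpha,\S}(i)-1=\alpha^+_{r(\alpha,i)}+\kappa\,\ct{S}{r(\alpha,i)}$ identifies this multiset with the multiset of (value, content) pairs read from $\lfloor\alpha,\S\rfloor$, and the RSYT constraint on $\S,\S'$ allows one to recover the full filling from this multiset, giving $\lfloor\beta,\S'\rfloor=\lfloor\alpha,\S\rfloor$.

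The main obstacle will be the connectivity step in the forward direction when $\alpha^+$ has repeated parts: many tableaux $\S'\in\tab_\tau$ then satisfy $\lfloor\alpha,\S'\rfloor=\lfloor\alpha,\S\rfloor$, and one must verify that the moves of type (ii) --- including the degenerate subcases where $\varsigma_\alpha(i)=\varsigma_\alpha(i+1)$ or $\S^{(i)}\notin\tab_\tau$ --- admit enough flexibility to realize every such rearrangement. This combinatorial bookkeeping is the technical heart of the arguments in \cite{DL2011,DL2012}.
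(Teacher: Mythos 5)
A preliminary remark: the paper itself gives no proof of this proposition --- it is imported from \cite{DL2011,DL2012} --- so your attempt can only be judged against those sources, not against an in-text argument. Judged as a proof, it has a genuine gap: the implication you call ``$\Leftarrow$'' is the entire content of the statement, and you do not establish it. After sorting $\alpha$ to $\alpha^{+}$ by your moves of type (i), what remains is to connect any two tableaux $\S,\S'\in\tab_\tau$ with $\lfloor\alpha^{+},\S\rfloor=\lfloor\alpha^{+},\S'\rfloor$ using only the swaps the eigenfunction recursions actually make available, namely $\S\mapsto\S^{(j)}$ for those $j$ with $\alpha^{+}_{j}=\alpha^{+}_{j+1}$ and with $j,j+1$ in different rows and different columns of $\S$ (in cases (I) and (II) the map $f\mapsto af+bfs_i$ yields no new eigenfunction, so those swaps are unavailable; note also that the admissible swap is $\S^{(r(\alpha,i))}$ when $\alpha_i=\alpha_{i+1}$, not $\S^{(i)}$, unless $\alpha$ is already a partition, and your condition ``$\alpha^{+}_{i}=\alpha^{+}_{i+1}$'' conflates these indexings). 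Proving that this restricted swap graph is connected on the set of RSYTs with a fixed filling --- e.g.\ by reducing to the blocks of equal values of $\alpha^{+}$, whose boxes form skew shapes, and showing any two standard fillings of a skew shape are linked by such adjacent swaps --- is precisely the combinatorial substance of the proposition, and your text explicitly defers it (``the main obstacle \dots the technical heart of the arguments in \cite{DL2011,DL2012}''). A proof that delegates its central step to the references it is meant to replace is not a proof.

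The ``$\Rightarrow$'' direction and the ``Furthermore'' clause are essentially right, but one claim is asserted without argument: that the multiset $\left\{\alpha^{+}_{j}+\kappa\,\ct{S}{j}\right\}_{j=1}^{N}$ determines the filling $\lfloor\alpha,\S\rfloor$. This is true, but it needs the observation that $\lfloor\alpha,\S\rfloor$ is weakly increasing along rows and up columns (because $\S$ is decreasing and $\alpha^{+}$ is nonincreasing), hence weakly increasing along each diagonal of constant content; since $\kappa$ is generic, each value $\alpha^{+}_{j}+\kappa\,\ct{S}{j}$ recovers the pair $\left(\alpha^{+}_{j},\ct{S}{j}\right)$, and the per-content multisets then pin down the filling uniquely. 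You should also spell out the module argument showing that any eigenfunction reached from $J_{\alpha,\S}$ by maps $f\mapsto af+bfs_i$ has spectral vector a permutation of $\varsigma_{\alpha,\S}$ (your appeal to the commutation between $\mathbf t_{s_i}$ and the $\mathbf U_j$ is the right tool, but as written it is only a one-line gesture). With those two points filled in, the forward direction stands; the unresolved gap is entirely in the backward, connectivity, direction.
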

Next result identifies when there is a unique nonzero symmetric polynomial in terms of $\left\lfloor \alpha,\S \right\rfloor$.
\begin{thm}[\cite{DL2011}]
For $\left(\alpha,\S \right)\in\mathbb{N}_{0}^{N}\times\tab_\tau$, the $\mathrm{span}\left\{J_{\beta,\S ^{\prime}}:\left(\beta,\S ^{\prime}\right)\in\mathcal{T}\left(\alpha,\S \right)\right\}$ contains a unique nonzero symmetric polynomial if and only if $\left\lfloor \alpha,\S \right\rfloor $ is a column-strict tableau.
\end{thm}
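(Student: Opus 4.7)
\emph{Proof plan.} The plan is to compute $\dim V^{\mathcal{S}_N}$, where $V := \mathrm{span}\{J_{\beta,\S'} : (\beta,\S')\in\mathcal{T}(\alpha,\S)\}$, by combining Theorem~\ref{SGsymm} with Young's rule, and match it against the column-strictness of $T := \lfloor\alpha,\S\rfloor$. First, the preceding proposition shows that the moves $f\mapsto af + bfs_i$ connect $J_{\beta,\S'}$ to $J_{\beta',\S''}$ exactly when $\lfloor\beta',\S''\rfloor = \lfloor\beta,\S'\rfloor$, so $V$ is invariant under the diagonal action $\mathbf{t}_\sigma = \sigma\otimes\tau(\sigma)$ of $\mathcal{S}_N$ and does not mix with analogous spans for other fillings.

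By Theorem~\ref{SGsymm}, every symmetric element of $\mathcal{P}_\tau$ is of the form $\sum_\S \frac{1}{\langle\S,\S\rangle_1}g_\S\otimes\S$ with $\{g_\S\}$ an isotype-$\tau$ basis of a subspace of $\mathcal{P}$; hence $\dim V^{\mathcal{S}_N}$ equals the multiplicity of the irreducible $V_\tau$ inside the scalar $\mathcal{S}_N$-module underlying $V$. Setting $\mu := \alpha^+$, the scalar span $\mathrm{span}\{J_\gamma : \gamma^+ = \mu\}$ is $\mathcal{S}_N$-isomorphic to the permutation module $M^\mu := \mathrm{Ind}_{\mathcal{S}_\mu}^{\mathcal{S}_N}\mathbf{1}$, and Young's rule yields $\dim(M^\mu\otimes V_\tau)^{\mathcal{S}_N} = K_{\tau,\mu}$, the Kostka number counting column-strict tableaux of shape $\tau$ and content $\mu$. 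Because the $\mathcal{S}_N$-action preserves the filling, the aggregate splits as a direct sum indexed by fillings:
\[
K_{\tau,\mu} \;=\; \sum_T \dim V(T)^{\mathcal{S}_N},
\]
with $V(T)$ the analogue of $V$ for any representative $(\alpha',\S')$ with $\lfloor\alpha',\S'\rfloor = T$.

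When $T$ is \emph{not} column-strict, some column contains two equal entries; choosing an RSYT $\S'$ so that these cells carry a consecutive pair $i, i+1$, the associated $\alpha'$ satisfies $\alpha'_i = \alpha'_{i+1}$, whence $J_{\alpha'}s_i = J_{\alpha'}$. Combined with Theorem~\ref{SGsymm} Case~(II), which demands antisymmetry in $x_i,x_{i+1}$ of the $\S'$-coefficient of any symmetric polynomial, this forces $V(T)^{\mathcal{S}_N} = 0$. Since each column-strict $T$ is realised as $\lfloor\alpha',\S'\rfloor$ for some pair (by placing labels $1,2,\ldots,N$ on the cells of $T$ in decreasing order of value, compatibly with the RSYT conditions), and the total number of column-strict tableaux is $K_{\tau,\mu}$, the identity above reduces to a sum of nonnegative summands matching $K_{\tau,\mu}$, each necessarily equal to $1$. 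The main obstacle is the filling-preservation and multiplicity alignment: the decomposition of $(M^\mu \otimes V_\tau)^{\mathcal{S}_N}$ over fillings must be rigorously tied to the $\mathcal{T}$-classes, which rests on careful analysis of the $\mathcal{S}_N$-action via the preceding proposition.
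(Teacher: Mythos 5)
Since the paper quotes this theorem from \cite{DL2011} without reproducing a proof, I can only judge your counting strategy on its own terms; it is an appealing route, but as written it has genuine gaps at its three load-bearing points. First, the identification of $V:=\mathrm{span}\{J_{\beta,\S'}:(\beta,\S')\in\mathcal{T}(\alpha,\S)\}$ (and of the aggregate over all fillings of content $\mu=\alpha^+$) with a submodule of $M^\mu\otimes V_\tau$ is asserted, not proved: the vector-valued Jack polynomials are not products of scalar Jacks with basis tableaux, and even the scalar statement ``$\mathrm{span}\{J_\gamma:\gamma^+=\mu\}\cong M^\mu$'' needs an argument, because each $J_\gamma$ contains lower terms $x^\delta$ with $\delta^+\prec\mu$, so this span is not the monomial span; one must pass through the $\vartriangleright$-filtration and use complete reducibility at generic $\kappa$ to transfer the isomorphism, and then verify that the resulting trivial-isotypic count really distributes over the $\mathcal{T}$-classes. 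Second, the crucial vanishing $V(T)^{\mathcal{S}_N}=0$ for non-column-strict $T$ is only gestured at: the relation $J_{\alpha'}s_i=J_{\alpha'}$ you invoke is the scalar recursion and says nothing about the $\S'$-component of an element of $V(T)$; what is actually needed is the behaviour of the vector-valued polynomial $J_{\alpha',\S'}$ under $\mathbf{t}_{s_i}$ when $\alpha'_i=\alpha'_{i+1}$ and the labels $i,i+1$ sit in one column of $\S'$ (it is antisymmetric in $x_i,x_{i+1}$ there), followed by an argument that this antisymmetry, propagated through the whole class span, is incompatible with case (II) of Theorem \ref{SGsymm}. None of this is carried out, and it is precisely the nontrivial content of the theorem.

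Third, even granting the two previous points, the final step ``each necessarily equal to $1$'' is a non sequitur: from $\sum_{T\ \mathrm{column\text{-}strict}}\dim V(T)^{\mathcal{S}_N}=K$ together with the fact that there are exactly $K$ column-strict fillings you cannot conclude that every summand equals $1$ (the pattern $2+0$ also sums to $2$). You need, per class, either the upper bound $\dim V(T)^{\mathcal{S}_N}\le 1$ (which is the uniqueness assertion of the theorem and is never addressed in your proposal) or the lower bound $\dim V(T)^{\mathcal{S}_N}\ge 1$ (existence, e.g.\ by showing that the $\mathcal{S}_N$-symmetrization of $J_{\mu,\S'}$ has a nonvanishing coefficient on a suitable leading term $x^{\mu}\otimes\S''$). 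Two smaller points: the Kostka number should be taken with content equal to the multiplicity vector of the distinct parts of $\mu$ (zeros included), which is harmless only because Kostka numbers are invariant under permuting the content; and your claims that every column-strict $T$ is realizable as $\lfloor\alpha',\S'\rfloor$ and that the two equal entries in a column can be given consecutive labels are true but deserve the short explicit constructions (assign labels column by column, top to bottom within each value class) rather than a parenthesis. Until the three main gaps are filled, the proposal is a plan rather than a proof.
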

For each shape $\tau$, consider the unique minimal column-strict tableau of shape $\tau$ obtained by filling the $i^{\text{th}}$ row with the entries $i-1$ for each $i$. Then, the partition $\lambda=\left(\left(l-1\right)^{\tau_{l}},\left(l-2\right)^{\tau_{l-1}},\ldots,1^{\tau_{2}},0^{\tau_{1}}\right)$ and the inv-minimal RSYT $\S _{1}$ is the unique $\S$ such that $\left\lfloor \lambda ,\S\right\rfloor $ equals this tableau. 
For example, take $\tau=\left(5,2,2,1\right)$. Then, $\lambda=\left(3,2,2,1,1,0^{5}\right)$, 
\begin{eqnarray*}
\S _{1}=
\scalebox{0.7}{
 \begin{tikzpicture}
\draw (0,0) rectangle (0.5,0.5);
\draw (0.5,0) rectangle (1,0.5);
\draw (1,0) rectangle (1.5,0.5);
\draw (1.5,0) rectangle (2,0.5);
\draw (2,0) rectangle (2.5,0.5);
\draw (0,0.5) rectangle (0.5,1);
\draw (0.5,0.5) rectangle (1,1);
\draw (0,1) rectangle (0.5,1.5);
\draw (0.5,1) rectangle (1,1.5);
\draw (0,1.5) rectangle (0.5,2);
\node at (0.25,0.25) {10};
\node at (0.75,0.25) {9};
\node at (1.25,0.25) {8};
\node at (1.75,0.25) {7};
\node at (0.25,0.75) {5};
\node at (2.25,0.25) {6};
\node at (0.75,0.75) {4};
\node at (0.25,1.25) {3};
\node at (0.75,1.25) {2};
\node at (0.25,1.75) {1};
\end{tikzpicture}} \hspace{0.5cm} \text{ and } \hspace{0.5cm}
\left\lfloor \lambda,\S _{1}\right\rfloor =%
\scalebox{0.7}{
 \begin{tikzpicture}
\draw (0,0) rectangle (0.5,0.5);
\draw (0.5,0) rectangle (1,0.5);
\draw (1,0) rectangle (1.5,0.5);
\draw (1.5,0) rectangle (2,0.5);
\draw (2,0) rectangle (2.5,0.5);
\draw (0,0.5) rectangle (0.5,1);
\draw (0.5,0.5) rectangle (1,1);
\draw (0,1) rectangle (0.5,1.5);
\draw (0.5,1) rectangle (1,1.5);
\draw (0,1.5) rectangle (0.5,2);
\node at (0.25,0.25) {0};
\node at (0.75,0.25) {0};
\node at (1.25,0.25) {0};
\node at (1.75,0.25) {0};
\node at (0.25,0.75) {1};
\node at (2.25,0.25) {0};
\node at (0.75,0.75) {1};
\node at (0.25,1.25) {2};
\node at (0.75,1.25) {2};
\node at (0.25,1.75) {3};
\end{tikzpicture}}.
\end{eqnarray*}

\subsubsection{From singular nonsymmetric Jack polynomials to highest weight symmetric Jack polynomials}\label{SubSubSectHWSJP}

The fact that $\left\{ J_{\alpha,\S ^{\prime}}\right\}$ is a basis for $\mathcal{P}_{\tau}$ and that there is a unique symmetric Jack polynomial of degree $\sum_i\left(i-1\right)\tau_{i}$ implies that any symmetric polynomial of that degree is a scalar multiple of it. This minimal symmetric polynomial is constructed as follows. Consider the set of minimal polynomials of isotype $\tau$, $\left\{h_{\S }:\S \in\tab_\tau\right\}$, described in \eqref{Eqp_S0}. Then, by the remark above, the polynomial $\displaystyle{F_{\tau}:=\sum_{\S \in\tab_\tau}\frac{1}{\left\langle \S ,\S \right\rangle _{0}}h_{\S }\otimes \S }$ is the symmetric Jack polynomial of minimal degree. Since $\displaystyle{\sum_{i=1}^{N}F_{\tau}\mathbf D_{i}}$ is symmetric of lower degree, the sum vanishes. By \cite[Thm. 5.22]{DL2011}, $\displaystyle{F_{\tau}=\sum_{\alpha^{+}=\lambda}c_{\alpha}J_{\alpha,\S _{1}}}$, with $\lambda=[(l-1)^{\tau_{l}},(l-2)^{n_{0}+1},\dots,1^{n_{0}+1},0^{(n-1}]$. One of the terms in $F_{\tau}$ is a nonzero multiple of $x^{\lambda}\otimes \S _{1}$ and there is no term of the form $x^{\lambda}\otimes \S ^{\prime}$ with $\S ^{\prime}\neq \S _{1}$ because the other terms in the $J_{\alpha,\S _{1}}$ are of the form $x^{\beta}\otimes \S ^{\prime}$ with $\lambda\vartriangleright\beta$. This implies that one of the terms of $h_{\S _{1}}$ is indeed $x^{\lambda}$. 

Now, we specialize to the hypotheses of Theorem \ref{Jsing} and apply Proposition \ref{SPcomm}. Let $\tau$ be a partition associated with a set of singular polynomials for $\kappa=\kappa_{0}$. By \eqref{defJsing} the leading term of $J_{\alpha\left(\S _{1}\right)}$ is $x^{\mu}$, where $\mu=qs( m_{0},n_{0}-1,m, l-1, n-1, \tau_l )$.
This implies that $\displaystyle{F_{\tau}\rho=\sum_{\S \in\tab_\tau}\frac{1}{\left\langle \S ,\S \right\rangle _{0}}h_{\S }J_{\alpha\left(\S \right)}}$ has leading term $x^{\mu+\lambda}$ (with a nonzero coefficient). Furthermore, $\sum_i F_{\tau}\rho\mathbf D_{i}=0$, for $\kappa =\kappa_{0}$. Thus, $F_{\tau}\rho$ is a highest weight symmetric Jack polynomial \cite{JL}. 
Note that $\mu+\lambda=qs(m_{0}+1,n_{0}-1,m+1,l-1,n-1,\tau_{l})$.

\subsection{The Hecke algebra case}\label{SubHeckealgebra}
We proceed to analyze the Hecke algebra case. To do so, for the rest of the discussion we assume the following result, whose proof is included in a forthcoming paper \cite{CD2019}.
\begin{thm}\label{Conjecture}
 Let $m = dm_0$ and $ n= dn_0 $, for some $d \geq 1$, and $g=\gcd(m_0,n_0)$. Then, the polynomial
$M_{qs ( m_0 ,n_0-1 ,m,l- 1,n -1,\tau_l )}$ is singular for $\displaystyle{(q,t) =(z u^{ -{n_0\over g}} , u^{m_0\over g})}$, where $z^{m_0\over g}$ is a $g^{th}$ root of unity. Equivalently, $\displaystyle{z = e^{\frac{2\pi ik}{m_0}}}$, where $\gcd(k,g)=1$.
\end{thm}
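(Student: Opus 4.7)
My plan is to apply Proposition \ref{singPhi}, which reduces the singularity of $M_\mu$ (with $\mu=qs(m_0,n_0-1,m,l-1,n-1,\tau_l)$) to the assertion $M_\mu\xi_i = M_\mu\phi_i$ for $1\le i\le N$. Since $M_\mu$ is already a simultaneous eigenfunction of the Cherednik operators with spectral values $\zeta_\mu(i)=q^{\mu_i}t^{N-i}$, I need to show that $M_\mu$ is also a simultaneous $\phi_i$-eigenfunction with the \emph{same} eigenvalues. Mirroring the template of Theorem \ref{Jsing}, I would realize the Hecke submodule generated by $M_\mu$ as being of isotype $\tau:=[n-1,(n_0-1)^{l-2},\tau_l]$, with basis $\{M_{\alpha(\S)}:\S\in\tab_\tau\}$, on which $\phi_i$ acts with eigenvalue $t^{\ct{S}{i}}$ exactly as on the abstract RSYT basis of $V_\tau$.

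The first concrete step is to match spectral data at the specialization. Writing $\zeta_{\alpha(\S)}(i)=q^{\alpha(\S)_i}t^{N-r(\alpha(\S),i)}$ and substituting $q=zu^{-n_0/g}$, $t=u^{m_0/g}$ yields
\begin{equation*}
\zeta_{\alpha(\S)}(i) \;=\; z^{\alpha(\S)_i}\,u^{\bigl(m_0(N-r(\alpha(\S),i))-n_0\alpha(\S)_i\bigr)/g}.
\end{equation*}
The combinatorial identity in Proposition \ref{SPeigv} rearranges to $m_0(N-r(\alpha(\S),i))-n_0\alpha(\S)_i = m_0\ct{S}{i}$, so the $u$-exponent collapses to $m_0\ct{S}{i}/g$, which is precisely the exponent of $t^{\ct{S}{i}}$. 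The factor $z^{\alpha(\S)_i}$ is trivial because every entry $\alpha(\S)_i$ lies in $\{0,m,m+m_0,\dots,m+(l-2)m_0\}$, all multiples of $m_0$, and $z^{m_0}=(z^{m_0/g})^g=1$ by hypothesis. Hence at the specialization $\zeta_{\alpha(\S)}(i)=t^{\ct{S}{i}}$ for every $\S\in\tab_\tau$ and every $i$.

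The remaining, and principal, step is to establish the isotype $\tau$ structure of $\mathrm{span}\{M_{\alpha(\S)}:\S\in\tab_\tau\}$. I would build it recursively from $M_\mu$ using the intertwining relations $M_\alpha T_i=tM_\alpha$ when $\alpha_i=\alpha_{i+1}$, $M_\alpha T_i=M_{\alpha s_i}-\frac{1-t}{1-\zeta_\alpha(i+1)/\zeta_\alpha(i)}M_\alpha$ when $\alpha_i<\alpha_{i+1}$, and the companion formula from the quadratic relation when $\alpha_i>\alpha_{i+1}$. By the spectral identity, at the specialization the ratio $\zeta_\alpha(i+1)/\zeta_\alpha(i)$ becomes $t^{-b}$ with $b=\ct{S}{i}-\ct{S}{i+1}$, so the recursion coefficients agree term-by-term with the isotype action formulas of $\tau(T_i)$ in cases (III) and (IV).

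The main obstacle, paralleling the final step of the proof of Theorem \ref{Jsing}, is the column-repeated case $\col{S}{i}=\col{S}{i+1}$: one must force $M_{\alpha(\S)}$ to satisfy the $(-1)$-eigenvalue relation of case (II). I would do this by setting $f=M_{\alpha(\S)}T_i+M_{\alpha(\S)}$ and arguing, via the Jucys--Murphy elements $\phi_j$ together with the relation $T_i\phi_iT_i=t\phi_{i+1}+(t-1)T_i\phi_{i+1}$, that $f$ would have to carry an impossible spectrum of $\phi_j$-eigenvalues unless $f=0$. This is where the primitivity condition $\gcd(k,g)=1$ is used crucially: it is what prevents spurious coincidences among the values $t^{\ct{S}{j}}$ (after reduction modulo $z^{m_0}=1$) that could otherwise permit extraneous eigenvectors to survive inside the Hecke orbit of $M_\mu$, and it is also what ensures that the denominators $1-\zeta_\alpha(i+1)/\zeta_\alpha(i)$ appearing in the recursion remain nonzero along the edges we traverse. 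Once the isotype $\tau$ structure is in place, Proposition \ref{singPhi} and the matched spectral data deliver singularity of $M_\mu$, and the same argument applied to every $M_{\alpha(\S)}$ in the orbit yields the full property (SMP) family.
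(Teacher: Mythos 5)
You should first be aware that the paper itself contains no proof of Theorem \ref{Conjecture}: it is imported from \cite{CDL2017} as a conjecture and explicitly assumed, with the proof deferred to the forthcoming paper \cite{CD2019} (see the introduction and the opening of Section \ref{SubHeckealgebra}). The closest in-paper material is Theorem \ref{Hsingp}, which runs in the \emph{opposite} logical direction: it takes the singularity of $M_{\mu}$ at the specialization as hypothesis and then derives the isotype-$\tau$ structure, using the Dipper--James decomposition of the module induced from the trivial representation of the stabilizer together with the eigenvalue of $\prod_i\phi_i$. Your plan (establish the isotype structure directly, then deduce singularity from Proposition \ref{singPhi} by matching the $\xi_i$- and $\phi_i$-spectra) is the reversal that the authors themselves hint at in the remark following Theorem \ref{Jsing}, and your spectral computation, including the observation that $z^{\alpha(\S)_i}=1$ because every part of $\alpha(\S)$ is a multiple of $m_0$ and $z^{m_0}=1$, is correct and coincides with the computation in the proof of Theorem \ref{Hsingp}.

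However, as a proof the sketch has two genuine gaps. First, nothing addresses regularity of $M_{\mu}$, and of the whole family $M_{\alpha(\S)}$, at the specialization $(q,t)=\bigl(zu^{-n_0/g},u^{m_0/g}\bigr)$: nonsymmetric Macdonald polynomials are defined for generic $(q,t)$ and their coefficients can have poles precisely along loci of the form $q^{a}t^{b}=1$, so every step of your recursion argument presupposes a nontrivial no-pole statement. This matters concretely in your case (II): when $\col{S}{i}=\col{S}{i+1}$ one has $\alpha(\S)_i>\alpha(\S)_{i+1}$ with $\zeta_{\alpha(\S)}(i+1)/\zeta_{\alpha(\S)}(i)\to t$, and the degenerate recursion gives $M_{\alpha(\S)}T_i=-M_{\alpha(\S)}$ only because the coefficient of $M_{\alpha(\S)s_i}$ tends to zero --- which proves nothing unless $M_{\alpha(\S)s_i}$ stays finite at the specialization. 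Second, your proposed Jucys--Murphy ``impossible eigenvalue'' argument for case (II) is circular as stated: to run it you need the $\phi_j$-spectrum on the Hecke orbit of $M_{\mu}$, but before singularity (or the isotype structure) is known you only control the $\xi_j$-eigenvalues; in the paper's Theorem \ref{Hsingp} the $\phi_j$-data comes exactly from the assumed singularity plus the induced-module argument, i.e.\ from the statement you are trying to prove. (Also, the identity you quote, $T_i\phi_iT_i=t\phi_{i+1}+(t-1)T_i\phi_{i+1}$, is not the correct relation; the paper uses $T_i\phi_i=(t-1)\phi_i+\phi_{i+1}T_i$ and $T_i\phi_{i+1}=(1-t)\phi_i+\phi_iT_i$.) Finally, the asserted role of $\gcd(k,g)=1$ in ``preventing spurious coincidences'' is not substantiated by any computation. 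The hard content of Theorem \ref{Conjecture} --- the regularity at the root-of-unity specialization and a non-circular proof of the isotype property --- is exactly what is left to \cite{CD2019}, and it is missing from the proposal.
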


This theorem shows a very natural phenomenon. Hypothesize that there exist singular polynomials with properties analogous to the group case above. For $m$, $n$ and $d$ as in Theorem \ref{Conjecture}, consider 
\begin{eqnarray*}
\tau &=& \left(n-1,\left(n_{0}-1\right)^{l-2},\tau_{l}\right) \vdash N, \text{ with } 1\leq\tau_{l}\leq n_{0}-1, \\
\mu &=&\left(\left(m+\left(l-2\right)m_{0}\right)^{\tau_{l}},\left(m+\left(l-3\right)m_{0}\right)^{n_{0}-1},\ldots,m^{n_{0}-1},0^{n-1}\right).\end{eqnarray*}
Then, the theorem states that $M_{\mu}$ is singular for certain $(q,t)$ satisfying $q^{m}t^{n}=1$. 


As an example, take $N=6$, $m=m_{0}=3$, $n=n_{0}=3$ and $\tau=\left(2,2,2\right)$. Then, consider the quasistaircase $\mu=qs(3,2,3,2,2,2)=\left(6,6,3,3,0,0\right)$. The Macdonald polynomial $M_\mu$ is singular for $q=z u^{-1},t=u$, with $z^{3}=1, z\neq1$. Another example is the quasistaircase $\mu=(24^3,16^3,0^{11})$, for which the Macdonald polynomial $M_\mu$ is singular for the specializations $q^8t^4=1$ and $q^2t=\pm i$.

\subsubsection{Singularity of quasistaircase nonsymmetric Macdonald polynomials}

In general, setting $\displaystyle{q=t^{\frac{1}{\kappa}}}$ and letting $t\rightarrow1$ in $M_{\alpha}$ produces the Jack polynomial $J_{\alpha}$. This should hold for singular polynomials, but not when we specialize to $q=zu^{-n_1}$ and $t=u^{m_1}$, with $z\neq 1$. 

For instance, take $\lambda=(2,0)$. Then, $M_\lambda$ is singular for $qt=-1$, $q^2t^2=1$ but not for $qt=1$. Looking at the first specialization, $qt=-1$, the Macdonald polynomials is $\displaystyle{\left.M_\lambda\right|_{qt=-1} = -(tx_1-x_2)(x_1+x_2)}$ and there is no corresponding Jack polynomial. 
This property indicates that the structure (labeling, isotype, etc.) of singular Macdonald polynomials is fairly close to that of the singular Jack polynomials, but still different enough to be interesting to study. 
\begin{thm}\label{Hsingp}
Consider the singular polynomial $M_{\mu}$ for the specialization $q^{m}t^{n}=1$ and the quasistaircase $\mu$ described in Theorem \ref{Conjecture} (and possibly also a more restrictive condition). Then, $M_{\mu}$ is of isotype $\tau$ and $\left\{M_{\alpha\left(\S \right)}:\S \in\tab_\tau\right\}$, where $\alpha(\S )$ is the reverse lattice permutation associated to $\S $, is a basis for polynomials of isotype $\tau$ for $\mathcal{H}_{N}\left(t\right)$.
\end{thm}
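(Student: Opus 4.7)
The plan is to adapt the proof of Theorem \ref{Jsing} to the Hecke algebra setting, using Theorem \ref{Conjecture} as the singularity input and deriving all four action formulas for $\tau(T_i)$ from the standard Macdonald recursions after pinning down the spectral vectors at the specialization. I would proceed in three stages.

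First, establish the Macdonald analogue of Proposition \ref{SPeigv}: for $\S\in\tab_\tau$ and $\alpha=\alpha(\S)$,
\[
\zeta_\alpha(k)\;=\;q^{\alpha_k}\,t^{N-r(\alpha,k)}\;=\;z^{\alpha_k}\,t^{\ct{S}{k}}
\]
at the specialization $(q,t)=(zu^{-n_0/g},u^{m_0/g})$ of Theorem \ref{Conjecture}. The exponent of $u$ is $(m_0\ct{S}{k})/g$, obtained by multiplying the Jack identity $\alpha_k+\kappa_0(N-r)=\kappa_0\ct{S}{k}$ through by $n_0/g$. Moreover every $\alpha(\S)_k$ lies in $m_0\mathbb{Z}$ (since $v(1)=0$ and $v(j)=(d+j-2)m_0$ for $j\geq 2$) and $z^{m_0}=1$ by the root-of-unity condition, so the $z$-factors cancel in every ratio, giving $\zeta_\alpha(i+1)/\zeta_\alpha(i)=t^{\ct{S}{i+1}-\ct{S}{i}}$.

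Second, show that $\mathcal{V}:=M_\mu\mathcal{H}_N(t)$ consists of singular polynomials and carries the $\tau$-isotype. From $\mathcal{D}_i^{q,t}=t^{-1}T_i\mathcal{D}_{i+1}^{q,t}T_i$ and $[T_i,\mathcal{D}_j^{q,t}]=0$ for $j\neq i,i+1$, singularity is preserved by right $T_i$-multiplication, so every element of $\mathcal{V}$ is singular. The multiplicities of the distinct parts of $\mu$ coincide with the parts of $\tau$, hence the stabilizer of $\mu$ in $\mathcal{S}_N$ is the Young subgroup of shape $\tau$ and $M_\mu T_i=tM_\mu$ whenever $\mu_i=\mu_{i+1}$; these facts identify $\mathcal{V}$ with (the $\tau$-isotype summand of) the parabolically induced module $\mathrm{Ind}_{\mathcal{H}_\mu(t)}^{\mathcal{H}_N(t)}\mathbf{1}_t$. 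Third, construct $\{M_{\alpha(\S)}\}_{\S\in\tab_\tau}$ inductively from $M_\mu=M_{\alpha(\S_1)}$ via $M_\alpha T_i=M_{\alpha s_i}-\frac{1-t}{1-\zeta_\alpha(i+1)/\zeta_\alpha(i)}M_\alpha$ whenever $\alpha_i<\alpha_{i+1}$, and verify the action formulas for $\tau(T_i)$ case by case. Case (I) ($\row{S}{i}=\row{S}{i+1}$) is the same-value recursion $M_\alpha T_i=tM_\alpha$; case (III) ($\row{S}{i}<\row{S}{i+1}$) is direct from the recursion together with $\alpha(\S^{(i)})=\alpha(\S)s_i$ and the spectral ratio of the first stage; case (IV) follows from (III) by exchanging $\S\leftrightarrow\S^{(i)}$ and invoking $T_i^2=(t-1)T_i+t$. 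For case (II) ($\col{S}{i}=\col{S}{i+1}$, so $\ct{S}{i+1}-\ct{S}{i}=1$ and $\alpha_i>\alpha_{i+1}$), apply $T_i$ to the recursion for $\alpha s_i$ and reduce via $T_i^2=(t-1)T_i+t$ to obtain
\[
M_\alpha T_i=(t-1+\lambda)M_\alpha+\bigl(t-(t-1)\lambda-\lambda^2\bigr)M_{\alpha s_i},\qquad \lambda=\tfrac{1-t}{1-\zeta_\alpha(i)/\zeta_\alpha(i+1)};
\]
substituting the spectral ratio $\zeta_\alpha(i+1)/\zeta_\alpha(i)=t$ forces $\lambda=-t$, killing the $M_{\alpha s_i}$ coefficient (since $t+t(t-1)-t^2=0$) and reducing the $M_\alpha$ coefficient to $-1$, so $M_\alpha T_i=-M_\alpha$ exactly as required.

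The main obstacle lies in the parabolic identification at the second stage, because the specializations of Theorem \ref{Conjecture} typically place $t$ at a root of unity at which $\mathcal{H}_N(t)$ is not semisimple, so Frobenius reciprocity for Hecke algebras does not apply verbatim. A robust workaround is a direct dimension count: the third stage produces a family indexed by $\tab_\tau$ that is linearly independent via the $\vartriangleright$-triangularity of Macdonald polynomials together with the distinctness of the compositions $\alpha(\S)$, giving $\dim\mathcal{V}\geq\#\tab_\tau$; combined with the upper bound $\dim\mathcal{V}\leq\#\tab_\tau$ coming from the Yang--Baxter graph structure on $\tab_\tau$, this forces equality and pins $\mathcal{V}$ down as the $\tau$-isotype, completing the proof.
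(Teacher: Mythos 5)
Your first stage (the spectral-vector computation $\zeta_{\alpha(\S)}(k)=t^{\ct{S}{k}}$, with the $z$-factors dying because $\alpha(\S)_k\in m_0\mathbb{Z}$ and $z^{m_0}=1$) and your cases (I), (III), (IV) match the paper's argument. But there are two genuine gaps. First, the isotype identification. The paper does not get $M_\mu$ of isotype $\tau$ by a dimension count: it places $M_\mu$ in the module induced from the trivial representation of $\mathcal{H}_{\tau_1}(t)\times\cdots\times\mathcal{H}_{\tau_l}(t)$, invokes Dipper--James to see that only isotypes $\gamma\succeq\tau$ can occur, and then kills every $\gamma\succ\tau$ by an eigenvalue argument: singularity gives $M_\mu\prod_i\xi_i=M_\mu\prod_i\phi_i=t^{\Sigma(\tau)}M_\mu$ (using $\zeta_{\alpha(\S)}(i)=t^{\ct{S}{i}}$), whereas a component of isotype $\gamma$ would force eigenvalue $t^{\Sigma(\gamma)}$ with $\Sigma(\gamma)\geq\Sigma(\tau)+2$ (the raising-operator computation). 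Your proposed workaround asserts an upper bound $\dim M_\mu\mathcal{H}_N(t)\leq\#\tab_\tau$ ``from the Yang--Baxter graph structure,'' but no such bound exists a priori: if $M_\mu$ had components in higher isotypes, the cyclic module would be strictly larger, so the bound is exactly what has to be proved, and even equality of dimensions would not identify the isotype in the non-semisimple (root-of-unity) setting. The Jucys--Murphy content-sum mechanism is the missing idea, and it is also what the theorem's parenthetical ``more restrictive condition'' is guarding (so that $t^{\Sigma(\gamma)}\neq t^{\Sigma(\tau)}$ at the specialization).

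Second, your case (II) is carried out by specializing the generic recursion through $M_{\alpha(\S)s_i}$, a Macdonald polynomial whose label is \emph{not} of the form $\alpha(\S')$ for any $\S'\in\tab_\tau$. The coefficient you compute, which in generic form is $\frac{(t\rho-1)(\rho-t)}{(1-\rho)^2}$ with $\rho=\zeta_\alpha(i)/\zeta_\alpha(i+1)$, indeed vanishes as $\rho\to t^{-1}$, but the conclusion $M_{\alpha(\S)}T_i=-M_{\alpha(\S)}$ only follows if $M_{\alpha(\S)s_i}$ is regular (pole-free) at $(q,t)=\varpi$; at singular specializations of the form $q^mt^n=1$ nonsymmetric Macdonald polynomials can and do blow up, so the product of a vanishing coefficient with $M_{\alpha(\S)s_i}$ need not vanish. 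You never establish this regularity. The paper avoids the issue entirely: for $\col{S}{i}=\col{S}{i+1}$ it forms $f=M_{\alpha(\S)}+M_{\alpha(\S)}T_i$, uses the commutation relations $T_i\phi_i=(t-1)\phi_i+\phi_{i+1}T_i$ and $T_i\phi_{i+1}=(1-t)\phi_i+\phi_iT_i$ to show $f\phi_i=t^{\ct{S}{i+1}}f$ and $f\phi_{i+1}=t^{\ct{S}{i}}f$, and concludes $f=0$ because these eigenvalues are impossible for a polynomial of isotype $\tau$ --- an argument that uses only objects already known to exist at the specialization, and which itself depends on having established the isotype first. So the two gaps compound: your case (II) would need either the regularity statement you have not proved, or the isotype-plus-Jucys--Murphy route whose key step your second stage omitted.
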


\begin{proof}
We start computing the spectral vector of $\alpha(\S)$:
\begin{eqnarray*}
\zeta_{\alpha(\S)}(i) =q^{\alpha(\S)_{i}}t^{N-r\left(\alpha(\S ),i\right)}=z^{\alpha\left(\S \right)_{i}}u^{\left(N-r\left(\alpha(\S ),i\right)\right)m_{1}-n_{1}\alpha(\S )_i}=u^{m_{1}\ct{S}{i}}=t^{\ct{S}{i}}.
\end{eqnarray*}
It is clear that if $\row{S}{i}=\row{S}{i+1}$, then $\alpha(\S)_{i}=\alpha(\S)_{i+1}$ and $M_{\alpha(\S)}s_{i}=M_{\alpha(\S)}$.
In particular $M_{\mu}T_{i}=tM_{\mu}$, whenever $\mu s_{i}=\mu$. That is, $s_{i}$ is a generator of the stabilizer group of $\mu$, isomorphic to the product $\mathcal{S}_{\tau_{1}}\times\mathcal{S }_{\tau_{2}}\times\cdots \times\mathcal{S }_{\tau_{l}}$. Thus $M_{\mu}$ is in the $\mathcal{H}_{N}(t)$-module induced up from the trivial representation of $\mathcal{H}_{\tau_{1}}(t) \times\mathcal{H}_{\tau_{2}}(t)\times\cdots\times\mathcal{H}_{\tau_{l}}(t)$. By the results of \cite{DJ1986}, this is a direct sum of components labeled by partitions $\gamma$ each of which satisfies $\gamma\succeq\tau$. Furthermore, $\gamma\in\pars{N}$ and $\gamma\succ\tau$, which implies $\sum\limits_{\left(i,j\right)\in\gamma}\left(j-i\right)>\sum\limits_{(i,j)\in\tau}\left(j-i\right)$. By a result of Macdonald \cite[(1.15)]{Macd1995}, it suffices to prove this for $\gamma$, $\gamma^{\prime}$ such that $\gamma_{k}=\gamma_{k}^{\prime}+1$ and $\gamma_{m}=\gamma_{m}^{\prime}-1$, for some $k<m$ (a \textit{raising} operator). Then,
\begin{eqnarray*}
\sum\limits_{(i,j) \in\gamma}\left(j-i\right)-\sum\limits_{\left(i,j\right)\in\gamma^{\prime}}\left(j-i\right) =\left(\gamma_{k}^{\prime}+1-k\right)-\left(\gamma_{m}^{\prime}-m\right)=\left(\gamma_{k}^{\prime}-\gamma_{m}^{\prime}\right)+\left(m-k+1\right)\geq2.
\end{eqnarray*}
The operator $\prod\limits_{i=1}^{N}\phi_{i}$ has the eigenvalue $t^{p}$, for $p=\sum\limits_{(i,j)\in\gamma^{\prime}}\left(j-i\right)=\sum_i\textsf{CT}_{\mathbb{S}_1}$, on polynomials of isotype $\gamma$. Since $M_{\alpha}\prod\limits_{i=1}^{N}\phi_{i}=t^{p}M_{\alpha}$, and using the above inequality for eigenvalues, we see that $M_{\alpha}$ is of isotype $\tau$.

Suppose $\alpha(\S )_{i}<\alpha(\S )_{i+1}$. We know that $\row{S}{i}<\row{S}{i+1}$ and $\S^{(i)}\in\tab_\tau $. Then,
\begin{align*}
\S \tau\left(T_{i}\right) & =\S ^{(i)}-\dfrac{1-t}
{1-t^{\ct{S}{i+1}-\ct{S}{i}}}\S ,\\
M_{\alpha(\S )}T_{i} & =M_{\alpha(\S )s_{i}}-\frac{1-t}{1-t^{\ct{S}{i+1}-\ct{S}{i}}}M_{\alpha},
\end{align*}
 Therefore, $M_{\alpha\left(\S \right)}$ and $\S $ transform the same way under $T_{i}$. 

By a similar argument as in Proposition \ref{Jsing}, suppose $\col{S}{i}=\col{S}{i+1}$. Then, $t^{1+\ct{S}{i}}=t^{\ct{S}{i+1}}$ and
\begin{align*}
T_{i}\phi_{i} & =\frac{1}{t}\left(T_{i}^{2}\phi_{i+1}T_{i}\right)=\frac{1}{t}\left(\left(t-1\right)T_{i}+t\right)\phi_{i+1}T_{i}=\left(t-1\right)\phi_{i}+\phi_{i+1}T_{i},\\
T_{i}\phi_{i+1}& =\left(1-t\right)\phi_{i}+\phi_{i}T_{i}.
\end{align*}
This implies that
\begin{multline*}
\left(M_{\alpha(\S )}+M_{\alpha(\S )}T_{i}\right)\phi_{i}=t^{\ct{S}{i}}M_{\alpha(\S )}+\left(t-1\right)t^{\ct{S}{i}}M_{\alpha(\S )}+t^{\ct{S}{i+1}}M_{\alpha(\S )}T_{i}= \\=t^{\ct{S}{i+1}}\left(M_{\alpha(\S )}+M_{\alpha(\S )}T_{i}\right).
\end{multline*}
Similarly $\left(M_{\alpha(\S )}+M_{\alpha(\S )}T_{i}\right)\phi_{i+1}=t^{\ct{S}{i}}\left(M_{\alpha(\S )}+M_{\alpha(\S )}T_{i}\right)$ and, together with $M_{\alpha(\S )}T_{i}$ being of isotype $\tau$, implies $M_{\alpha(\S )}+M_{\alpha(\S )}T_{i}=0$.
\end{proof}

\begin{cor}
The polynomials obtainable from $M_{\mu}$ by a sequence of maps $f\rightarrow af+bfT_{i}$ are also singular. 
\end{cor}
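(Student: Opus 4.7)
The plan is to prove the stronger statement that $fT_i$ is singular whenever $f$ is, for each $1\leq i<N$; since singularity is linear, closure under $f\mapsto af+bfT_i$ follows, and then the corollary is obtained by iterating starting from $f=M_\mu$ (which is singular by Theorem~\ref{Conjecture}/\ref{Hsingp}).

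The main tool is the defining recursion $\mathcal{D}_i^{q,t}=t^{-1}T_i\mathcal{D}_{i+1}^{q,t}T_i$ together with the Hecke quadratic relation $T_i^{2}=(t-1)T_i+t$. These two inputs yield the operator identities
\begin{equation*}
T_i\mathcal{D}_{i+1}^{q,t}=t\,\mathcal{D}_i^{q,t}T_i^{-1},\qquad T_i\mathcal{D}_i^{q,t}=(t-1)\mathcal{D}_i^{q,t}+\mathcal{D}_{i+1}^{q,t}T_i.
\end{equation*}
The first is an immediate rearrangement of the recursion; the second follows from multiplying the recursion on the left by $T_i$ and then collapsing $T_i^{2}$ via the quadratic relation. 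Assuming $f\mathcal{D}_j^{q,t}=0$ for every $j$, I would apply these identities to get
$(fT_i)\mathcal{D}_{i+1}^{q,t}=t(f\mathcal{D}_i^{q,t})T_i^{-1}=0$ and
$(fT_i)\mathcal{D}_i^{q,t}=(t-1)(f\mathcal{D}_i^{q,t})+(f\mathcal{D}_{i+1}^{q,t})T_i=0$, which handles the indices adjacent to $i$.

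For the remaining indices $j\neq i,i+1$ I would invoke the far-commutativity $T_i\mathcal{D}_j^{q,t}=\mathcal{D}_j^{q,t}T_i$, which is the $(q,t)$-analogue of the symmetric-group relation $s_i\mathcal{D}_j=\mathcal{D}_j s_i$ used implicitly in the paper. Granting this, $(fT_i)\mathcal{D}_j^{q,t}=f\mathcal{D}_j^{q,t}T_i=0$, completing the proof.

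The only delicate point, and the expected main obstacle, is verifying the far-commutativity from the definitions given in Section~\ref{OpsSection}. For $j>i+1$ I would induct downward on $N-j$ using the recursion, noting that $T_i$ commutes with $T_k$ whenever $|i-k|\geq 2$; the base case $T_i\mathcal{D}_N^{q,t}=\mathcal{D}_N^{q,t}T_i$ for $i<N-1$ reduces, since $T_i$ commutes with $x_N^{-1}$, to showing $T_i\xi_N=\xi_N T_i$. For the latter one pushes $T_i$ through the product $T_{N-1}^{-1}\cdots T_1^{-1}$ defining $\xi_N$, using the braid identity $T_iT_{i+1}^{-1}T_i^{-1}=T_{i+1}^{-1}T_i^{-1}T_{i+1}$ to convert the $T_i$ into a $T_{i+1}$ sitting to the right of the product, and then appealing to the intertwining $T_{i+1}\omega^{q}=\omega^{q}T_i$ (which is a straightforward verification from $f\omega^{q}=f(qx_N,x_1,\ldots,x_{N-1})$). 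The case $j<i$ is dual: express $\mathcal{D}_j^{q,t}$ as $t^{-(i-j)}(T_j\cdots T_{i-1})\mathcal{D}_i^{q,t}(T_{i-1}\cdots T_j)$ and use the braid and far-commutativity of the $T_k$ to slide $T_i$ through. With this commutativity in hand, the two identities in the second paragraph together with the trivial case complete the proof for every $j$, and the corollary follows.
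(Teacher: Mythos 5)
Your proposal is correct, and at its core it is the same argument the paper uses: the paper disposes of the corollary in one line by citing the commutation identities between $T_{j}$ and the $(q,t)$-Dunkl operators (equations (3.8)--(3.9) of \cite{DL2012}), which give exactly the implication ``$f\mathcal{D}_{i}^{q,t}=0$ for all $i$ implies $fT_{j}\mathcal{D}_{i}^{q,t}=0$ for all $i,j$,'' and then singularity of $af+bfT_{i}$ follows by linearity. What you do differently is make this self-contained: you derive the two adjacent-index identities $T_{i}\mathcal{D}_{i+1}^{q,t}=t\,\mathcal{D}_{i}^{q,t}T_{i}^{-1}$ and $T_{i}\mathcal{D}_{i}^{q,t}=(t-1)\mathcal{D}_{i}^{q,t}+\mathcal{D}_{i+1}^{q,t}T_{i}$ directly from the recursion and the quadratic relation (both computations check out), and you sketch a proof of the far-commutativity $T_{i}\mathcal{D}_{j}^{q,t}=\mathcal{D}_{j}^{q,t}T_{i}$ for $j\neq i,i+1$, which the paper simply asserts as known. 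Your outline of the $j>i+1$ case (downward induction to the base case $T_{i}\xi_{N}=\xi_{N}T_{i}$ via the braid identity $T_{i}T_{i+1}^{-1}T_{i}^{-1}=T_{i+1}^{-1}T_{i}^{-1}T_{i+1}$ and the intertwining $T_{i+1}\omega^{q}=\omega^{q}T_{i}$) is correct. One small under-statement: in the $j<i$ case, ``braid and far-commutativity of the $T_{k}$'' alone do not let you slide $T_{i}$ past the middle factor, since the critical subword is $T_{i}T_{i-1}\mathcal{D}_{i}^{q,t}$; you must expand $\mathcal{D}_{i}^{q,t}=\tfrac1t T_{i}\mathcal{D}_{i+1}^{q,t}T_{i}$ once more, apply the braid relation to $T_{i}T_{i-1}T_{i}$, and use the already-established commutation $T_{i-1}\mathcal{D}_{i+1}^{q,t}=\mathcal{D}_{i+1}^{q,t}T_{i-1}$ before braiding back. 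This completes exactly as you intend, so it is a gap in exposition rather than in substance; with it filled, your argument gives a proof of the cited relations rather than an appeal to \cite{DL2012}, which is a reasonable trade of length for self-containment.
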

\begin{proof}
This is a consequence of the fact that $f\mathcal{D}_{i}=0$, for all $i$, implies $fT_{j}\mathcal{D}_{i}=0$, for all $i,j$, which follows from equations (3.8) and (3.9) in \cite{DL2012}.
\end{proof}

\subsubsection{From singular nonsymmetric Macdonald polynomials to highest weight symmetric Macdonald polynomials}
In \cite{DL2012} this case is already studied and we list a few useful results that are proved there. 
\begin{prop}[\cite{DL2012}]
For each $\left(\alpha,\S \right)\in\comps\times$ $\tab_\tau$, there exists a simultaneous eigenfunction $M_{\alpha,\S }$ of the operators $\mathbf E_{i}$ of the form
\begin{eqnarray*}
M_{\alpha,\S }\xi_{i} =\zeta_{\alpha,\S }\left(i\right)M_{\alpha,\S }, \text{ where } \zeta_{\alpha,\S }(i)=q^{a_{i}}t^{\ct{S}{r_(\alpha,i)}} \text{ for } 1\leq i\leq N.
\end{eqnarray*}
Furthermore, the set $\left\{M_{\alpha,\S }\right\}$ is a basis for $\mathcal{P}_{\tau}$.
\end{prop}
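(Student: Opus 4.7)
The plan is to construct $M_{\alpha,\S}$ by showing that the commuting family $\{\mathbf E_i\}$ is simultaneously diagonalizable on $\mathcal{P}_\tau$ and has the claimed spectrum. I would proceed via a triangularity argument with respect to a suitable partial order on the canonical basis $\{x^\alpha \otimes \S : \alpha \in \comps, \S \in \tab_\tau\}$ and then invoke spectral separation to obtain uniqueness and the basis property. Note that the symbol $\xi_i$ in the statement refers to the vector-valued operator $\mathbf E_i$, as the $M_{\alpha,\S}$ live in $\mathcal{P}_\tau$.

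First, I would verify that the $\mathbf E_i$ commute pairwise. This follows from the commutativity of the scalar Cherednik operators $\xi_i$ combined with the definition of $\mathbf E_i$ as a conjugation of $\mathbf w^q$ by products of $\mathbf T_j$'s, since the $\mathbf T_j$ satisfy the braid and quadratic relations (verifiable from their definition and the action formulas for $\tau(T_i)$), and the recursion $\mathbf E_i = \frac{1}{t}\mathbf T_i\mathbf E_{i+1}\mathbf T_i$ mirrors exactly the scalar case.

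Next, I would establish triangularity. Using the recursion on $\mathbf E_i$ together with the explicit action of $\mathbf w^q$ on $x^\alpha \otimes \S$, one computes $(x^\alpha \otimes \S)\mathbf E_i$ modulo terms strictly below $x^\alpha \otimes \S$ in an order combining $\vartriangleright$ on $\comps$ (for the polynomial part) with the $\mathrm{inv}$ order on $\tab_\tau$ (to break ties for fixed $\alpha$). Unwrapping the recursion yields a chain of applications of $\mathbf T_j$'s which, on leading terms, produces scalar multiplications with factors of $t$ keyed by the rank function; the $\mathbf w^q$ contributes the factor $q^{\alpha_i}$; and the combined action on $\S$ (via $\tau(T_1\cdots T_{N-1})$ and conjugating transpositions) produces exactly $t^{\ct{S}{r(\alpha,i)}}$ by the construction of the rank function as the permutation sending $i$ to the position of $\alpha_i$ after sorting. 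This shows the diagonal entry of $\mathbf E_i$ on $x^\alpha \otimes \S$ is $\zeta_{\alpha,\S}(i) = q^{\alpha_i} t^{\ct{S}{r(\alpha,i)}}$.

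Finally, I would argue that for generic parameters $q,t$ the spectral vectors $\zeta_{\alpha,\S}$ are pairwise distinct on the set $\comps \times \tab_\tau$ of fixed total degree: the exponents of $q$ (as multisets) determine $\alpha^+$, their ordering determines $\alpha$ via the rank function, and once $\alpha$ is fixed the exponents of $t$ together with the content vector $\c{S}$ determine $\S$. Combined with triangularity and commutativity, this forces each simultaneous eigenspace to be one-dimensional, spanned by a unique vector $M_{\alpha,\S}$ with leading term $x^\alpha \otimes \S$; inverting the upper-triangular change of basis shows $\{M_{\alpha,\S}\}$ is a basis of $\mathcal{P}_\tau$. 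The main obstacle is the leading-term calculation: tracking the interplay of the cyclic shift in $\omega^q$, the action of $\tau(T_1\cdots T_{N-1})$ on $\S$, and the rank function $r(\alpha,i)$ is combinatorially delicate, and verifying that all the pieces conspire to produce the exponents $q^{\alpha_i} t^{\ct{S}{r(\alpha,i)}}$ is where the real work lies.
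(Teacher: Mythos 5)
The paper itself does not prove this proposition: it is imported verbatim from \cite{DL2012}, where the $M_{\alpha,\S}$ are built recursively on the Yang--Baxter graph, starting from the degree-zero eigenfunctions $1\otimes\S$ (on which $\mathbf E_i$ reduces to the Jucys--Murphy element $\phi_i$, giving spectral value $t^{\ct{S}{i}}$) and applying the affine degree-raising step and the $\mathbf T_i$-steps while tracking how the spectral vector transforms. Your route --- commutativity of the $\mathbf E_i$, triangularity on the monomial basis $x^\alpha\otimes\S$ with respect to $\vartriangleright$ refined by a tableau order, and generic spectral separation --- is a genuinely different but standard strategy, and its skeleton is sound; your reading of $\xi_i$ as $\mathbf E_i$ in the statement is also the right one. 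One simplification: spectral separation is easier than you describe, since for generic $(q,t)$ the $q$-exponent of $\zeta_{\alpha,\S}(i)$ is literally $\alpha_i$, so $\alpha$ is read off coordinatewise, and the $t$-exponents then give the content vector $\c{S}$, which determines $\S$; no multiset argument is needed.

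The weak point, which you yourself flag, is that essentially the whole content of the proposition sits in the step you defer: proving that the $\mathbf E_i$ are simultaneously triangular in one consistent order and that the diagonal entry on $x^\alpha\otimes\S$ is exactly $q^{\alpha_i}t^{\ct{S}{r(\alpha,i)}}$. Two things need care there. First, on the terms where the monomial $x^\alpha$ is preserved, the action on the $V_\tau$ component is by $\tau$ of a product of $T_j$'s governed by the rank/stabilizer data of $\alpha$; this is not diagonal on the RSYT basis, so you must verify that its off-diagonal contributions move strictly in your tie-breaking order (the $\mathrm{inv}$ statistic works, but the direction must be checked against cases (III)--(IV) of the action formulas). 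Second, identifying the $t$-exponent as $\ct{S}{r(\alpha,i)}$ is most cleanly anchored at degree zero and then propagated through the affine step --- which is precisely why the Yang--Baxter construction of \cite{DL2012} is the path of least resistance: there the spectral bookkeeping is automatic at each edge of the graph, at the cost of an explicit recursion rather than an abstract diagonalization. As a plan your proposal is workable, but as written it asserts rather than proves the leading-term computation that is the heart of the result.
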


The structure of symmetric polynomials in $\mathcal{P}_{\tau}$ is very similar to the group case.
\begin{prop}[\cite{DL2012}]
The polynomial $M_{\alpha,\S }$ can be transformed to $M_{\beta,\S ^{\prime}}$ by a sequence of maps of the form $f\rightarrow af+bfT_{i}$ if and only if $\left\lfloor \alpha,\S \right\rfloor =\left\lfloor \beta,\S ^{\prime}\right\rfloor $.
\end{prop}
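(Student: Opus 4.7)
The plan is to analyze both implications through the case structure of how $\mathbf{T}_i$ acts on $M_{\alpha,\S}$. The essential input is that $M_{\alpha,\S}$ is the simultaneous $\mathbf{E}_j$-eigenfunction with spectral vector $\zeta_{\alpha,\S}(j)=q^{\alpha_j}t^{\ct{S}{r(\alpha,j)}}$, so the multiset of these eigenvalues depends only on the filling $\lfloor\alpha,\S\rfloor$: since $\alpha_j=\alpha_{r(\alpha,j)}^+$, the assignment $k\mapsto(\ct{S}{k},\alpha_k^+)$ is exactly what the filling records.

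For the forward direction, I would compute $M_{\alpha,\S}\mathbf{T}_i$ directly from $\mathbf{T}_i=(1-t)\partial_i x_{i+1}\otimes 1+s_i\otimes\tau(T_i)$, breaking into cases by whether $r(\alpha,i)$ and $r(\alpha,i+1)$ lie in the same row of $\S$, the same column, or neither. In each case one obtains an explicit formula showing $M_{\alpha,\S}\mathbf{T}_i\in\mathrm{span}\{M_{\alpha,\S},M_{\alpha s_i,\S'}\}$, where $\S'\in\{\S,\S^{(i)}\}$ is the unique choice such that $\S'\in\tab_\tau$. A direct check from the definition of $\lfloor\cdot,\cdot\rfloor$ then confirms $\lfloor\alpha s_i,\S'\rfloor=\lfloor\alpha,\S\rfloor$: the swap of $\alpha_i$ with $\alpha_{i+1}$ is offset by the swap of the positions of $i$ and $i+1$ in $\S$, so each box of $\tau$ retains its assigned value from $\alpha^+$. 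Consequently, any map $f\mapsto af+bfT_i$ keeps us inside $\mathrm{span}\{M_{\beta',\S''}:(\beta',\S'')\in\mathcal{T}(\alpha,\S)\}$, and iterating yields the forward implication.

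For the converse, I would show that any two pairs $(\alpha,\S),(\beta,\S')\in\comps\times\tab_\tau$ with $\lfloor\alpha,\S\rfloor=\lfloor\beta,\S'\rfloor$ are connected by a sequence of admissible adjacent transpositions, each preserving the filling --- this is the standard Yang--Baxter graph connectivity of the orbits of $\comps\times\tab_\tau$ under the operation that swaps consecutive labels. For each such transposition, an intertwiner of the form $\mathbf{T}_i+c_i$, where $c_i$ is a rational expression in $\zeta_{\alpha,\S}(i)$ and $\zeta_{\alpha,\S}(i+1)$ extracted from the recursion $\xi_i=\tfrac{1}{t}T_i\xi_{i+1}T_i$ and its vector-valued analogue, sends $M_{\alpha,\S}$ to a nonzero scalar multiple of the next pair's Macdonald polynomial. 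Composing these intertwiners along the connecting path produces $M_{\beta,\S'}$ from $M_{\alpha,\S}$.

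The main obstacle is the bookkeeping in the forward case analysis, especially the same-column case where $\S^{(i)}$ fails to be an RSYT and the contribution of $s_i\otimes\tau(T_i)$ collapses to a multiple of $M_{\alpha,\S}$: here one must verify that the spectral relations force $(\alpha s_i,\S)$ itself to remain in $\mathcal{T}(\alpha,\S)$. A parallel subtlety appears in the converse, where one must verify that the denominators entering $c_i$, of the form $\zeta(i+1)-\zeta(i)$, are nonzero along every path inside a single $\mathcal{T}$-class --- a property that follows from the genericity of $(q,t)$.
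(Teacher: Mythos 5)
Note first that the paper itself gives no proof of this proposition: it is one of the results quoted verbatim from \cite{DL2012} (``we list a few useful results that are proved there''), so there is no in-paper argument to compare against. Your plan is the standard Yang--Baxter-graph route that the cited source follows, and its skeleton is sound: show that each map $f\mapsto af+bfT_i$ preserves $\mathrm{span}\{M_{\beta,\S^{\prime}}:(\beta,\S^{\prime})\in\mathcal{T}(\alpha,\S)\}$ and use that the $M_{\beta,\S'}$ form a basis to get the forward direction; for the converse, connect any two pairs with the same filling by filling-preserving adjacent steps, each realized by an intertwiner $\mathbf{T}_i+c_i$, with genericity of $(q,t)$ guaranteeing the coefficients do not vanish. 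Your opening observation that the multiset of spectral values $q^{\alpha_j}t^{\ct{S}{r(\alpha,j)}}$ records exactly the filling is the right organizing principle.

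Two points, however, need repair. First, your forward case analysis is organized on the wrong dichotomy. The primary split is on whether $\alpha_i=\alpha_{i+1}$. If $\alpha_i\neq\alpha_{i+1}$, then $M_{\alpha,\S}\mathbf{T}_i$ lies in the span of $M_{\alpha,\S}$ and $M_{\alpha s_i,\S}$: the tableau does not move at all, and the filling is preserved simply because $\left\lfloor\beta,\S\right\rfloor$ depends only on $(\beta^{+},\S)$ and $(\alpha s_i)^{+}=\alpha^{+}$. Your claimed mechanism (``the swap of $\alpha_i$ with $\alpha_{i+1}$ is offset by the swap of the positions of $i$ and $i+1$ in $\S$'') describes a compensation that does not occur. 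The row/column trichotomy is relevant only when $\alpha_i=\alpha_{i+1}$, and there the second basis element is $M_{\alpha,\S^{(j)}}$ with $j=r(\alpha,i)$ (note the reindexing through the rank function, not $\S^{(i)}$), which stays in the class because $\alpha^{+}_{j}=\alpha^{+}_{j+1}$; also ``the unique choice such that $\S^{\prime}\in\tab_\tau$'' is off, since when $j$ and $j+1$ lie in neither the same row nor the same column both $\S$ and $\S^{(j)}$ are RSYTs and both genuinely occur. Second, in the converse the connectivity of $\mathcal{T}(\alpha,\S)$ under filling-preserving adjacent transpositions is the substantive combinatorial content, and you only assert it as ``standard.'' It is true --- one can first sort $\beta$ to $\alpha^{+}$ by swapping unequal adjacent parts (tableau fixed), then connect the compatible RSYTs by swapping consecutive labels that carry equal filling values, since any two such tableaux differ by permutations within the constant-value blocks --- but this argument (or an explicit citation for it) must be supplied, together with the check that each such step is realized by an admissible intertwiner with nonzero coefficient. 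With those two repairs your sketch matches the argument of \cite{DL2012}.
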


\begin{thm}[\cite{DL2012}]
For $\left(\alpha,\S \right)\in\comps\times\tab_\tau$ the $\mathrm{span}\left\{ M_{\beta,\S^\prime}:\left(\beta,\S ^{\prime}\right)\in\mathcal{T}\left(\alpha,\S \right)\right\} $ contains a unique nonzero symmetric polynomial if and only if $\left\lfloor\alpha,\S \right\rfloor$ is a column-strict tableau.
\end{thm}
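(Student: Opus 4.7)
The plan is to analyze the $\mathcal{H}_N(t)$-module $V := \mathrm{span}\{M_{\beta, \S'} : (\beta, \S') \in \mathcal{T}(\alpha, \S)\}$, which is Hecke-invariant by the preceding proposition. A symmetric polynomial in $V$ is an element $p$ with $pT_i = tp$ for every $i$, so the theorem becomes the assertion that the trivial $\mathcal{H}_N(t)$-isotype of $V$ has dimension $0$ or $1$, equal to $1$ exactly when $\lfloor\alpha,\S\rfloor$ is column-strict.

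For uniqueness, I would write $p = \sum c_{\beta,\S'} M_{\beta,\S'}$ and expand $M_{\beta,\S'}T_i$ using the recursive formulas from Section~\ref{sec:JackMacdo}. When $\beta_i \neq \beta_{i+1}$, matching coefficients of $M_{\beta,\S'}$ and $M_{\beta s_i,\S'}$ in $pT_i = tp$ expresses $c_{\beta s_i, \S'}$ in terms of $c_{\beta, \S'}$ through the ratio $\zeta_{\beta,\S'}(i)/\zeta_{\beta,\S'}(i+1)$. When $\beta_i = \beta_{i+1}$ and $i, i+1$ lie in distinct rows and columns of $\S'$, matching coefficients yields a similar link between $c_{\beta,\S'}$ and $c_{\beta, \S'^{(i)}}$ via the action formulas for $\tau(T_i)$. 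Iterating these two kinds of moves across the Yang-Baxter-like graph spanning $\mathcal{T}(\alpha, \S)$ pins down every coefficient from a single chosen one, so the symmetric subspace of $V$ has dimension at most~$1$.

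For the characterization, suppose $\lfloor\alpha,\S\rfloor$ is not column-strict; then some column of $\tau$ holds two equal entries, meaning there exist $(\beta,\S')\in\mathcal{T}(\alpha,\S)$ and an index $i$ with $\beta_i=\beta_{i+1}$ and $i, i+1$ in the same column of $\S'$. Case~(II) of the action formulas for $\tau(T_i)$ then gives $M_{\beta,\S'}T_i = -M_{\beta,\S'}$, so the equation $pT_i = tp$ collapses to $(t+1)c_{\beta,\S'} = 0$, hence $c_{\beta,\S'} = 0$ for generic $t$, and the uniqueness propagation forces every other coefficient to vanish. Conversely, when $\lfloor\alpha,\S\rfloor$ is column-strict, I would construct the symmetric polynomial explicitly by Hecke-symmetrizing $M_{\lambda,\S_*}$, where $(\lambda,\S_*)\in\mathcal{T}(\alpha,\S)$ has $\lambda=\alpha^+$ and $\S_*$ is the (unique) tableau compatible with the row-reading of the column-strict filling; nonvanishing can be checked through preservation of the $\vartriangleright$-leading monomial $t^{*}\,x^{\lambda}\otimes \S_*$.

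The hardest step will be to guarantee consistency of the coefficient propagation in the uniqueness argument: the Yang-Baxter graph on $\mathcal{T}(\alpha,\S)$ has nontrivial cycles arising from the braid and quadratic relations of $\mathcal{H}_N(t)$, and these must be compatible with the spectral-ratio recursions. Consistency fails precisely when a cycle passes through a case-(II) configuration, which happens exactly when $\lfloor\alpha,\S\rfloor$ violates the column-strict condition; this is the combinatorial heart of the equivalence and the place where the representation theory of the Hecke algebra is most delicately used.
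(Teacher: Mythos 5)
A preliminary remark: the paper itself does not prove this theorem --- it is quoted from \cite{DL2012} (the surrounding text says these results ``are proved there''), so there is no in-paper proof to compare against; your plan can only be measured against the method of that reference, which it does resemble in outline: the eigenvalue equation $pT_i=tp$, coefficient propagation over $\mathcal{T}(\alpha,\S)$ (connectivity being exactly the preceding proposition, so your dimension-$\leq 1$ argument is sound), and the $-1$-eigenvalue obstruction coming from an equal pair in the same column. Your ``only if'' direction is essentially right, modulo the minor imprecision that when $\beta_i=\beta_{i+1}$ the relevant boxes are those containing $r(\beta,i)$ and $r(\beta,i+1)$ in $\S'$, which coincide with $i,i+1$ only when $\beta$ is a partition.

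The genuine gap is in the ``if'' direction. Hecke-symmetrizing $M_{\lambda,\S_*}$ and asserting nonvanishing ``through preservation of the $\vartriangleright$-leading monomial'' assumes exactly what must be proved: the symmetrizer can and does annihilate eigenfunctions --- that is precisely what happens when $\lfloor\alpha,\S\rfloor$ fails to be column-strict --- so preservation of the leading term is not automatic. One has to compute the coefficient of $t^{\ast}x^{\lambda}\otimes\S_*$ in the symmetrized polynomial as a product of spectral-ratio factors and show that column-strictness guarantees none of them vanishes (a vertically adjacent equal pair is exactly what produces a zero factor). That positive use of column-strictness is missing from your plan; in your write-up column-strictness enters only the ``only if'' direction. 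Relatedly, your closing paragraph misplaces the difficulty: case (II) is a one-edge obstruction (it directly forces $(1+t)c_{\beta,\S'}=0$, as you note), not a failure of cycle-consistency, and conversely in the column-strict case you still owe an argument that the propagated coefficients are mutually consistent around the braid and quadratic-relation cycles. The clean way to discharge that is representation-theoretic --- the span is an $\mathcal{H}_N(t)$-module, so it suffices to exhibit one nonzero vector of trivial isotype via the symmetrizer computation above --- rather than checking cycles by hand; as written, the existence half of the equivalence is not established.
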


As in the group case there is a unique symmetric polynomial of minimal degree. In this case, $\sum_i\mathbf{E}_{i}$ determines $\left\lfloor \lambda,\S\right\rfloor $ and the set $\mathcal{T}\left(\lambda,\S \right)$. Indeed $M_{\lambda,\S}\sum_i \mathbf {E}_{i} = \sum_i q^{\lambda_{i}}t^{\ct{S}{i}}M_{\lambda,\S}$. The tableau $\left\lfloor \lambda,\S\right\rfloor $ can be constructed from the eigenvalues. 

As example, consider $\gamma=\left(2,2,2,1,0,0\right)$ and $\S =\scalebox{0.7}{\begin{tikzpicture}
\draw (0,0) rectangle (0.5,0.5);
\draw (0.5,0) rectangle (1,0.5);
\draw (1,0) rectangle (1.5,0.5);
\draw (0,1) rectangle (0.5,1.5);
\draw (0,0.5) rectangle (0.5,1);
\draw (0.5,0.5) rectangle (1,1);
\node at (0.25,0.25) {6};
\node at (0.75,0.25) {5};
\node at (1.25,0.25) {3};
\node at (0.25,1.25) {2};
\node at (0.25,0.75) {4};
\node at (0.75,0.75) {1};
\end{tikzpicture}}$. Then,
$\left\lfloor \gamma,\S \right\rfloor =
\scalebox{0.7}{\begin{tikzpicture}
\draw (0,0) rectangle (0.5,0.5);
\draw (0.5,0) rectangle (1,0.5);
\draw (1,0) rectangle (1.5,0.5);
\draw (0,1) rectangle (0.5,1.5);
\draw (0,0.5) rectangle (0.5,1);
\draw (0.5,0.5) rectangle (1,1);
\node at (0.25,0.25) {0};
\node at (0.75,0.25) {0};
\node at (1.25,0.25) {2};
\node at (0.25,0.75) {1};
\node at (0.25,1.25) {2};
\node at (0.75,0.75) {2};
\end{tikzpicture}}\ $ and 
\begin{eqnarray*}
M_{\gamma,\S} \sum_i \mathbf{E}_{i}=q^{2}\left(1+t^{-2}+t^{2}\right)+qt^{-1}+t+1.
\end{eqnarray*}
Since the entries 1, 2 and 3 in $\S$ can be arbitrarily permuted, there are 6 different RSYTs leading to the same $\left\lfloor \gamma,\S\right\rfloor$. Moreover, the diagram $\left\lfloor \gamma,\S\right\rfloor$ is column-strict, and therefore there is a symmetric polynomial in the $\mathcal{S}_N$-span. Finally, for the shape of $\S$, $(3,2,1)$, the corresponding minimal type is $\lambda=(2,1,1,0,0,0)$, and $\S_1 =\scalebox{0.7}{\begin{tikzpicture}
\draw (0,0) rectangle (0.5,0.5);
\draw (0.5,0) rectangle (1,0.5);
\draw (1,0) rectangle (1.5,0.5);
\draw (0,1) rectangle (0.5,1.5);
\draw (0,0.5) rectangle (0.5,1);
\draw (0.5,0.5) rectangle (1,1);
\node at (0.25,0.25) {6};
\node at (0.75,0.25) {5};
\node at (1.25,0.25) {4};
\node at (0.25,1.25) {1};
\node at (0.25,0.75) {3};
\node at (0.75,0.75) {2};
\end{tikzpicture}}$ and
$\left\lfloor \lambda,\S_1 \right\rfloor =
\scalebox{0.7}{\begin{tikzpicture}
\draw (0,0) rectangle (0.5,0.5);
\draw (0.5,0) rectangle (1,0.5);
\draw (1,0) rectangle (1.5,0.5);
\draw (0,1) rectangle (0.5,1.5);
\draw (0,0.5) rectangle (0.5,1);
\draw (0.5,0.5) rectangle (1,1);
\node at (0.25,0.25) {0};
\node at (0.75,0.25) {0};
\node at (1.25,0.25) {0};
\node at (0.25,0.75) {1};
\node at (0.25,1.25) {2};
\node at (0.75,0.75) {1};
\end{tikzpicture}}\ $.

Let $\left\{ h_{\S }:\S \in\tab_\tau\right\}$ be the set of minimal polynomials of isotype $\tau$ for $\mathcal{H}_{N}\left(1/t\right)$ described in \eqref{Eqp_S0}. By Theorem \ref{Hsymm}, $\displaystyle{F_{\tau}:=\sum_{\S \in\tab_\tau}\frac{1}{\gamma\left(\S ;t\right)}h_{\S }\otimes \S }$ is the symmetric polynomial and eigenfunction of $\sum_i \mathbf E_{i}$ of minimal degree. By \cite[Thm. 5.27]{DL2012}, we can write $\displaystyle{F_{\tau}=\sum_{\alpha^{+}=\lambda} c_{\alpha}M_{\alpha,\S _{1}}}$. One of the terms in $F_{\tau}$ is (a nonzero multiple of) $x^{\lambda}\otimes \S _{1}$ and there is no term of the form
$x^{\lambda}\otimes \S ^{\prime}$ with $\S ^{\prime}\neq \S _{1}$ because the other terms in the $M_{\alpha,\S _{1}}$ are of the form $x^{\beta}\otimes \S ^{\prime}$ with $\lambda\vartriangleright\beta$. This implies that one of the terms of $h_{\S _{1}}$ is indeed $x^{\lambda}$. 

Now, we specialize to the hypotheses of Theorem \ref{Hsingp} and apply Theorem \ref{Hrhocomm}. Let $\tau$ be a partition associated with a set of singular polynomials for $\left( q,t\right)$. By the construction in formula \eqref{defJsing} the leading term of $M_{\alpha\left( \S _{1}\right)}$ is $x^{\mu}$.
This implies that $\displaystyle{F_{\tau}\rho=\sum\limits_{\S \in\tab_\tau}\frac{1}{\gamma\left(\S ;t\right)}h_{\S }M_{\alpha(\S )}}$ has the leading term $x^{\mu+\lambda}$ (with a nonzero coefficient), where the partition $\mu+\lambda$ is the quasistaircase $\mu+\lambda=qs(m_{0}+1,n_{0}-1,m+1,l-1,n-1,\tau_{l})$.

 It is known that $T_{i}$ and $\mathcal D_{j}$ commute for $j\neq i, i+1$. Therefore, $\displaystyle{\sum_{j=1}^{i-1}\mathcal{D}_{j}+\sum_{j=i+2}^{N}\mathcal{D}_{j}}$ commutes with $T_{i}$. Using that $t\displaystyle{\mathcal{D}_{i}=T_{i}\mathcal{D}_{i+1}T_{i}}$, we deduce the other two cases:
 \begin{eqnarray*}
 \mathcal{D}_{i}T_{i}&=&\frac{1}{t}T_{i}\mathcal{D}_{i+1}T_{i}^{2}=\frac{1}{t}T_{i}\mathcal{D}_{i+1}\left\{\left(t-1\right)T_{i}+t\right\}, \\
 \mathcal{D}_{i}T_{i}+\frac{1}{t}T_{i}\mathcal{D}_{i+1}T_{i} &=&T_{i}\mathcal{D}_{i+1}T_{i}+T_{i}\mathcal{D}_{i+1}=t\mathcal{D}_{i}+T_{i}.
 \end{eqnarray*}
 Suppose that $fs_{i}=f$, so $fT_{i}=tf$. Then, apply the operators in the equation to $f$ to get that $f\mathcal{D}_{i}T_{i}+f\mathcal{D}_{i+1}T_{i}=tf\left(\mathcal{D}_{i}+\mathcal{D}_{i+1}\right)$. The commutations imply that if $f$ is symmetric, then $\sum_j f\mathcal{D}_{j}T_{i}=t\sum_j f\mathcal{D}_{j}$. The same computation can be done for the vector-valued operators $\mathbf T_i$ and $\mathbf D_i$, since they satisfy the same relations. This proves the following result. 
\begin{lem}
If $f$ in $\mathcal{P}_{\tau}$ or in $\mathcal{P}$ is symmetric, then so is $\sum_if\mathbf{D}_{i}$ or $\sum_if\mathcal{D}_{i}$, respectively.
\end{lem}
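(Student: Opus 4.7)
The plan is to verify symmetry of $\sum_{j} f \mathcal{D}_j$ by checking invariance under each simple generator separately, splitting the sum into the two terms involving $\mathcal{D}_i,\mathcal{D}_{i+1}$ and all other terms. The easy half is the commutation $\mathcal{D}_j T_i = T_i \mathcal{D}_j$ (respectively $\mathcal{D}_j s_i = s_i \mathcal{D}_j$) for $j\ne i,i+1$, already noted in the paragraph preceding the statement; this handles every term outside positions $i,i+1$, and for a symmetric $f$ it immediately yields
\begin{eqnarray*}
\left(\sum_{j\ne i,i+1} f\mathcal{D}_j\right) T_i \;=\; \sum_{j\ne i,i+1} (fT_i)\mathcal{D}_j \;=\; t \sum_{j\ne i,i+1} f\mathcal{D}_j,
\end{eqnarray*}
and analogously in the symmetric-group setting with $s_i$ in place of $T_i$ and no factor of $t$.

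The core of the argument is the two-term identity $\mathcal{D}_i T_i + \mathcal{D}_{i+1}T_i = T_i(\mathcal{D}_i+\mathcal{D}_{i+1})$, which I would derive from the recursion $t\mathcal{D}_i = T_i\mathcal{D}_{i+1}T_i$ and the quadratic relation $T_i^2 = (t-1)T_i + t$: multiplying the recursion on the left by $T_i^{-1}$ and using $T_i^{-1} = t^{-1}(T_i-(t-1))$ gives $\mathcal{D}_i T_i = T_i\mathcal{D}_{i+1} + (t-1)\mathcal{D}_i$, and the symmetric manipulation gives $\mathcal{D}_{i+1}T_i = T_i\mathcal{D}_i - (t-1)\mathcal{D}_i$; adding these, the $(t-1)\mathcal{D}_i$ contributions cancel and produce the claimed identity. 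In the symmetric-group picture the analogous identity $\mathcal{D}_i s_i + \mathcal{D}_{i+1}s_i = s_i(\mathcal{D}_i + \mathcal{D}_{i+1})$ is a one-line consequence of the classical intertwining relation $s_i\mathcal{D}_i s_i = \mathcal{D}_{i+1}$. Applying this to a symmetric $f$ and using $fT_i = tf$ (respectively $fs_i=f$) yields $f(\mathcal{D}_i+\mathcal{D}_{i+1})T_i = t f(\mathcal{D}_i+\mathcal{D}_{i+1})$, which, combined with the previous display, gives $(\sum_j f\mathcal{D}_j)T_i = t \sum_j f\mathcal{D}_j$ for every $i$; by Theorem \ref{Hsymm} (respectively Theorem \ref{SGsymm}) this is precisely the statement that $\sum_j f\mathcal{D}_j$ is symmetric.

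For the vector-valued case in $\mathcal{P}_\tau$, the operators $\mathbf{T}_i$ and $\mathbf{D}_i$ (respectively $\mathbf{t}_{s_i}$ and $\mathbf{D}_i$) are defined in Section \ref{sec:Vec-Valued} so that they satisfy exactly the same algebraic relations as their scalar counterparts — the isotypic factor on the right of the tensor product commutes through the calculations done at the position indices $j\ne i,i+1$, and on positions $i,i+1$ it is absorbed into the same quadratic/intertwining relation — so the verbatim argument yields $\left(\sum_i f\mathbf{D}_i\right)\mathbf{T}_j = t\sum_i f\mathbf{D}_i$ (respectively invariance under each $\mathbf{t}_{s_j}$). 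The only real obstacle is bookkeeping: checking that the two-term identity assembles correctly from the non-commuting recursion, and keeping the two frameworks (group and Hecke) notationally straight; once the identity is in hand, the lemma is essentially a one-line consequence.
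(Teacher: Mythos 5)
Your argument is correct and is essentially the paper's own proof: both split $\sum_j f\mathcal{D}_j$ into the terms with $j\neq i,i+1$, handled by the known commutation of $T_i$ (resp.\ $s_i$) with $\mathcal{D}_j$, and the pair $\mathcal{D}_i,\mathcal{D}_{i+1}$, handled via the recursion $t\mathcal{D}_i=T_i\mathcal{D}_{i+1}T_i$ together with the quadratic relation (resp.\ $s_i\mathcal{D}_is_i=\mathcal{D}_{i+1}$), and both pass to the vector-valued case by noting $\mathbf{T}_i,\mathbf{D}_i$ satisfy the same relations. Your packaging of the key step as the operator identity $(\mathcal{D}_i+\mathcal{D}_{i+1})T_i=T_i(\mathcal{D}_i+\mathcal{D}_{i+1})$ is just a slightly tidier form of the identity the paper writes out before applying it to a symmetric $f$.
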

As in the group case, we see that $F_{\tau}\sum_i\mathbf{D}_{i}=0$ because $F_{\tau}$ is the minimal degree symmetric polynomial in $\mathcal{P}_{\tau}$. 
\begin{thm} 
The polynomial $F_{\tau}$ is proportional to the symmetric Macdonald polynomial $\mathrm M_{\mu+\lambda}$ and is annihilated by $\sum_i\mathbf{D}_{i}$. Hence, it is a highest weight Macdonald polynomial.
\end{thm}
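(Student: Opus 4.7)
The plan is to transfer the properties of $F_\tau$ from $\mathcal{P}_\tau$ to $\mathcal{P}$ via the projection $\rho$ of Proposition \ref{Hcomm} built from the singular basis $g_\S = M_{\alpha(\S)}$, which satisfies property (SMP) by Theorem \ref{Hsingp}. With this choice, Proposition \ref{Hcomm} and Theorem \ref{Hrhocomm} together show that $\rho$ intertwines $\mathbf T_i \leftrightarrow T_i$, $\mathbf E_i \leftrightarrow \xi_i$ and $\mathbf D_i^{q,t} \leftrightarrow \mathcal D_i^{q,t}$. The idea is to show that $F_\tau \rho$ is a Hecke-symmetric eigenfunction of each $\xi_i$ with leading monomial $x^{\mu+\lambda}$, forcing it to be a nonzero multiple of $\mathrm M_{\mu+\lambda}$, and then to push the vanishing $F_\tau \sum_i \mathbf D_i^{q,t} = 0$ through $\rho$.

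For the identification, I would first invoke Theorem \ref{Hsymm}: by construction $F_\tau \mathbf T_i = t F_\tau$ for $1 \leq i < N$, so Proposition \ref{Hcomm} gives $(F_\tau \rho) T_i = t(F_\tau \rho)$; hence $F_\tau \rho$ is Hecke-symmetric. Because $F_\tau$ is the unique (up to scalar) Hecke-symmetric element of its degree in $\mathcal P_\tau$, and $\sum_i \mathbf E_i$ preserves both symmetry and total degree, $F_\tau$ must be an eigenfunction of $\sum_i \mathbf E_i$; applying Theorem \ref{Hrhocomm} transfers this to $(F_\tau \rho)\sum_i \xi_i = c(F_\tau\rho)$ for the corresponding eigenvalue. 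Combined with the observation already recorded immediately above the theorem that $F_\tau \rho$ has leading monomial $x^{\mu+\lambda}$ with nonzero coefficient (from $h_{\S_1}$ contributing $x^\lambda$ and $M_{\alpha(\S_1)}$ contributing $x^\mu$), the standard characterization of symmetric Macdonald polynomials by their leading term and $\sum_i \xi_i$-eigenvalue identifies $F_\tau \rho$ with a nonzero scalar multiple of $\mathrm M_{\mu+\lambda}$. The highest weight claim then follows at once: the preceding lemma forces $F_\tau \sum_i \mathbf D_i^{q,t}$ to be a Hecke-symmetric element of $\mathcal P_\tau$ of strictly smaller degree, hence zero by minimality of $F_\tau$, and Theorem \ref{Hrhocomm} yields
\begin{equation*}
(F_\tau \rho) \sum_{i=1}^N \mathcal D_i^{q,t} \;=\; F_\tau \left(\sum_{i=1}^N \mathbf D_i^{q,t}\right) \rho \;=\; 0.
\end{equation*}

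The main technical obstacle is ensuring that property (SMP) genuinely holds at the specialization $q^m t^n = 1$, so that Theorem \ref{Hrhocomm} is actually applicable to the basis $\{M_{\alpha(\S)}\}$; this is precisely the content of Theorem \ref{Hsingp}, which in turn rests on Theorem \ref{Conjecture} (to be established in \cite{CD2019}). A secondary concern is confirming that $F_\tau \rho$ does not collapse to zero, which would make the proportionality with $\mathrm M_{\mu+\lambda}$ vacuous; this is automatic from the leading monomial computation already present in the text preceding the statement. Once these two points are in hand, every remaining step is a formal manipulation with the intertwiner $\rho$.
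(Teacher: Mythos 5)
Your proposal is correct and follows essentially the same route as the paper: project via $\rho$ built from the singular basis $\{M_{\alpha(\S)}\}$ of Theorem \ref{Hsingp}, use the symmetry of $F_\tau$ (Theorem \ref{Hsymm}) together with the leading term $x^{\mu+\lambda}$ to identify $F_\tau\rho$ with a multiple of $\mathrm M_{\mu+\lambda}$, and use the lemma plus minimality of degree to get $F_\tau\sum_i\mathbf D_i=0$, transferred through the intertwining of Theorem \ref{Hrhocomm}. The only cosmetic difference is that you derive the $\sum_i\mathbf E_i$-eigenfunction property from uniqueness of the minimal-degree symmetric element and degree/symmetry preservation, whereas the paper invokes the expansion $F_\tau=\sum_{\alpha^+=\lambda}c_\alpha M_{\alpha,\S_1}$ from \cite{DL2012}; both are legitimate.
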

 In particular, we have recovered that the symmetric Macdonald polynomial $\mathrm M_{\mu+\lambda}$ is highest weight \cite[Theorem II]{JL} from the singularity of the nonsymmetric Macdonald polynomial $M_\mu$.
This shows that the study of singular nonsymmetric Macdonald polynomials is relevant for the understanding of the Bernevig and Haldane conjectures \cite{BH2008}.

\subsubsection{A note on specializations}\label{SubsubFinalRemark}
From Proposition \ref{singPhi}, a polynomial $p$ is singular if and only if $p\xi_{i}=p\phi_{i}$, for $1\leq i\leq N$.
Now, suppose $M_{\alpha}$ is singular for some specific $\left(q_{0},t_{0}\right)$.  In this case, instead of considering the sum  of $\xi$ as in the symmetric case, we consider the product. Since $\sum_i\left(  N-r(\alpha,i)\right)=N^{2}-\left(  (1+2+\cdots+N\right)  =\frac{1}{2}N\left(  N-1\right)$, we have that
\begin{eqnarray*}
M_{\alpha}\prod\limits_{i=1}^{N}\xi_{i}  =M_{\alpha}\prod\limits_{i=1}^{N}\phi_{i}=\prod\limits_{i=1}^{N}\zeta_{\alpha}\left(  i\right)  M_{\alpha}=q^{\left\vert \alpha\right\vert }t^{N\left(  N-1\right)  /2}M_{\alpha}.
\end{eqnarray*}
Suppose further that $M_{\alpha}$ (specialized at $\left( q_{0},t_{0}\right)$) is of isotype $\tau$, for some partition $\tau$ of $N$. Then, the eigenvalue of $\prod\limits_{i=1}^{N}\phi_{i}$ acting on an $\mathcal{H}_N(t)$-module of isotype $\tau$ is $t^{\Sigma\left(\tau\right)}$, where
\begin{eqnarray*}
\Sigma\left(\tau\right)=\sum\limits_{\left(i,j\right)\in\tau}\left(j-i\right)=\frac{1}{2}\sum_{i=1}^{\ell(\tau)}\tau_{i}\left(\tau_{i}-2i+1\right).
\end{eqnarray*}
Thus $\left(q_{0},t_{0}\right)$ must satisfy the equation $q^{\left\vert \alpha\right\vert }t^{N\left(  N-1\right)  /2-\Sigma\left(\tau\right)  }=1$.

In particular, we look at the quasistaircase partition $qs\left(m,n-1,dm,dn-1,l-1,\tau_{l}\right)$ and the isotype $\tau=\left(dn-1,\left(n-1\right)^{l-2},\tau_{l}\right)$. Then,  
\begin{multline*}
\Sigma\left(\tau\right) =\frac{1}{2}\left(dn-1\right)  \left(dn-2\right)+\frac{1}{2}\sum_{i=2}^{l-2}\left(n-1\right)  \left(n-2i\right)  +\frac{1}{2}\tau_{l}\left(  \tau_{l}-2l+1\right) = \\
 =\frac{1}{2}\left(dn-1\right)\left(dn-2\right)+\frac{1}{2}\left(n-1\right)\left(n-l-1\right)\left(l-2\right) +\frac{1}{2}\tau_{l}\left(\tau_{l}-2l+1\right),
\end{multline*}
and%
\begin{eqnarray*}
\frac{1}{2}N\left(  N-1\right)  -\Sigma\left(  \tau\right)  =\frac{1}{2}n\left(  n-1\right)  \left(  l-2\right)  \left(  l-3+2d\right)  +\tau_{l}n\left(  l-2+d\right). 
\end{eqnarray*}
Denoting $A=\displaystyle{ \frac{1}{n}\left(  \frac{1}{2}N\left(  N-1\right)  -\Sigma\left(  \tau\right)  \right)}$ and after some computations, we get that $(q,t)$ must satisfy $q^{mA}t^{nA}=1$. Note that this is only a necessary condition.

\section{Factorizations at special points}\label{SectionSP}

In this section we examine the polynomials $h_{\S}$ of minimal degree associated to $\S\in\tab_\tau$, defined by formulas \eqref{Eqp_S0} and \eqref{hSgroup} in Section \ref{SubSectPR}, when evaluated at special points. Typically these involve a smaller than $N$ number of free variables. There are also factorizations of highest-weight symmetric Jack and Macdonald polynomials for certain parameter values. The two structures are tied together by the projection formulas of Sections \ref{SubSectProjSingHW} and \ref{SectionQSP}. The key fact is that Macdonald polynomials of isotype $\tau$, denoted by $g_{\S}$ with $\S\in\tab_\tau$ and $\S\neq \S_0$, can be shown to vanish at the special points. This also applies to polynomials $h_{\S}$, with $\S\in \tab_\tau$ of isotype $\tau$ with minimum degree. The proofs are worked out in detail for the Hecke algebra case and a limiting argument ($t\longrightarrow 1$) is used to derive the symmetric group version. 

\subsection{The symmetric group case}
We start with the definition of a family of relevant polynomials. 
\begin{defi}
For each $\S\in\tab_\tau$, define the \emph{alternating polynomial} by setting 
\begin{eqnarray*}
a_{\S}(x) :=\prod_{j=1}^{\tau_{1}}\prod_{\substack{i,k=1 \\ i<k}}^{\tau_{j}^{\prime}}\left(x_{\S(k,j)}-x_{\S(i,j)}\right),
\end{eqnarray*}
where $\tau^{\prime}$ denotes the transpose of the partition $\tau$ and $\S(i,j)$ denotes the entry in the $i^{\text{th}}$ row and the $j^{\text{th}}$ column. 
\end{defi}

We state the relevant results on evaluations at special points and factorization for the symmetric group case. These are simple consequences of analogous facts for the Hecke algebra. Therefore, references to their proofs are given later on this paper, Remarks \ref{RemarkThmProof} and \ref{RemarkCorProof}. We start describing the specialization. 
\begin{defi}
Let $l:=\ell\left(\tau\right)$ and for $1\leq i\leq l-1$, define the sequence given by
\begin{eqnarray}\label{S1rows}
n_{i}:=N-\sum_{j=1}^{l-i}\tau_{j}+1.
\end{eqnarray}
Moreover, for notational convenience, we set $n_{l}:=N+1$. 
For the free variables $z_{1},\ldots,z_{\tau_l}$, $y_{1},\ldots y_{l-1}$, define the \emph{special point} $\overline{x}\in\mathbb{R}^{N}$ by 
\begin{eqnarray*}
\left(\overline{x}\right)_i = \left\{ 
\begin{array}{ccc}
y_j & & \text{ for } n_j\leq i \leq n_{j+1}-1, \\
z_i  & \hspace{0.5cm} & \text{ otherwise } 
\end{array}\right.
\end{eqnarray*}
\end{defi}
Our first remark is that with the sequence $\{n_i\}$, the rows of $\S_1$ are, from bottom to top, $\left[N,\dots,n_{l-1}\right]$, $\left[n_{l-1}-1,\dots,n_{l-2}\right]$, $\dots$, $\left[n_{2}-1,\dots,n_{1}\right]$, $\left[n_{1}-1,\dots,1\right]$. Our second remark is that we visualize the special point in the following way. For a shape $\tau$, fill the top row with the variables $z_i$ from right to left. Then, for $1\leq j \leq \ell(\tau)-1$, all the entries of the $j^{\text{th}}$ row are filled with $y_{j}$. 
In this way, the special point $\overline{x}$ corresponds to the reading of the tableau from top to bottom and from left to right. 
For instance, for $\tau=(3,3,2)$, 
\begin{eqnarray*}
\scalebox{0.7}{
\begin{tikzpicture}
\draw (0,0) rectangle (0.5,0.5);
\draw (0.5,0) rectangle (1,0.5);
\draw (1,0) rectangle (1.5,0.5);
\draw (0,0.5) rectangle (0.5,1);
\draw (0.5,0.5) rectangle (1,1);
\draw (1,0.5) rectangle (1.5,1);
\draw (0,1) rectangle (0.5,1.5);
\draw (0.5,1) rectangle (1,1.5);
\node at (0.25,0.25) {$y_1$};
\node at (0.75,0.25) {$y_1$};
\node at (1.25,0.25) {$y_1$};
\node at (0.25,0.75) {$y_2$};
\node at (0.75,0.75) {$y_2$};
\node at (1.25,0.75) {$y_2$};
\node at (0.25, 1.25) {$z_2$};
\node at (0.75,1.25) {$z_1$};
\end{tikzpicture}}
\end{eqnarray*}
and the special point is $\overline{x}=\left(z_1,z_2,y_2,y_2,y_2,y_1,y_1,y_1\right)$. 

Now, we present the two results that describe the polynomials $h_{\S}$ at the special point $\left(\overline{x}\right)$, for all $\S\in\tab_\tau$. As mentioned before, the proofs are consequences of the Hecke algebra case, which we investigate in Section \ref{SubSect6.2}.
\begin{thm}\label{ThmHS10}
Let $\S\in\tab_\tau$ be such that $\S\neq \S_1$. Then, $h_{\S}\left(\overline{x}\right)=0$.
\end{thm}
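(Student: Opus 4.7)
The plan is to identify the evaluation map $h \mapsto h(\overline{x})$ as a rank-one linear form on the irreducible $\mathcal{S}_N$-module $\mathcal{V} := \mathrm{span}\{h_\S : \S \in \tab_\tau\}$ and to show that its unique surviving basis element is $h_{\S_1}$. The space $\mathcal{V}$ is irreducible of isotype $\tau$, and via the module isomorphism $\mathcal{V} \cong V_\tau$ sending $h_\S \mapsto \S$ the basis $\{h_\S\}$ becomes the Young seminormal basis, which is orthogonal with respect to any $\mathcal{S}_N$-invariant bilinear form on $V_\tau$ (unique up to scalar on an irreducible self-dual module by Schur's lemma). By the very definition of $\overline{x}$, its stabilizer is the Young subgroup $G := \mathcal{S}_{\tau_1} \times \cdots \times \mathcal{S}_{\tau_{l-1}}$ permuting positions within each $y$-block, i.e., within each non-top row of $\S_1$; hence $\mathrm{ev}_{\overline{x}}$ is a $G$-invariant linear form on $\mathcal{V}$.

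The key input is a dimension count: by Frobenius reciprocity, $\dim \mathcal{V}^G = K_{\tau,\mu}$ with $\mu = (\tau_1, \tau_2, \ldots, \tau_{l-1}, 1^{\tau_l})$, and a short combinatorial check gives $K_{\tau,\mu} = 1$. Indeed, in any column-strict filling of shape $\tau$ with content $\mu$, the strict-column condition forces every $r$ (for $1 \le r \le l-1$) to lie at the bottom of its column, so row $r$ is entirely filled by $r$'s; the top row must then contain $l, l+1, \ldots, l+\tau_l-1$ each once and weakly increasing, which pins them in that order. Hence $\dim \mathcal{V}^G = 1$. Moreover $h_{\S_1}$ itself is $G$-invariant: for each generator $s_i \in G$, the positions $i, i+1$ lie in the same non-top row of $\S_1$, so entries $i$ and $i+1$ share a row in $\S_1$ and action formula (I) yields $h_{\S_1} s_i = h_{\S_1}$.

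Combining these observations, the unique (up to scalar) $G$-invariant linear form on $\mathcal{V}$ is, by orthogonality of the seminormal basis, the form $h \mapsto \langle h, h_{\S_1} \rangle_1$, which vanishes on every $h_\S$ with $\S \neq \S_1$. Since $\mathrm{ev}_{\overline{x}}$ is also $G$-invariant, it must be proportional to this form, yielding $h_\S(\overline{x}) = 0$ for all $\S \neq \S_1$. The main obstacles are (i) the Kostka identity $K_{\tau,\mu} = 1$, which is short but cannot be skipped, and (ii) making precise the transfer of the abstract invariant inner product on $V_\tau$ to orthogonality of $\{h_\S\}$ in $\mathcal{V}$, an application of Schur's lemma to the irreducible module. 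This representation-theoretic route mirrors the alternative strategy promised in Remark \ref{RemarkThmProof}: establish the Hecke-algebra analogue at a $t$-dependent special point $\overline{x}_t$ and then specialize $t \to 1$.
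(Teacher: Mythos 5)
Your proof is correct, but it follows a genuinely different route from the paper's. The paper never argues directly at $t=1$: it proves the stronger Hecke-algebra statement that $h_{\S}\left(\overline{x}(\Y)\right)=0$ for every $\Y\in\rst_\tau$ with $\Y\neq\S$ and $\inv_0(\Y)\leq\inv_0(\S)$ — base case $\S_0$ in Theorem \ref{hS0start}, inductive step along adjacent pairs via the recursion \eqref{hxs} in Propositions \ref{samerow} and \ref{vstep}, summarized in Corollary \ref{hSSzero} — and then deduces Theorem \ref{ThmHS10} by setting $\Y=\S_1$ and letting $t\to1$ (Remark \ref{RemarkThmProof}). You instead argue directly in the group case: evaluation at $\overline{x}$ is invariant under the stabilizer $G\cong\mathcal{S}_{\tau_1}\times\cdots\times\mathcal{S}_{\tau_{l-1}}$, whose blocks are exactly the non-top rows of $\S_1$; Frobenius reciprocity plus the Kostka count $K_{\tau,(\tau_1,\dots,\tau_{l-1},1^{\tau_l})}=1$ gives a one-dimensional space of $G$-invariant functionals on the irreducible isotype-$\tau$ module; and since $h_{\S_1}$ is $G$-fixed while the basis $\{h_{\S}\}$ is orthogonal for the invariant form (cleanest justification: the Jucys--Murphy elements are self-adjoint for any invariant form and the content vectors are distinct), any invariant functional kills $h_{\S}$ for $\S\neq\S_1$. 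Your argument is shorter, more conceptual, and applies verbatim to any isotype-$\tau$ basis, not just the minimal-degree one; what it does not deliver is what the paper uses later, namely the refined vanishing at all points $\overline{x}(\Y)$ graded by $\inv_0$ and the genuinely $t$-dependent statements (Theorems \ref{heqfS} and \ref{gSSzero}) feeding the factorization formulas — so your closing claim that this \emph{mirrors} Remark \ref{RemarkThmProof} overstates the parallel, since the invariance-of-evaluation argument does not transfer verbatim to the Hecke action. Two small points to tidy: the evaluation map lands in $\mathbb{F}[y_1,\dots,y_{l-1},z_1,\dots,z_{\tau_l}]$ rather than $\mathbb{F}$, so extend scalars to $\mathbb{F}(y,z)$ (or note that a $G$-invariant map into a trivial module is still a multiple of the unique invariant functional); and in the Kostka computation the phrase about each $r$ lying at the bottom of its column should be replaced by the standard induction that all entries $r$ are confined to rows at most $r$, forcing row $r$ to consist entirely of $r$'s.
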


\begin{cor}\label{CorHS1A}
$h_{\S_{1}}\left(\overline{x}\right) =a_{\S_{1}}\left( \overline{x}\right) $.
\end{cor}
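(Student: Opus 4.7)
The plan is to deduce Corollary~\ref{CorHS1A} from Theorem~\ref{ThmHS10} by writing $a_{\S_1}$ in the basis $\{h_{\S}:\S\in\tab_\tau\}$ with a triangular structure that forces the coefficient of $h_{\S_1}$ to equal $1$ and every remaining coefficient to be indexed by some $\S'\neq\S_1$. I would first verify the base identity $h_{\S_0}(x)=a_{\S_0}(x)$ at $t=1$: by definition $h_{\S_0}(x)=\prod(x_i-x_j)$ ranges over pairs $i<j$ lying in the same column of $\S_0$, and because the entries of $\S_0$ are placed column by column in strictly decreasing order upward, this product reorganizes as $\prod_{c}\prod_{i<k\le\tau'_{c}}(x_{\S_0(k,c)}-x_{\S_0(i,c)})=a_{\S_0}(x)$.

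The main step is an induction on $\inv(\S_0)-\inv(\S)$ establishing
\begin{equation*}
h_{\S}(x)\;=\;a_{\S}(x)\;+\;\sum_{\substack{\S'\in\tab_\tau\\ \inv(\S')>\inv(\S)}} c_{\S,\S'}\,h_{\S'}(x)
\end{equation*}
for every $\S\in\tab_\tau$. For the inductive step I would pick $\tilde\S\in\tab_\tau$ and an index $i$ with $\tilde\S^{(i)}=\S$ realized by a move of the type covered by~\eqref{hSgroup}, so that $\inv(\tilde\S)>\inv(\S)$. Formula~\eqref{hSgroup} specialized at $t=1$ then reads $h_{\S}=h_{\tilde\S}\,s_i-\tfrac{1}{b}\,h_{\tilde\S}$ with $b=\mathrm{CT}_{\tilde\S}[i]-\mathrm{CT}_{\tilde\S}[i+1]$, while the variable swap $x_i\leftrightarrow x_{i+1}$ turns the columnwise product defining $a_{\tilde\S}$ into the one defining $a_{\S}$, because the move conditions $\mathrm{row}_{\tilde\S}[i]<\mathrm{row}_{\tilde\S}[i+1]$ and $\mathrm{col}_{\tilde\S}[i]>\mathrm{col}_{\tilde\S}[i+1]$ force $i$ and $i+1$ to lie in distinct rows and distinct columns of $\tilde\S$. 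Hence $a_{\tilde\S}\,s_i=a_{\S}$. Substituting the inductive expansion of $h_{\tilde\S}$ into the displayed identity and expanding each remaining $h_{\S''}\,s_i$ via the action formulas (I--IV) yields the expansion claimed for $h_{\S}$, since every new $h$-term that is produced carries inv at least $\inv(\S'')-1\ge\inv(\tilde\S)>\inv(\S)$.

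Taking $\S=\S_1$, the inv-minimal element of $\tab_\tau$, every summand in the correction is indexed by some $\S'\neq\S_1$, so Theorem~\ref{ThmHS10} gives $h_{\S'}(\overline{x})=0$ for all of them. Therefore $h_{\S_1}(\overline{x})=a_{\S_1}(\overline{x})$, as desired. The step I expect to be the most delicate is the compatibility identity $a_{\tilde\S}\,s_i=a_{\S}$, together with the inv bookkeeping that controls each $h_{\S''}\,s_i$; both reduce to case analysis on the relative position of $i$ and $i+1$ in the tableau involved and a careful reading of the action formulas (I--IV).
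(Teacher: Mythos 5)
Your proposal is correct, but it follows a genuinely different route from the paper. The paper obtains Corollary \ref{CorHS1A} as the $t\to 1$ limit of the Hecke-algebra evaluation identity of Theorem \ref{heqfS}, namely $h_{\S}(\overline{x}(\S))=f_{\S}(\overline{x}(\S))$, which is itself proved by induction on inversions using the vanishing results (Theorem \ref{hS0start}, Proposition \ref{vstep}, Corollary \ref{hSSzero}) at every inductive step together with Lemma \ref{fSS} for the modified alternating polynomials $f_{\S}$; Remark \ref{RemarkCorProof} then specializes to $\S=\S_1$, where $f_{\S_1}$ degenerates to $a_{\S_1}$ and $\overline{x}(\S_1)$ to $\overline{x}$. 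You instead stay entirely at $t=1$ and prove the stronger \emph{polynomial} identity $h_{\S}=a_{\S}+\sum_{\inv(\S')>\inv(\S)}c_{\S,\S'}h_{\S'}$, using only the recurrence \eqref{hSgroup}, the relabelling identity $a_{\tilde\S}s_i=a_{\tilde\S^{(i)}}$ (which is indeed correct: in a case-III move $i$ and $i+1$ occupy different rows and columns, so the swap merely relabels the column entries and no sign enters), and the action formulas for the inversion bookkeeping; Theorem \ref{ThmHS10} is then invoked a single time at the end. What your route buys is an argument free of the Hecke machinery, of $f_{\S}$ and of any limit, and a basis-type expansion valid as polynomials rather than only at the special point; what the paper's route buys is the $t$-generic statement of Theorem \ref{heqfS}, which is needed later for the Macdonald factorizations, so the corollary comes for free there. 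Two details you should make explicit: (i) your induction needs that every $\S\neq\S_0$ arises as $\tilde\S^{(i)}$ for some $\tilde\S\in\tab_\tau$ with $\inv(\tilde\S)=\inv(\S)+1$, the same connectivity of adjacent-pair moves the paper uses (e.g.\ in Corollary \ref{hSSzero}); and (ii) in the inductive step it is cleanest to expand only $h_{\tilde\S}s_i$ and keep the term $-\frac{1}{b}h_{\tilde\S}$ as it stands (it is already an admissible higher-inversion term); if you substitute the expansion into both occurrences of $h_{\tilde\S}$, a leftover $-\frac{1}{b}a_{\tilde\S}$ appears and must be re-absorbed via the inductive hypothesis for $\tilde\S$. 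Neither point is a real obstacle.
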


Recall the projection of the symmetric Jack polynomial of minimal degree described in Section \ref{SubSubSectHWSJP}: 
\begin{eqnarray*}
F_{\tau}\rho=\sum_{\S\in\tab_\tau}\frac{1}{\left\langle \S,\S\right\rangle _{0}}h_{\S}J_{\alpha\left(\S\right) },
\end{eqnarray*}
where the leading term of $J_{\alpha\left(\S_{1}\right) }$ is $x^{\mu}$ with
\begin{eqnarray*}
\mu=\left( \left(m+\left(l-2\right)m_{0}\right)^{\tau_{l}},\left(m+\left(l-3\right)m_{0}\right)^{n_{0}-1},\ldots,m^{n_{0}-1},0^{n-1}\right).
\end{eqnarray*}
This polynomial is singular for $\kappa=-\frac{m}{n}\notin\mathbb{N}$, with $2\leq n\leq N$, $m_{0}=\frac{m}{\gcd(m,n)}\geq 2$ and $n_{0}=\frac{n}{\gcd(m,n)}\geq 2$. Moreover, it is of isotype $\tau=\left(n-1,\left(n_{0}-1\right)^{l-2},\tau_{l}\right)$, so that $1\leq\tau_{l}=N-\left(n-1\right)-\left(l-2\right)\left(n_{0}-1\right) $.

When we evaluate this polynomial at the special point $\overline{x}$, all the terms but the one with $\S=\S_{1}$ vanish and we obtain
\begin{eqnarray*}
F_{\tau}\rho\left(\overline{x}\right)=\frac{1}{\left\langle \S_{1},\S_{1}\right\rangle }a_{\S_{1}}\left( \overline{x}\right) J_{\mu}\left(\overline{x}\right) .
\end{eqnarray*}
Recall the leading term of $F_{\tau}\rho$ is $x^{\mu+\lambda}$ with $\mu+\lambda=qs(m_{0}+1,n_{0}-1,m+1,l-1,n-1,\tau_{l})$.
Thus, there is a direct relationship between factorizations of $F_{\tau}\rho\left( \overline{x}\right)$, which are highest weight Jack polynomials, and $J_{\mu}\left( \overline{x}\right) $, which is a singular nonsymmetric Jack polynomial.

We finish this case with an example. Let $N=8$, $\kappa=-2/3$ and $\mu=\left(6,6,4,4,2,2,0,0\right)$. Then, $\tau=\left(2^{4}\right)$, $\lambda=\left( 3,3,2,2,1,1,0,0\right)$ and  $\overline{x}=\left(z_1,z_2,y_3,y_3,y_2,y_2,y_1,y_1\right)$, for which
\begin{eqnarray*}
h_{\S_{1}}\left(\overline{x}\right) =  \left( y_{1}-y_{2}\right)^2\left(y_{1}-y_{3}\right)^2\left(y_{2}-y_{3}\right)^{2} \cdot \prod_{i=1}^{2}\prod_{j=1}^{3}\left(z_{i}-y_{j}\right).
\end{eqnarray*}
Moreover, $F_{\tau}\rho$ is a specialization of the symmetric Jack polynomial $J_{\lambda+\mu}=J_{(9,9,6,6,3,3,0,0)}$ to $\kappa=-2/3$. 

\subsection{The Hecke algebra case}\label{SubSect6.2}

There is a similarity with the symmetric group case. However, the difficulty arises because in this case there are factors of the form $t^{u}x_{i}-x_{j}$, where the exponent $u$ is related to a fixed tableau. 

Our starting point is the same as for the symmetric group case: the polynomials $\left\{h_{\S}\right\} $ of minimum degree for a given isotype $\tau$  that were defined in \eqref{Eqp_S0} and \eqref{hSgroup}, together with the basis of isotype $\tau$, $\{g_\S: \S\in\tab_\tau\}$ defined in Definition \ref{BasisG}.

The property $g_{\S}T_{i}=-g_{\S}$, for $\S$ such that $\col{S}{i} = \col{S}{i+1}$ has an important consequence that we describe in the following result. 
\begin{prop}\label{gdiv}
Let $\S\in\tab_\tau$ be such that $\col{S}{i} = \col{S}{i+1}$. Then, $g_{\S}(x)$ is divisible by $(tx_{i}-x_{i+1})$ and the quotient $\displaystyle{\frac{g_{\S}(x)}{tx_{i}-x_{i+1}}}$ is symmetric in $x_i$ and $x_{i+1}$. 
Furthermore for $i<j\leq m$ such that $\col{S}{j}=\col{S}{i}$,  $g_{\S}(x)$ is divisible by $\displaystyle{\prod_{i\leq j<k\leq m}\left( tx_{j}-x_{k}\right)}$.
\end{prop}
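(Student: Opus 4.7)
The plan is to extract everything from the single functional consequence of the Hecke relation $g_{\S}T_{j}=-g_{\S}$, which holds precisely when $\col{S}{j}=\col{S}{j+1}$ by the action formulas for $\tau(T_{i})$. Starting from $g_{\S}T_{j}=-g_{\S}$ and the polynomial action $pT_{j}=(1-t)x_{j+1}(p-p(xs_{j}))/(x_{j}-x_{j+1})+tp(xs_{j})$, I would clear denominators to produce the identity
\begin{eqnarray*}
g_{\S}(x)(tx_{j+1}-x_{j}) = g_{\S}(xs_{j})(tx_{j}-x_{j+1}).
\end{eqnarray*}
Specializing $x_{j+1}=tx_{j}$ kills the right-hand side, while the left-hand side becomes $g_{\S}(x)|_{x_{j+1}=tx_{j}}\cdot x_{j}(t^{2}-1)$; this forces $g_{\S}|_{x_{j+1}=tx_{j}}=0$, hence $(tx_{j}-x_{j+1})\mid g_{\S}$. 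Writing $g_{\S}=(tx_{j}-x_{j+1})q$ and substituting back into the identity yields $q(x)=q(xs_{j})$, which is the required symmetry. Taking $j=i$ proves the first assertion.

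For the second assertion I would induct on $m-i$, the base $m-i=1$ being the first assertion. For the inductive step, the hypothesis places entries $i,i+1,\dots,m$ in a common column, so $g_{\S}T_{j}=-g_{\S}$ for every $j\in\{i,\dots,m-1\}$, and the induction hypothesis applied to the subrange $[i,m-1]$ gives that $W_{i,m-1}:=\prod_{i\leq a<b\leq m-1}(tx_{a}-x_{b})$ divides $g_{\S}$. Writing $g_{\S}=W_{i,m-1}\cdot q$ and using that $W_{i,m-1}$ does not involve $x_{m}$, so that $W_{i,m-1}(xs_{m-1})=W_{i,m-2}\prod_{a=i}^{m-2}(tx_{a}-x_{m})$, the functional identity for $j=m-1$ reduces, after cancelling the common subproduct $W_{i,m-2}$ from both sides, to
\begin{eqnarray*}
\left[\prod_{a=i}^{m-2}(tx_{a}-x_{m-1})\right](tx_{m}-x_{m-1})\,q(x) = \left[\prod_{a=i}^{m-2}(tx_{a}-x_{m})\right](tx_{m-1}-x_{m})\,q(xs_{m-1}).
\end{eqnarray*}
Specializing $x_{m}=tx_{a}$ for each fixed $a\in\{i,\dots,m-1\}$ annihilates the right-hand side (via $(tx_{a}-x_{m})$ when $a\leq m-2$, or via $(tx_{m-1}-x_{m})$ when $a=m-1$), while the coefficient of $q(x)$ on the left remains a nonzero polynomial under that specialization. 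This forces $q|_{x_{m}=tx_{a}}=0$, hence $(tx_{a}-x_{m})\mid q$. Because the $m-i$ linear forms $(tx_{a}-x_{m})$ are pairwise coprime, their product divides $q$, and we conclude $g_{\S}=W_{i,m-1}\cdot\prod_{a=i}^{m-1}(tx_{a}-x_{m})\cdot q'=W_{i,m}\cdot q'$, as desired.

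The main obstacle will be the bookkeeping inside the inductive step: correctly tracking how $s_{m-1}$ permutes the linear factors of $W_{i,m-1}$, isolating the common factor $W_{i,m-2}$ that cancels from both sides of the derived identity, and verifying at every specialization $x_{m}=tx_{a}$ that the surviving prefactor of $q(x)$ is a genuinely nonzero polynomial so that vanishing of $q$ at the specialization can in fact be deduced. Once this is in place, the symmetric-group analogue invoked for Theorem \ref{ThmHS10} and Corollary \ref{CorHS1A} drops out by letting $t\to 1$ throughout the same argument.
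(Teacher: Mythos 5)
Your proof is correct and rests on the same engine as the paper's: the relation $g_{\S}T_{j}=-g_{\S}$ for every $j$ with $\col{S}{j}=\col{S}{j+1}$, rewritten as the polynomial identity $g_{\S}(x)\left(tx_{j+1}-x_{j}\right)=g_{\S}(xs_{j})\left(tx_{j}-x_{j+1}\right)$, from which both the divisibility and the $s_{j}$-symmetry of the quotient follow. The only difference is organizational: the paper propagates the linear factors one at a time using the $s_{k}$-symmetry of the quotients $g_{\S}/(tx_{k}-x_{k+1})$, whereas you peel off the full product $\prod_{i\leq a<b\leq m-1}(tx_{a}-x_{b})$ and extract all the new factors $(tx_{a}-x_{m})$ at once by specializing $x_{m}=tx_{a}$ and invoking pairwise coprimality, which is equally valid bookkeeping.
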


\begin{proof}
The equation $g_{\S}T_{i}=-g_{\S}$ is equivalent to $\dfrac{g_{\S}(x)}{tx_{i}-x_{i+1}}=\dfrac{g_{S}\left(xs_{i}\right)}{tx_{i+1}-x_{i}}$;, which is symmetric in $x_i$ and $x_{i+1}$. Moreover, $\dfrac{g_{\S}(x)}{tx_{i}-x_{i+1}}\left(tx_{i+1}-x_{i}\right) =g_{\S}\left(xs_{i}\right) $ is a polynomial, and thus by the unique factorization property $g_{\S}(x)$ is divisible by $tx_{i}-x_{i+1}$. 

If $g_{\S}(x)$ is divisible by $\left(tx_{j}-x_{k}\right)$ and $k+1\leq m$, then the symmetry of $\dfrac{g_{\S}(x)}{tx_{k}-x_{k+1}}$ under $s_{k}$ shows that $g_{\S}(x)$ is divisible by $\left(tx_{j}-x_{k+1}\right)$. A similar argument applies to the situation in which $i\leq j<k-1\leq m-1$, and therefore, it shows that $g_{\S}(x)$ is divisible by $\left(tx_{j+1}-x_{k}\right) $. Argue inductively first for $k=i+1,\ldots,m$ to demonstrate the factors $\left(tx_{i}-x_{k}\right) $ and then show that $\left(tx_{i+1}-x_{k}\right) ,\left( tx_{i+2}-x_{k}\right) ,\ldots$ are factors.
\end{proof}

Let us describe a modified version of the alternating polynomials defined for the symmetric group case. 
\begin{defi}\label{fSdef}
Let $\S\in\tab_\tau$ and $(i,j)$ be a pair such that $i<j$ and $\col{S}{i}=\col{S}{j}$. We denote by $R_\S(i,j)$ the number of $k>i$ such that $\row{S}{k}=\row{S}{j}$ and $\col{S}{k}\geq \col{S}{i}$. The \emph{modified alternating polynomials} are defined as
\begin{eqnarray*}
f_{\S}(x) :=\prod_{\substack{1\leq i<j\leq N\\ \col{S}{i}=\col{S}{j}}}\left(t^{R_\S(i,j)}x_{i}-x_{j}\right).
\end{eqnarray*}
\end{defi}
For example, consider $\S=\scalebox{0.7}{
\begin{tikzpicture}
\draw (0,0) rectangle (0.5,0.5);
\draw (0.5,0) rectangle (1,0.5);
\draw (1,0) rectangle (1.5,0.5);
\draw (1.5,0) rectangle (2,0.5);
\draw (2,0) rectangle (2.5,0.5);
\draw (0,0.5) rectangle (0.5,1);
\node at (0.25,0.25) {6};
\node at (0.75,0.25) {5};
\node at (1.25,0.25) {4};
\node at (1.75,0.25) {2};
\node at (2.25,0.25) {1};
\node at (0.25,0.75) {3};
\end{tikzpicture}}$. Then, $R_{\S}(3,6)=3$ and then $f_{\S}(x)=t^{3}x_{3}-x_{6}$. Observe that for $\S_{0}$ we have that $R_{\S_0}(i,j)=1$, for all $i<j$ such that $\col{S_0}{i}=\col{S_0}{j}$, and then $f_{\S_0}(x)=h_{\S_0}(x)$.

Note that $f_{\S}(x)=a_{\S}(x)$ in the limiting symmetric group case $t\longrightarrow 1$.
The following definition is very useful for our study.
\begin{defi}
For $\S\in\tab_\tau$, we say that $\left(\S,\S^{(i)}\right)$ is an \emph{adjacent pair} if $\row{S}{i}<\row{S}{i+1}$.
\end{defi}
The following result shows how $f_{\S^{(i)}}$ can be computed by using $f_\S$ for adjacent pairs since $R_\S(i+1,u)=b$ and $R_{\S^{(i)}}(i,u) = b+1$, with $b=\col{S}{i}-\col{S}{i+1}$ and $u=\S(\row{S}{i},\col{S}{i+1})$. 
\begin{lem}\label{fSS}
Let $\S\in\tab_\tau$ be such that $\left(\S,\S^{(i)}\right) $ is an adjacent pair. Let $b=\col{S}{i}-\col{S}{i+1}$ and $u$ be entry in the position $(\row{S}{i},\col{S}{i+1})$, that is $u=\S(\row{S}{i},\col{S}{i+1})$. Then, 
\begin{eqnarray*}
f_{\S^{(i)}}(x)=f_{\S}(x)s_i\cdot \frac{t^{b+1}x_{i}-x_{u}}{t^{b}x_{i}-x_{u}}.
\end{eqnarray*}
\end{lem}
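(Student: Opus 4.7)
The strategy is to compare $f_{\S}(x)s_i$ and $f_{\S^{(i)}}(x)$ factor by factor, exploiting the fact that $\S$ and $\S^{(i)}$ differ only by swapping the entries $i$ and $i+1$, which sit at positions $(r,c)$ and $(r',c')$ respectively with $r<r'$ and $c-c' = b\geq 1$. Both products run over pairs $(p,q)$ with $p<q$ sharing a column. First I would check that the contribution of pairs with $p,q\notin\{i,i+1\}$ is identical on the two sides: column membership is preserved, $s_i$ fixes $x_p$ and $x_q$, and $R_{\S}(p,q)=R_{\S^{(i)}}(p,q)$, since any $k\in\{i,i+1\}$ contributing to one count is replaced by the other element of $\{i,i+1\}$ contributing to the other (the row/column data of $i$ and $i+1$ simply trade places).

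Next I would handle factors involving $i$ or $i+1$. These arise from column $c$ (pairs with $i$ in $\S$, replaced by pairs with $i+1$ in $\S^{(i)}$) and from column $c'$ (pairs with $i+1$ in $\S$, replaced by pairs with $i$ in $\S^{(i)}$). After applying $s_i$, a factor such as $\bigl(t^{R_{\S}(p,i)}x_p-x_i\bigr)$ from column $c$ with $p<i$ becomes $\bigl(t^{R_{\S}(p,i)}x_p-x_{i+1}\bigr)$, to be matched against $\bigl(t^{R_{\S^{(i)}}(p,i+1)}x_p-x_{i+1}\bigr)$ in $f_{\S^{(i)}}(x)$. The same swap-of-roles argument as above gives $R_{\S}(p,i)=R_{\S^{(i)}}(p,i+1)$ and, analogously, $R_{\S}(i,p)=R_{\S^{(i)}}(i+1,p)$ for $p>i$ in column $c$, and $R_{\S}(p,i+1)=R_{\S^{(i)}}(p,i)$ for $p<i$ in column $c'$.

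The sole deviation, and the crux of the lemma, occurs in column $c'$ for pairs of the form $(i,p)$ with $p>i+1$ in $f_{\S^{(i)}}$. There is exactly one such entry lying in row $r$, namely $u=\S(r,c')$. For this pair, $k=i+1$ qualifies for the count defining $R_{\S^{(i)}}(i,u)$, because in $\S^{(i)}$ it sits at $(r,c)$, so its row matches that of $u$ and $c\geq c'$; yet the strict inequality $k>i+1$ excludes $k=i+1$ from $R_{\S}(i+1,u)$. Hence $R_{\S^{(i)}}(i,u)=R_{\S}(i+1,u)+1$. A short count gives $R_{\S}(i+1,u)=b$, since the entries exceeding $i+1$ in row $r$ at columns $\geq c'$ occupy precisely the positions $(r,c'),(r,c'+1),\dots,(r,c-1)$; the entry at $(r,c)$ is $i$ and is therefore excluded.

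Assembling the pieces, every factor of $f_{\S^{(i)}}(x)$ matches the corresponding factor of $f_{\S}(x)s_i$ except the one indexed by $(i,u)$, which reads $\bigl(t^{b+1}x_i-x_u\bigr)$ in $f_{\S^{(i)}}$ and $\bigl(t^{b}x_i-x_u\bigr)$ in $f_{\S}(x)s_i$; their ratio produces the claimed correction factor $\dfrac{t^{b+1}x_i-x_u}{t^{b}x_i-x_u}$. The main obstacle is purely combinatorial bookkeeping: verifying that the four families of $R$-value equalities hold and isolating the unique exceptional pair $(i,u)$; no deeper input is required.
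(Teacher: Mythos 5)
Your proposal is correct and matches the paper's (largely implicit) argument: the paper states the lemma with only the key observation that $R_{\S}(i+1,u)=b$ and $R_{\S^{(i)}}(i,u)=b+1$, leaving the factor-by-factor matching to the reader. Your detailed bookkeeping — showing all $R$-values coincide after swapping the roles of $i$ and $i+1$ except for the single pair involving $u$, whose exponent increases from $b$ to $b+1$ — is precisely the intended justification and is carried out correctly.
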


Now we are ready to introduce the specialization, which in this case depends on $\Y\in\rst_\tau$.
\begin{defi}
For the free variables $y_{1}$, $y_{2}$, $\dots$, $y_{l-1}$, and $z_{1}$, $\dots$, $z_{\tau_{l}}$, we define the \emph{special point at $\Y\in\rst_\tau$} by
\begin{eqnarray*}
\overline{x}(\Y)_k = \left\{
\begin{array}{ll}
t^{1-\col{Y}{k}}y_{\row{Y}{k}}, & \text{ for } 1\leq \row{Y}{k}< l, \\
z_{\tau_l+1-\col{Y}{k}}, & \text{ for } \row{Y}{k}=l.
\end{array}
\right.
\end{eqnarray*}
\end{defi}
For example, consider $\Y=%
\scalebox{0.7}{
\begin{tikzpicture}
\draw (0,0) rectangle (0.5,0.5);
\draw (0.5,0) rectangle (1,0.5);
\draw (1,0) rectangle (1.5,0.5);
\draw (0,0.5) rectangle (0.5,1);
\draw (0.5,0.5) rectangle (1,1);
\node at (0.25,0.25) {5};
\node at (0.75,0.25) {3};
\node at (1.25,0.25) {2};
\node at (0.25,0.75) {4};
\node at (0.75,0.75) {1};
\end{tikzpicture}}$ .Then, $\overline{x}\left(\Y\right) =\left( z_{1},t^{-2}y_{1},t^{-1}y_{1},z_{2},y_{1}\right)$. 

We point out that the various polynomials $f_\S$, $g_\S$, and $h_\S$ are indexed by $\S\in\tab_\tau$, while the special points are labeled by $\Y\in\rst_\tau$. Recall also that $\tab_\tau\subset \rst_\tau$. Moreover, if $\Y\in\rst_\tau$ and $\row{Y}{i}\neq \row{Y}{i+1}$, then $\Y^{(i)}\in\rst_\tau$ and $\overline{x}\left(\Y^{(i)}\right) =\overline{x}(\Y)s_{i}$. 

Here is graphical representation of the idea. We built $\mathrm{T}_\tau$, for $\tau\in\pars{N}$ by filling the boxes of $\tau$ in the following way: $\mathrm{T}_\tau (i,j)=t^{1-j}y_{i}$, for $1\leq i<l$ and $1\leq j\leq\tau_{i}$, and $\mathrm{T}_\tau(l,j)=z_{\tau_{l}+1-j}$, for $1\leq j\leq\tau_{l}$. For instance, for $\tau=(4,2,2)$,
\begin{center}
\vspace{-0.7cm}\begin{tikzpicture}
\node at (-0.7,1) {$\mathrm{T}_\tau =$}; 
\scalebox{0.7}{
\draw (0,0) rectangle (0.7,1);
\draw (0.7,0) rectangle (2,1);
\draw (2,0) rectangle (3.4,1);
\draw (3.4,0) rectangle (4.8,1);
\draw (0,1) rectangle (0.7,2);
\draw (0.7,1) rectangle (2,2);
\draw (0,3) rectangle (0.7,2);
\draw (0.7,3) rectangle (2,2);
\node at (0.35,0.5) {$y_{1}$};
\node at (1.35,0.5) {$t^{-1}y_{1}$};
\node at (2.7,0.5) {$t^{-2}y_{1}$};
\node at (4.1,0.5) {$t^{-3}y_{1}$};
\node at (0.35,1.5) {$y_{2}$};
\node at (1.35,1.5) {$t^{-1}y_{2}$};
\node at (0.35,2.5) {$z_{2}$};
\node at (1.35,2.5) {$z_1$};}
\end{tikzpicture}
\end{center}
Then, $\overline{x}(\Y)_k = \mathrm{T}_\tau \left(\row{S}{k},\col{S}{k}\right)$, for $1\leq k\leq N$. Notice that $\mathrm{T}_\tau$ is the same for all $\Y\in\rst_\tau$ and that the special point is a permutation of the list of entries of $\mathrm{T}_\tau$. 

We demonstrate vanishing properties of type $h_{\S}\left(\overline{x}(\Y)\right)=0$, for  $\Y\in\rst_\tau$, by using inductive techniques starting with $h_{\S_{0}}$. The basic step relies on the transformation for the adjacent pair $\left(\S,\S^{(i)}\right)$:
\begin{multline}
h_{\S^{(i)}}(x) =h_{\S}T_{i}(x)-\dfrac{t-1}{1-t^{\col{S}{i+1}-\col{S}{i}}}h_{\S}(x)= \\ 
=\left[ \frac{x_{i}\left(1-t\right)}{x_{i}-x_{i+1}}-\dfrac{t-1}{1-t^{\col{S}{i+1}-\col{S}{i}}}\right] h_{\S}(x)+\frac{tx_{i}-x_{i+1}}{x_{i}-x_{i+1}}h_{\S}\left(xs_{i}\right). \label{hxs}
\end{multline}
For the purpose of evaluating this formula at $\overline{x}(\Y)$, note that $\overline{x}(\Y)_i\neq\overline{x}(\Y)_{i+1}$.
\begin{prop}\label{samerow}
Consider $\S\in\tab_\tau$ such that $\left(\S,\S^{(i)}\right)$ is an adjacent pair, and $\Y\in\rst_\tau$. If $h_{\S}\left(\overline{x}(\Y) \right) =0$ and $\row{Y}{i}=\row{Y}{i+1}$, then $h_{\S^{(i)}}\left( \overline{x}(\Y)\right) =0$.
\end{prop}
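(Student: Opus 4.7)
The plan is to substitute $\overline{x}(\Y)$ into equation \eqref{hxs}. Since $h_{\S}(\overline{x}(\Y))=0$ by hypothesis, the first summand on the right-hand side of \eqref{hxs} disappears, and the conclusion reduces to the identity
\begin{eqnarray*}
\frac{t\,\overline{x}(\Y)_i-\overline{x}(\Y)_{i+1}}{\overline{x}(\Y)_i-\overline{x}(\Y)_{i+1}}\; h_{\S}\bigl(\overline{x}(\Y)s_i\bigr)=0.
\end{eqnarray*}
Because $\Y\in\rst_\tau$ has rows decreasing from left to right and $i<i+1$ both sit in the common row $r:=\row{Y}{i}=\row{Y}{i+1}$, the entry $i+1$ must lie immediately to the left of $i$; equivalently, $\col{Y}{i+1}=\col{Y}{i}-1$.

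I would then split on whether $r<l$ or $r=l$. In the first case, the two coordinates become $\overline{x}(\Y)_i=t^{1-\col{Y}{i}}y_r$ and $\overline{x}(\Y)_{i+1}=t\,\overline{x}(\Y)_i$, so the numerator $t\,\overline{x}(\Y)_i-\overline{x}(\Y)_{i+1}$ is zero while the denominator $(1-t)\,\overline{x}(\Y)_i$ is nonzero; the prefactor itself vanishes and the conclusion follows. In the second case, the two coordinates are genuinely distinct free variables $z_j$ and $z_{j+1}$ with $j=\tau_l+1-\col{Y}{i}$, so the prefactor does not vanish and the task instead becomes to prove $h_{\S}(\overline{x}(\Y)s_i)=0$. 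Here one observes that $\overline{x}(\Y)s_i$ differs from $\overline{x}(\Y)$ only by interchanging $z_j$ and $z_{j+1}$ at positions $i$ and $i+1$, and that neither of these two variables appears at any other position of $\overline{x}(\Y)$. Hence $h_{\S}(\overline{x}(\Y)s_i)$ equals the polynomial $h_{\S}(\overline{x}(\Y))\in\mathbb{F}[y_1,\dots,y_{l-1},z_1,\dots,z_{\tau_l}]$ after the transposition of arguments $z_j\leftrightarrow z_{j+1}$; the hypothesis that $h_{\S}(\overline{x}(\Y))$ is identically zero then forces the transposed polynomial to be identically zero as well.

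The main obstacle is the $r=l$ case, where the vanishing of the prefactor $(tx_i-x_{i+1})/(x_i-x_{i+1})$ is not automatic, since the coordinates at positions $i$ and $i+1$ are independent $z$-variables rather than $t$-scalings of a common $y_r$. Interpreting the hypothesis $h_{\S}(\overline{x}(\Y))=0$ as an identity in the polynomial ring $\mathbb{F}[y_1,\dots,y_{l-1},z_1,\dots,z_{\tau_l}]$, which is natural given that the $y_j$ and $z_k$ are free variables, and exploiting the obvious permutation symmetry of that ring, is precisely what allows the argument to proceed uniformly across both cases.
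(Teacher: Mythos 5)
Your argument is correct and follows the paper's own proof essentially verbatim: both substitute into \eqref{hxs}, note the first term dies by hypothesis, and split into the case $\row{Y}{i}<l$ (where $\overline{x}(\Y)_{i+1}=t\,\overline{x}(\Y)_i$ kills the factor $t\overline{x}(\Y)_i-\overline{x}(\Y)_{i+1}$) and the case $\row{Y}{i}=l$ (where the vanishing, viewed as an identity in the free variables $z_1,\dots,z_{\tau_l}$, is preserved under swapping $z_j\leftrightarrow z_{j+1}$, so $h_{\S}\left(\overline{x}(\Y)s_i\right)=0$). Your added observation that $i+1$ sits immediately to the left of $i$ just makes explicit what the paper labels ``by definition.''
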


\begin{proof}
If $\row{Y}{i}<l$, then $\overline{x}(\Y)_{i+1}=t\overline{x}(\Y)_{i}$  by definition, and both terms in \eqref{hxs} vanish. 
If $\row{Y}{i}=l$, then $h_{\S}\left(\overline{x}(\Y)\right) =0$ is an identity in the free
variables $z_{1},z_{2},\ldots,z_{\tau_{l}}$. Moreover, this identity is thus invariant under any permutation of these variables, and so $h_{\S}\left( \overline{x}(\Y)s_{i}\right) =0$ in \eqref{hxs}.
\end{proof}

For $\rst_\tau$, the induction is based on an inversion count. For $\Y\in\rst_\tau$, define an inversion statistics as
$\inv_{0}(\Y):=\#\{(i,j)|\ i<j \text{ and } \row{Y}{i}< \row{Y}{j}\}$. For instance, take $\tau=(3,2)$. For $\Y=\scalebox{0.7}{
\begin{tikzpicture}
\draw (0,0) rectangle (0.5,0.5);
\draw (0.5,0) rectangle (1,0.5);
\draw (1,0) rectangle (1.5,0.5);
\draw (0,0.5) rectangle (0.5,1);
\draw (0.5,0.5) rectangle (1,1);
\node at (0.25,0.25) {5};
\node at (0.75,0.25) {2};
\node at (1.25,0.25) {1};
\node at (0.25,0.75) {4};
\node at (0.75,0.75) {3};
\end{tikzpicture}}$ we have $\inv_0(\Y)=4$ coming from the pairs $(1,3)$, $(1,4)$, $(2,3)$, and $(2,4)$. For $\Y_1=\scalebox{0.7}{
\begin{tikzpicture}
\draw (0,0) rectangle (0.5,0.5);
\draw (0.5,0) rectangle (1,0.5);
\draw (1,0) rectangle (1.5,0.5);
\draw (0,0.5) rectangle (0.5,1);
\draw (0.5,0.5) rectangle (1,1);
\node at (0.25,0.25) {5};
\node at (0.75,0.25) {4};
\node at (1.25,0.25) {3};
\node at (0.25,0.75) {2};
\node at (0.75,0.75) {1};
\end{tikzpicture}}$ we have $\inv_0(\Y_1)=0$. 

Let us see some direct properties of $\inv_0(\Y)$. 
\begin{lem} 
Let $\Y\in \rst_\tau$. 
\begin{itemize}
\item If $\row{Y}{i}\neq \row{Y}{i+1}$, then $\inv_0(\Y^{(i)}) = \inv_0(\Y)\pm 1$.
\item For $\Y=\S_{1}$, $\inv_{0}\left(\Y\right) =0$. 
\item If $\left(\Y,\Y^{(i)}\right)$ is an adjacent pair, then $\inv_{0}\left(\Y^{(i)}\right) =\inv_{0}\left(\Y\right) -1$. 
\end{itemize}
\end{lem}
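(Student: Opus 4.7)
The plan is to prove the three bullets by direct bookkeeping on the row-function $k \mapsto \row{Y}{k}$, using that passing from $\Y$ to $\Y^{(i)}$ merely swaps the positions of the entries $i$ and $i+1$. Concretely, $\row{Y^{(i)}}{i} = \row{Y}{i+1}$, $\row{Y^{(i)}}{i+1} = \row{Y}{i}$, and $\row{Y^{(i)}}{k} = \row{Y}{k}$ for all $k \notin \{i, i+1\}$. Everything reduces to comparing the two counts $\inv_0$ term by term under this elementary change.

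For the first bullet, I would partition the pairs counted by $\inv_0$ into three classes: (A) pairs $(j,k)$ with $\{j,k\} \cap \{i, i+1\} = \emptyset$, (B) pairs with $|\{j,k\} \cap \{i, i+1\}| = 1$, and (C) the pair $(i, i+1)$ itself. Class (A) is manifestly unchanged. For class (B), fixing any $k \notin \{i, i+1\}$, the joint contribution from the two pairs involving $k$ and one of $\{i, i+1\}$ depends only on the multiset $\{\row{Y}{i}, \row{Y}{i+1}\}$ paired against $\row{Y}{k}$, and hence is the same in $\Y$ and $\Y^{(i)}$; I would spell this out as the identity $[\row{Y}{i}<\row{Y}{k}]+[\row{Y}{i+1}<\row{Y}{k}] = [\row{Y^{(i)}}{i}<\row{Y^{(i)}}{k}]+[\row{Y^{(i)}}{i+1}<\row{Y^{(i)}}{k}]$ (and symmetrically for $k<i$). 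Finally class (C) contributes $1$ to exactly one of the two counts, since by hypothesis $\row{Y}{i} \neq \row{Y}{i+1}$; more precisely it contributes $+1$ to the count of whichever tableau has $\row{\cdot}{i}<\row{\cdot}{i+1}$. Combining the three classes gives $\inv_0(\Y^{(i)}) - \inv_0(\Y) = -1$ when $\row{Y}{i}<\row{Y}{i+1}$ and $+1$ when $\row{Y}{i}>\row{Y}{i+1}$, which establishes the first bullet with the refined sign.

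For the second bullet, the construction of $\S_1$ fills row $1$ with the entries $N, N-1, \ldots, N-\tau_1+1$, then row $2$ with the next $\tau_2$ entries, and so on, so $k \mapsto \row{\S_1}{k}$ is weakly decreasing in $k$. Thus $\row{\S_1}{i} \geq \row{\S_1}{j}$ for every $i<j$, and no pair satisfies the strict inequality defining $\inv_0$, giving $\inv_0(\S_1)=0$. The third bullet then follows instantly from the refined first bullet: by definition of adjacent pair one has $\row{Y}{i}<\row{Y}{i+1}$, so the sign is $-1$. I do not anticipate a genuine obstacle here; the only subtle point is the cancellation in class (B), which I would just spell out as the identity above.
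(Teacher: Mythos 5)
Your proof is correct. The paper states this lemma without proof, treating it as a direct property of $\inv_0$, and your bookkeeping — pairs disjoint from $\{i,i+1\}$ unchanged, the two pairs through any other index $k$ contributing symmetrically, and the pair $(i,i+1)$ alone accounting for the $\pm 1$ — together with the weak monotonicity of $k\mapsto\row{Y}{k}$ for $\S_1$ is exactly the routine verification intended; your refined sign in the first bullet immediately gives the third bullet via the definition of adjacent pair.
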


We prove the inductive step first, which is relatively easy.
\begin{prop}\label{vstep}
Consider $\S\in\tab_\tau$ such that $\left(\S,\S^{(i)}\right)$ is an
adjacent pair. If $h_{\S}\left(\overline{x}(\Y)\right) =0$ for each $\Y\in\rst_\tau$ such
that $\Y\neq \S$ and $\inv_{0}\left(\Y\right)\leq\inv_{0}(\S)$, then $h_{\S^{(i)}}\left(\overline{x}\left(\Y^\prime\right)\right) =0$ for each $\Y^\prime \in\rst_\tau$ such that $\Y^\prime \neq \S^{(i)}$ and $\inv_{0}\left(\Y^\prime\right) \leq\inv_{0}\left(\S^{(i)}\right)$.
\end{prop}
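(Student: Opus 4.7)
The plan is to use the recurrence \eqref{hxs} to express $h_{\S^{(i)}}(\overline{x}(\Y))$ as a linear combination of $h_{\S}(\overline{x}(\Y))$ and $h_{\S}(\overline{x}(\Y)s_{i})$, with coefficients depending only on $\overline{x}(\Y)_{i}$, $\overline{x}(\Y)_{i+1}$ and $t$, and then to argue that both $h_{\S}$-evaluations vanish by the inductive hypothesis. Fix a candidate $\Y\in\rst_\tau$ with $\Y\neq \S^{(i)}$ and $\inv_{0}(\Y)\leq \inv_{0}(\S^{(i)})=\inv_{0}(\S)-1$, and split the argument on whether $\row{Y}{i}$ and $\row{Y}{i+1}$ agree.

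Suppose first that $\row{Y}{i}=\row{Y}{i+1}$. Then Proposition \ref{samerow} reduces the statement $h_{\S^{(i)}}(\overline{x}(\Y))=0$ to $h_{\S}(\overline{x}(\Y))=0$. The latter follows from the inductive hypothesis: the adjacent-pair condition $\row{S}{i}<\row{S}{i+1}$ forces $\Y\neq\S$, and the estimate $\inv_{0}(\Y)\leq \inv_{0}(\S)-1<\inv_{0}(\S)$ places $\Y$ within scope.

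Suppose next that $\row{Y}{i}\neq\row{Y}{i+1}$. Then $\Y^{(i)}\in\rst_\tau$ and $\overline{x}(\Y)s_{i}=\overline{x}(\Y^{(i)})$; moreover, the definition of the special point yields $\overline{x}(\Y)_{i}\neq\overline{x}(\Y)_{i+1}$ (different rows produce different free variables), while the adjacent-pair hypothesis gives $\col{S}{i}>\col{S}{i+1}$, which keeps the denominator $1-t^{\col{S}{i+1}-\col{S}{i}}$ appearing in \eqref{hxs} nonzero. Thus the formula specializes to
$$h_{\S^{(i)}}(\overline{x}(\Y))=A\,h_{\S}(\overline{x}(\Y))+B\,h_{\S}(\overline{x}(\Y^{(i)}))$$
for explicit scalars $A,B$, and it suffices to check that both $h_{\S}$-evaluations vanish. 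For the first: $\Y=\S$ would force $\inv_{0}(\S)\leq \inv_{0}(\S)-1$, which is impossible, so $\Y\neq\S$ and $\inv_{0}(\Y)<\inv_{0}(\S)$ invoke the hypothesis. For the second: $\Y^{(i)}=\S$ would force $\Y=\S^{(i)}$, contradicting the standing assumption $\Y\neq \S^{(i)}$; and $\inv_{0}(\Y^{(i)})\leq \inv_{0}(\Y)+1\leq \inv_{0}(\S)$, so the hypothesis applies to $\Y^{(i)}$ as well.

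The main obstacle is purely combinatorial bookkeeping: one must verify carefully, in both cases, that the auxiliary tableaux at which $h_{\S}$ is evaluated satisfy both $\neq\S$ and the required $\inv_{0}$-bound, so that the inductive hypothesis covers every term. Once this is settled, the argument reduces to a clean application of \eqref{hxs} together with Proposition \ref{samerow}, with no further analytic content beyond confirming the nonvanishing of the denominators.
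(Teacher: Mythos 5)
Your proof is correct and follows essentially the same route as the paper: handle the case $\row{Y}{i}=\row{Y}{i+1}$ via Proposition \ref{samerow}, and otherwise apply \eqref{hxs} after checking that both $\Y\neq\S$ and $\Y^{(i)}\neq\S$ together with the inversion bounds $\inv_{0}(\Y)<\inv_{0}(\S)$ and $\inv_{0}(\Y^{(i)})\leq\inv_{0}(\Y)+1\leq\inv_{0}(\S)$. The bookkeeping you carry out is exactly the content of the paper's argument.
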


\begin{proof}
Consider $\Y^\prime\in\rst_\tau$ satisfying the hypothesis of the statement. Then, $\Y^{\prime (i)}\neq \S$ (or else $\Y^\prime = \S^{(i)}$) and $\Y^\prime \neq \S$ because $\inv_0(\Y^\prime)\leq \inv_0(\S)-1$. Hence, $h_\S(\overline{x}(\Y^\prime))=0$. 

Now, if $\row{Y^\prime}{i}=\row{Y^\prime}{i+1}$, then $h_{\S^{(i)}}\left(\overline{x}(\Y^\prime) \right) =0$ by Proposition \ref{samerow}. 

Otherwise $\Y^\prime s_i\in \rst_\tau$ and $\inv_0(\Y^\prime s_i) = \inv_0(\Y^\prime) \pm 1\leq \inv_0(\S^{(i)}) +1 = \inv_0(\S)$. Thus, $h_{\S}\left(\overline{x}\left(\Y^\prime\right)s_{i}\right) =h_{S}\left(\overline{x}\left(\Y^{\prime(i)}\right) \right) =0$ and $h_{\S^{(i)}}\left( \overline{x}\left(\Y^\prime \right) \right) =0$ by \eqref{hxs}.
\end{proof}

It remains to show the basis of the induction, that is $\S_{0}$ satisfies the hypotheses of the Proposition \ref{vstep}.
\begin{thm}\label{hS0start}
Consider $\Y\in\rst_\tau$ such that $\Y\neq \S_{0}$ and $\inv_{0}\left(\Y\right)\leq\inv_{0}\left(\S_{0}\right)$. Then $h_{\S_{0}}\left( \overline{x}\left(\Y\right) \right)=0$.
\end{thm}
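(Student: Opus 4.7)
I prove the contrapositive: if $\Y \neq \S _0$ and $h_{\S _0}(\overline{x}(\Y)) \neq 0$, then $\inv_0(\Y) > \inv_0(\S _0)$. Recall from \eqref{Eqp_S0} that $h_{\S _0}(x) = \prod_{i<j,\ \textsf{col}_{\mathbb{S}_0}[i]=\textsf{col}_{\mathbb{S}_0}[j]}(tx_i-x_j)$. The variables $y_1,\dots,y_{l-1},z_1,\dots,z_{\tau_l}$ are algebraically independent, and within row $r<l$ the coordinates of $\overline{x}(\Y)$ form the geometric progression $t^{1-\col{Y}{k}}y_r$. Hence for $i<j$ with $\textsf{col}_{\mathbb{S}_0}[i] = \textsf{col}_{\mathbb{S}_0}[j]$, the linear factor $tx_i-x_j$ vanishes at $\overline{x}(\Y)$ iff $\row{Y}{i}=\row{Y}{j}<l$ and $\col{Y}{j} = \col{Y}{i}-1$, i.e.\ $j$ sits immediately to the left of $i$ in a non-top row of $\Y$.

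\textbf{Shadow tableau and rigidity.} For each entry $k$ let $c(k)$ denote the column of $k$ in $\S _0$, and set $B_c = \{k : c(k)=c\}$. Since $\S _0$ is filled column-by-column with $N,N-1,\dots,1$, each $B_c$ is an interval $\{b_c,\dots,b_c+\tau'_c-1\}$ and $B_1>B_2>\cdots>B_{\tau_1}$. Form the shadow $\widetilde\Y$ by replacing each entry $k$ of $\Y$ by $c(k)$. Decreasing rows of $\Y$ force weakly increasing rows in $\widetilde\Y$, and the step above rephrases as: $h_{\S _0}(\overline{x}(\Y))\neq 0$ iff rows $1,\dots,l-1$ of $\widetilde\Y$ are strictly increasing. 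Encode the value-sets $S_r$ of these rows by the $(l-1)\times\tau_1$ binary matrix $M_{r,c}:=[c\in S_r]$ with row sums $\tau_r$ and column sums bounded by $\min(l-1,\tau'_c)$. A global count gives $\sum_r\tau_r=N-\tau_l=\sum_c\min(l-1,\tau'_c)$, so every column sum is saturated. The shadow of $\S _0$, namely $M^{\S _0}_{r,c}=[c\leq\tau_r]$, realizes these sums; any other solution differs from it by a nonzero $\pm 1$ matrix $D$ with vanishing row and column sums, and $D$ must contain a closed alternating cycle $(r_1,c_1)\to(r_1,c_2)\to(r_2,c_2)\to\cdots$. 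The $+1$ at $(r_i,c_i)$ forces $c_i>\tau_{r_i}$ while the $-1$ at $(r_i,c_{i+1})$ forces $c_{i+1}\leq\tau_{r_i}$, so $c_{i+1}<c_i$ around the cycle -- a cyclic contradiction. Hence $\widetilde\Y=\widetilde{\S _0}$, i.e.\ each $B_c$ occupies precisely the cells of column $c$ of $\S _0$, possibly permuted within that column.

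\textbf{Comparing $\inv_0$.} Parameterize the $\Y$'s with $\widetilde\Y=\widetilde{\S _0}$ by bijections $\sigma_c:B_c\to\{1,\dots,\tau'_c\}$, $k\mapsto\row{Y}{k}$; the tableau $\S _0$ corresponds to the order-reversing $\sigma_c(k)=b_c+\tau'_c-k$. Decompose $\inv_0(\Y)=\#\{i<j:\sigma_{c(i)}(i)<\sigma_{c(j)}(j)\}$ into same-column ($c(i)=c(j)$) and cross-column ($c(i)\neq c(j)$, which given $i<j$ forces $c(i)>c(j)$) pieces. Same-column pairs in $B_c$ contribute $0$ for $\S _0$ and $\sum_c\#\{i<j\in B_c:\sigma_c(i)<\sigma_c(j)\}\geq 0$ for $\Y$. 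For the cross-column piece, as $i$ runs over $B_{c(i)}$ (respectively $j$ over $B_{c(j)}$), the row $\sigma_{c(i)}(i)$ (resp.\ $\sigma_{c(j)}(j)$) attains each element of $\{1,\dots,\tau'_{c(i)}\}$ (resp.\ $\{1,\dots,\tau'_{c(j)}\}$) exactly once, so
\[
\#\{(i,j)\in B_{c(i)}\times B_{c(j)}:\sigma_{c(i)}(i)<\sigma_{c(j)}(j)\}=\#\{(r_1,r_2):1\leq r_1\leq\tau'_{c(i)},\ 1\leq r_2\leq\tau'_{c(j)},\ r_1<r_2\}
\]
is independent of $\sigma$. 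Thus cross-column contributions agree for $\Y$ and $\S _0$, yielding $\inv_0(\Y)-\inv_0(\S _0)=\sum_c\#\{i<j\in B_c:\sigma_c(i)<\sigma_c(j)\}\geq 0$, with equality iff every $\sigma_c$ is order-reversing, iff $\Y=\S _0$. This proves the theorem.

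\textbf{Main obstacle.} The delicate part is the rigidity in the middle step -- that saturating the row and column sums leaves no room for transportation-matrix swaps, forcing the column-shadow to be $\widetilde{\S _0}$. The alternating-cycle argument is the cleanest way to rule such swaps out, using crucially the staircase monotonicity $\tau_1\geq\tau_2\geq\cdots$. Once rigidity is secured, the inversion comparison is routine because the cross-column count is manifestly a permutation-invariant of the $\sigma_c$'s.
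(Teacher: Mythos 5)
Your proof is correct, and its skeleton is the same as the paper's: you show that non-vanishing of $h_{\S_0}$ at $\overline{x}(\Y)$ forces the entries of each column of $\S_0$ (your $B_c$, the paper's $I_c$) to be distributed one per row over rows $1,\dots,\tau'_c$ of $\Y$, and then you compare inversions. The implementation of the middle step, however, is genuinely different. The paper notes that each set $C(i,j)$ of entries of $I_j$ lying in row $i$ of $\Y$ is contiguous, deduces $\#C(i,j)\le 1$ for $i<l$ from the vanishing factors, and then argues column by column (a cardinality count plus row-strictness) that the $j$-th column of $\Y$ is a permutation of $I_j$. You instead pass to the shadow tableau, observe that non-vanishing is equivalent to strict increase of the shadow rows $1,\dots,l-1$, and prove rigidity of the resulting $0$--$1$ matrix from saturation of the margins (the column sums are exactly the conjugate of the row sums, the extremal Gale--Ryser situation) together with an alternating-cycle contradiction; this is a clean, self-contained alternative. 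You also gain something at the end: the paper merely asserts that once the columns agree as sets one has $\inv_0(\Y)>\inv_0(\S_0)$ unless $\Y=\S_0$, whereas you prove it by splitting inversions into same-block and cross-block pairs and checking that the cross-block count is invariant under rearrangement within columns. Two small remarks: your cycle argument does not really use the monotonicity $\tau_1\ge\tau_2\ge\cdots$ beyond the fact that the margins are conjugate partitions, so ``crucially'' overstates it; and the cell-level identification $\widetilde{\Y}=\widetilde{\S_0}$ needs the (immediate) extra observation that row $l$ of $\widetilde{\Y}$ is then forced to be $1,\dots,\tau_l$ in order --- though for the inversion comparison only the row distribution of each $B_c$ matters, which your margin argument already provides.
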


\begin{proof}
For $1\leq n \leq \tau_l$, let $I_{n}$ denote the set of the entries in the $n^{\text{th}}$ column of $\S_0$. Then, we write $\Y$ as the disjoint union $\Y=\displaystyle{\bigcup_j\left(\Y\cap I_j\right)}$, where each $\Y\cap I_j$ is row-ordered skew shape. That is, for $a,b\in \Y\cap I_j$, $a<b$ and $\row{Y}{a} = \row{Y}{b}$ imply that $\col{Y}{a}>\col{Y}{b}$. Let $C(i,j) :=\left\{k\ |\ k\in \Y\cap I_{j} \text{ and } \row{Y}{k}=i\right\} $. Since each entry in $I_n$ is larger than each entry in $I_m$, for $n<m$, and by the row-strictness of $\Y$, each nonempty set $C(i,j)$ consists of contiguous entries. Moreover, when we fix $i$, the sets $C(i,1)$, $C(i,2)$, $\dots$ are not interlaced. 

If for some $i<l$, there exists a $j$ such that $C(i,j)$ has at least two elements, then $h_{\S_0}(\overline{x}(\Y))=0$. That is because there are two adjacent entries $k_1$ and $k_2$, with $k_1>k_2$, that are in the $i^{\text{th}}$ row of $\Y$ and in the $j^{\text{th}}$ column of $\S_0$. Therefore, $h_{\S_0}$ has $(tx_{k_2}-x_{k_1})$ as a factor and it vanishes at $\overline{x}(\Y)$ because $\overline{x}(\Y)_{k_1}= t\cdot  \overline{x}(\Y)_{k_2}$. 

Now suppose $h_{\S_{0}}\left(\overline{x}(\Y)\right)\neq0$. Then, by the previous argument, $C(i,j)$ has at most one element for each $i<l$. Moreover, for $1\leq k \leq \tau_l$, $\Y\cap I_k$ has size $l$ since these are exactly the columns that extend to the bottom of the tableau. Then, $C(l,k)$ cannot be empty (otherwise $C(i,k)$ would have more than one element for some $i<l$), and $C(l,k)$ cannot have more than one element (otherwise $C(l,k^\prime)$ would be empty for some other $k^\prime\leq \tau_l$). Therefore, for $1\leq i \leq l$, $C(i,k)$ has exactly one element. 

Our next step is to prove by induction on $j$ that the $j^{\text{th}}$ column of $\Y$ is a permutation of $I_j$. Suppose that the first $k-1$ columns of $\Y$ are permutations of $I_1$, $\dots$, $I_{k-1}$. Then, $\Y\cap I_k \subset \left\{ [i,j]\ |\  j \geq k \text{ and } 1\leq i \leq \tau^\prime_j\leq \tau_k^\prime\right\}$. Since $C(i,k)$ has at most one element and $\displaystyle{\sum_{i=1}^{\tau_{k}^{\prime}}\#C\left(i,k\right) =\#I_{k}=\tau_{k}^{\prime}}$, it follows that $C\left( i,k\right)$ has exactly one element, also for $1\leq i\leq\tau_{k}^{\prime}$. In fact, $C(i,k)=\Y(i,k)$ or else $C(i,k)=\Y(i,m)$ for some $m>k$. Then, $\Y(i,k)<\Y(i,m)$, violating the row-strictness of $\Y$. 

As a consequence $\S_{0}$ is obtained from $\Y$ by arranging each column in descending order, and thus $\inv_{0}(\Y) >\inv_{0}\left(\S_{0}\right)$, unless $\Y=\S_{0}$.

Thus, $h_{\S_{0}}\left( \overline{x}(\Y)\right) \neq0$ implies $\Y=\S_{0}$ or $\inv_{0}\left(\Y\right) >\inv_{0}\left(\S_{0}\right) $.
\end{proof}

For example, for $\tau=(3,3)$, $\S_0=\scalebox{0.7}{
\begin{tikzpicture}
\draw (0,0) rectangle (0.5,0.5);
\draw (0.5,0) rectangle (1,0.5);
\draw (1,0) rectangle (1.5,0.5);
\draw (0,0.5) rectangle (0.5,1);
\draw (0.5,0.5) rectangle (1,1);
\draw (1,0.5) rectangle (1.5,1);
\node at (0.25,0.25) {6};
\node at (0.75,0.25) {4};
\node at (1.25,0.25) {2};
\node at (0.25,0.75) {5};
\node at (0.75,0.75) {3};
\node at (1.25, 0.75) {1};
\end{tikzpicture}}$ and $\inv_0(\S_0) = 3$. Moreover, $h_{\S_0}(x)=(tx_1-x_2)(tx_3-x_4)(tx_5-x_6)$. Now, consider $\Y=\scalebox{0.7}{
\begin{tikzpicture}
\draw (0,0) rectangle (0.5,0.5);
\draw (0.5,0) rectangle (1,0.5);
\draw (1,0) rectangle (1.5,0.5);
\draw (0,0.5) rectangle (0.5,1);
\draw (0.5,0.5) rectangle (1,1);
\draw (1,0.5) rectangle (1.5,1);
\node at (0.25,0.25) {6};
\node at (0.75,0.25) {3};
\node at (1.25,0.25) {1};
\node at (0.25,0.75) {5};
\node at (0.75,0.75) {4};
\node at (1.25, 0.75) {2};
\end{tikzpicture}}$, with $\inv_0(\Y)=5$, and associated special point $\overline{x}(\Y)= \left( y_1t^{-2}, z_1, y_1t^{-1}, z_2,z_1,y_1\right)$. When we evaluate $h_{\S_0}$ at this special point, we get that $h_{\S_0}\left( \overline{x}(\Y)\right) \neq0$. However, if we consider $\Y^\prime=\scalebox{0.7}{
\begin{tikzpicture}
\draw (0,0) rectangle (0.5,0.5);
\draw (0.5,0) rectangle (1,0.5);
\draw (1,0) rectangle (1.5,0.5);
\draw (0,0.5) rectangle (0.5,1);
\draw (0.5,0.5) rectangle (1,1);
\draw (1,0.5) rectangle (1.5,1);
\node at (0.25,0.25) {6};
\node at (0.75,0.25) {4};
\node at (1.25,0.25) {3};
\node at (0.25,0.75) {5};
\node at (0.75,0.75) {2};
\node at (1.25, 0.75) {1};
\end{tikzpicture}}$, for which $\inv_0(\Y^\prime)= 2$, we get that $\overline{x}(\Y^\prime)= \left(z_1,z_2,y_1t^{-2},y_1t^{-1},z_3,y_1\right)$ and that $h_{\S_0}\left( \overline{x}(\Y)\right)=0$. 

\begin{cor}\label{hSSzero}
Let $\S\in\tab_\tau$ and $\Y\in\rst_\tau$ be such that $\Y\neq \S$ and $\inv_{0}\left(\Y\right) \leq\inv_{0}\left(\S\right) $. Then, $h_{\S}\left(\overline{x}\left(\Y\right)\right) =0$.
\end{cor}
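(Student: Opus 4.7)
The plan is to proceed by downward induction on $\inv(\S)$, beginning at the inv-maximal tableau $\S_0$ and cascading down through $\tab_\tau$ by adjacent-pair moves. The base case $\S = \S_0$ is precisely Theorem \ref{hS0start}. For the inductive step, assume the conclusion has been established for every $\S'' \in \tab_\tau$ with $\inv(\S'') > \inv(\S)$, and consider some $\S \neq \S_0$. I would produce a tableau $\S' \in \tab_\tau$ and an index $i$ such that $(\S', \S'^{(i)})$ is an adjacent pair with $\S'^{(i)} = \S$ and $\inv(\S') = \inv(\S) + 1$; the inductive hypothesis then verifies the hypothesis of Proposition \ref{vstep} for $\S'$, whose conclusion is exactly what is required for $\S$.

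To produce such an $\S'$, I would use the fact that $\inv$ is strictly monotone along adjacent-pair edges: for any adjacent pair $(\S', \S'^{(i)})$, the content difference $b = \ct{S'}{i} - \ct{S'}{i+1}$ satisfies $b \geq 2$, so the swap $\ct{S'}{i} \leftrightarrow \ct{S'}{i+1}$ changes $\inv$ by exactly $-1$ (the pair $(i, i+1)$ is the only one whose contribution changes, and it drops from counting to not counting). Consequently, since $\S_0$ is the unique inv-maximal element of $\tab_\tau$ and the Yang-Baxter graph on $\tab_\tau$ is connected, every $\S \neq \S_0$ admits at least one incoming adjacent-pair edge from some $\S'$ with $\inv(\S') = \inv(\S) + 1$. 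Concretely, one picks any index $i$ for which $\row{S}{i} > \row{S}{i+1}$ with entries $i$ and $i+1$ of $\S$ lying in distinct columns, and swaps them to obtain $\S'$; the RSYT conditions ensure this swap produces a valid tableau, and the defining condition $\row{S'}{i} < \row{S'}{i+1}$ then makes $(\S', \S'^{(i)})$ an adjacent pair with $\S'^{(i)} = \S$.

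The only mildly subtle point is the structural claim that every $\S \neq \S_0$ admits such an index $i$; a direct verification is that otherwise every descent of $\S$ would be a same-column descent (forcing $\row{S}{i} = \row{S}{i+1} + 1$ with $\col{S}{i} = \col{S}{i+1}$ at every descent), and one checks that this configuration forces $\S$ to coincide with the column-by-column filling $\S_0$. This is the same classical fact that underlies the iterative construction of the basis $\{h_\S\}$ from $h_{\S_0}$ via \eqref{hSgroup} already used implicitly in Section \ref{SubSectPR}. Once this combinatorial observation is in place, the corollary follows mechanically from Theorem \ref{hS0start} and repeated application of Proposition \ref{vstep}.
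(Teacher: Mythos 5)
Your argument is correct and follows essentially the same route as the paper's proof: the paper likewise deduces the corollary from Theorem \ref{hS0start} (the base case at $\S_0$) together with repeated application of Proposition \ref{vstep}, asserting without further detail that every $\S\in\tab_\tau$ is reachable from $\S_0$ by adjacent-pair moves. Your additional paragraph justifying that reachability (any $\S\neq\S_0$ admits an index $i$ with $\row{S}{i}>\row{S}{i+1}$ and the two entries in distinct columns, so that swapping them yields a valid predecessor tableau) is sound and simply fills in a connectivity claim the paper takes for granted; the appeal to connectedness of the Yang--Baxter graph is superfluous once that concrete recipe is given.
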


\begin{proof}
Since every $\S\in\tab_\tau$ can be reached from $\S_0$ by a sequence of adjacent pair constructions with a total number of $\inv_{0}\left(\S_{0}\right) -\inv_{0}\left(\S\right)$ steps, the conclusion follows from Theorem \ref{hS0start} and Proposition \ref{vstep}.
\end{proof}

\begin{remark}\label{RemarkThmProof}
If we take the limit $t\longrightarrow 1$ and set $\S=\S_1$, we prove the corresponding result for the symmetric case, Theorem \ref{ThmHS10}.
\end{remark}

Since $\tab_\tau\subset \rst_\tau$, we can consider special points for $\S\in\tab_\tau$, for which the specializations are also interesting.  
\begin{thm}\label{heqfS}
For $\S\in\tab_\tau$ then $h_{\S}\left(\overline{x}\left(\S\right) \right) =f_{S}\left( \overline{x}\left(
\S\right) \right) $.
\end{thm}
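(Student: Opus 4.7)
The plan is to prove the equality by induction on $\inv(\S_0)-\inv(\S)$, using the standard fact that every $\S\in\tab_\tau$ can be reached from $\S_0$ by a sequence of adjacent-pair moves, each of which decreases $\inv$ by exactly one. The base case $\S=\S_0$ is trivial: as observed immediately after Definition \ref{fSdef}, $R_{\S_0}(i,j)=1$ whenever $\col{S_0}{i}=\col{S_0}{j}$, so $f_{\S_0}(x)=h_{\S_0}(x)$ identically, and hence the two sides agree at every point.

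For the inductive step, fix an adjacent pair $(\S,\S^{(i)})$ and assume $h_\S(\overline{x}(\S))=f_\S(\overline{x}(\S))$. Because exchanging the integers $i$ and $i+1$ in a tableau swaps the $i$-th and $(i{+}1)$-th coordinates of the associated special point, $\overline{x}(\S^{(i)})=\overline{x}(\S)s_i$. Set $A=\overline{x}(\S^{(i)})_i=\overline{x}(\S)_{i+1}$ and $B=\overline{x}(\S^{(i)})_{i+1}=\overline{x}(\S)_i$. Evaluating the recursion \eqref{hxs} at $\overline{x}(\S^{(i)})$, Corollary \ref{hSSzero} forces the term carrying $h_\S(\overline{x}(\S^{(i)}))$ to vanish, since $\S^{(i)}\neq\S$ and $\inv_0(\S^{(i)})=\inv_0(\S)-1\leq\inv_0(\S)$. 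What remains is
\begin{eqnarray*}
h_{\S^{(i)}}(\overline{x}(\S^{(i)}))=\frac{tA-B}{A-B}\, h_\S(\overline{x}(\S)).
\end{eqnarray*}
On the other hand, Lemma \ref{fSS}, taken with $b=\col{S}{i}-\col{S}{i+1}>0$, $u=\S(\row{S}{i},\col{S}{i+1})$ and $Y=\overline{x}(\S^{(i)})_u$, yields
\begin{eqnarray*}
f_{\S^{(i)}}(\overline{x}(\S^{(i)}))=f_\S(\overline{x}(\S))\cdot\frac{t^{b+1}A-Y}{t^b A-Y}.
\end{eqnarray*}

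The induction closes once the key identity
\begin{eqnarray*}
\frac{tA-B}{A-B}=\frac{t^{b+1}A-Y}{t^b A-Y}
\end{eqnarray*}
is verified, and after cross-multiplication this reduces to the single algebraic relation $t^b B=Y$. Setting $r=\row{S}{i}$, $c=\col{S}{i}$ and $c'=\col{S}{i+1}$, the condition $\row{S}{i}<\row{S}{i+1}\leq l$ forces $r<l$, so the formula for the special point gives $B=t^{1-c}y_r$; moreover the entry $u$ occupies position $(r,c')$ in both $\S$ and $\S^{(i)}$, so $Y=t^{1-c'}y_r$, and indeed $t^bB=t^{c-c'+1-c}y_r=t^{1-c'}y_r=Y$. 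Substituting this identity and the inductive hypothesis $h_\S(\overline{x}(\S))=f_\S(\overline{x}(\S))$ into the two displays above completes the induction. The main obstacle is the vanishing $h_\S(\overline{x}(\S^{(i)}))=0$ that eliminates the unwanted term in \eqref{hxs}: without the sharp estimate supplied by Corollary \ref{hSSzero}, the recursions for $h_{\S^{(i)}}$ and $f_{\S^{(i)}}$ would not align, and the inductive pattern would fail.
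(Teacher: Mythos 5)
Your proof is correct and follows essentially the same route as the paper: induction on the inversion number starting from $f_{\S_0}=h_{\S_0}$, killing the $h_{\S}\left(\overline{x}\left(\S^{(i)}\right)\right)$ term in \eqref{hxs} via Corollary \ref{hSSzero}, and matching the resulting ratio against the one from Lemma \ref{fSS}. The only difference is cosmetic: you verify the key identity by cross-multiplying down to $t^{b}B=Y$, whereas the paper simplifies the $f$-ratio directly to the $h$-ratio, which amounts to the same substitution $\overline{x}(\S)_i=t^{1-c}y_r$, $\overline{x}\left(\S^{(i)}\right)_u=t^{1-c'}y_r$.
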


\begin{proof}
We proceed by induction on the inversion number. The statement is trivially
true for $\S=\S_{0}$. Since every $\S\in\tab_\tau$ can be reached from $\S_0$ by a sequence of adjacent pair constructions, assume that $h_{\S}\left(\overline{x}(\S)\right) =f_{\S}\left(\overline{x}(\S)\right)$ for some $\S$ such that $\left(\S,\S^{(i)}\right) $ is an adjacent pair.

Since $\inv_0(\S^{(i)})= \inv_0(\S)-1$, by Theorem \ref{hSSzero}, $h_{\S}\left(\overline{x}(\S^{(i)})\right) =0$. Now, we evaluate \eqref{hxs} at $\overline{x}\left(\S^{(i)}\right)=\overline{x}\left(\S\right)s_{i}$.
\begin{eqnarray*}
h_{\S^{(i)}}\left(\overline{x}\left(\S^{(i)}\right)\right) =\frac{t\overline{x}\left(\S^{(i)}\right)_{i}-\overline{x}(\S^{(i)})_{i+1}}{\overline{x}(\S^{(i)})_{i}-\overline{x}(\S^{(i)})_{i+1}}h_{\S}\left(\overline{x}(\S^{(i)})s_{i}\right) =\frac{t\overline{x}\left(\S\right)_{i+1}-\overline{x}\left(\S\right)_{i}}{\overline{x}\left(\S\right)_{i+1}-\overline{x}\left(\S\right)_{i}}h_{\S}\left(\overline{x}\left(\S\right) \right) .
\end{eqnarray*}
Let $i=\S(r_1,c_1)$ and $i+1=\S(r_2,c_2)$ and $u=\S(r_1,c_2)$. By hypothesis $r_{1}<r_{2}$ and $c_{1}>c_{2}$, and $\overline{x}\left(\S\right)_{i}=t^{1-c_{1}}y_{r_{1}}$ because $r_{1}<r_{2}\leq l$.
By Lemma \ref{fSS},
\begin{eqnarray*}
f_{\S^{(i)}}\left(\overline{x}\left(\S^{(i)}\right)\right) =\frac{t^{c_{1}-c_{2}+1}\overline{x}\left(\S^{(i)}\right)_{i}-\overline{x}\left(\S^{(i)}\right)_{u}}{t^{c_{1}-c_{2}}\overline{x}\left(\S^{(i)}\right)_{i}-\overline{x}\left(\S^{(i)}\right)_{u}}f_{\S}\left(\overline{x}\left(\S^{(i)}\right)s_{i}\right),
\end{eqnarray*}
Since $\overline{x}\left(\S^{(i)}\right)_{u}=t^{1-c_{2}}y_{r_{1}}$ and $\overline{x}\left(\S^{(i)}\right)_{i}=\overline{x}\left(\S\right)_{i+1}$, we find that
\begin{multline*}
\frac{t^{c_{1}-c_{2}+1}\overline{x}\left(\S^{(i)}\right)_{i}-\overline{x}\left(\S^{(i)}\right)_{u}}{t^{c_{1}-c_{2}}\overline{x}\left(\S^{(i)}\right)_{i}-\overline{x}\left(\S^{(i)}\right)_{u}}=
\frac{t^{c_{1}-c_{2}+1}\overline{x}\left(\S\right)_{i+1}-t^{1-c_{2}}y_{r_{1}}}{t^{c_{1}-c_{2}}\overline{x}\left(\S\right)_{i+1}-t^{1-c_{2}}y_{r_{1}}}=
\frac{t\overline{x}\left(\S\right)_{i+1}-t^{1-c_{1}}y_{r_{1}}}{\overline{x}\left(\S\right)_{i+1}-t^{1-c_{1}}y_{r_{1}}}= \\
= \frac{h_{\S^{(i)}}\left( \overline{x}\left(\S^{(i)}\right)\right)}{h_{\S}\left(\overline{x}\left(\S\right)\right) },
\end{multline*}
which completes the proof.
\end{proof}
\begin{remark}\label{RemarkCorProof}
Again, taking the limit $t\longrightarrow 1$, we prove the corresponding result for the symmetric case stated in Corollary \ref{CorHS1A}.
\end{remark}

We are ready to describe $h_{\S_{1}}\left(\overline{x}\left(\S_{1}\right) \right)$, for which $\overline{x}\left(\S_{1}\right)_{i}=z_{i}$, for $1\leq i\leq\tau_{l}$.
\begin{prop}\label{fS1S1}
Set $E\left(\tau\right) :=\displaystyle{\frac{1}{2}\sum_{i=1}^{l}\left(i-1\right) \tau_{i}\left(\tau_{i}-1\right)}$. Then
\begin{eqnarray*}
h_{\S_{1}}\left(\overline{x}\left(\S_{1}\right) \right) = f_{\S_{1}}\left(\overline{x}\left(\S_{1}\right) \right)  =t^{-E\left(\tau\right) }\prod_{k=1}^{\tau_{l}}\prod_{i=1}^{l-1}\left(t^{\tau_{i}}z_{k}-y_{i}\right)\cdot \prod_{i=1}^{l-2}\prod_{j=i+1}^{l-1}\prod_{k=1}^{\tau_{j}}\left( t^{\tau_{i}-k+1}y_{j}-y_{i}\right) .
\end{eqnarray*}
\end{prop}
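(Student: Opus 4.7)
The plan is to invoke Theorem \ref{heqfS} to replace $h_{\S_1}(\overline{x}(\S_1))$ by $f_{\S_1}(\overline{x}(\S_1))$, and then to expand this product explicitly from the structure of $\S_1$. Denote by $r(k)$ and $c(k)$ the row and column of the entry $k$ in $\S_1$. By the row-by-row filling rule, the label at position $(r,c)$ equals $\sum_{p=r}^{l}\tau_p - c + 1$; in particular the top row $l$ contains $\tau_l, \tau_l-1, \ldots, 1$ read left to right, so $\overline{x}(\S_1)_k = z_k$ for $1\leq k\leq \tau_l$, whereas $\overline{x}(\S_1)_k = t^{1-c(k)}\,y_{r(k)}$ for $k > \tau_l$.

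Next I would establish a short formula for the exponents appearing in each factor. For any pair $i<j$ at a common column $c$, monotonicity along the column forces $r(i)>r(j)$; moreover every label in row $r(j)$ strictly exceeds $i$, because even the smallest label there, namely $\sum_{p>r(j)}\tau_p + 1$, already dominates $i=\sum_{p\geq r(i)}\tau_p-c+1$ under $r(i)>r(j)$. Hence the condition $k>i$ in the definition of $R_{\S_1}(i,j)$ is automatic, and the count collapses to $R_{\S_1}(i,j)=\tau_{r(j)}-c+1$, the number of columns $\geq c$ that row $r(j)$ possesses.

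Armed with this, I would split the product into two cases according to whether $r(i)=l$ or $r(i)<l$ (noting $r(j)<l$ is automatic since $r(i)>r(j)$). In Case A ($r(i)=l$), the pair is parameterized by $k:=i\in\{1,\dots,\tau_l\}$ and $r_2:=r(j)\in\{1,\dots,l-1\}$, at the forced column $c=\tau_l-k+1$; substitution yields $t^{\tau_{r_2}-c+1}z_k - t^{1-c}y_{r_2} = t^{k-\tau_l}(t^{\tau_{r_2}} z_k - y_{r_2})$. In Case B ($l>r(i)>r(j)$), writing $(r_2,r_1)=(r(j),r(i))$, the admissible columns are $c\in\{1,\dots,\tau_{r_1}\}$ (the column must reach row $r_1$), and the factor becomes $t^{1-c}(t^{\tau_{r_2}-c+1}y_{r_1}-y_{r_2})$. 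Relabeling $(r_2,r_1,c)\to(i,j,k)$, these two families of linear pieces reproduce exactly the two products displayed in the proposition.

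It remains to collect the $t$-powers. Case A contributes $(l-1)\sum_{k=1}^{\tau_l}(k-\tau_l)= -(l-1)\tau_l(\tau_l-1)/2$. Case B contributes, after the inner summation $\sum_{c=1}^{\tau_{r_1}}(1-c) = -\tau_{r_1}(\tau_{r_1}-1)/2$ and the outer summation over $r_2<r_1$, the quantity $-\sum_{r=2}^{l-1}(r-1)\tau_r(\tau_r-1)/2$. Adding these and using that the $r=1$ term is zero yields the overall exponent $-\tfrac{1}{2}\sum_{r=1}^{l}(r-1)\tau_r(\tau_r-1) = -E(\tau)$, matching the statement. The main (and in fact only) obstacle is the combinatorial bookkeeping: one must verify that every same-column pair of $\S_1$ is enumerated exactly once across Cases A and B, and that the admissible column ranges in each case correctly encode the constraint that the column must reach both chosen rows.
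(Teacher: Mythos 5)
Your proposal is correct and follows essentially the same route as the paper's proof: both deduce the first equality from Theorem \ref{heqfS}, use the row structure of $\S_{1}$ to show that the exponent for a same-column pair is $\tau_{r(j)}-c+1$, substitute the special point, and collect the $t^{1-c}$ (resp.\ $t^{k-\tau_l}$) powers to obtain $t^{-E(\tau)}$. The only difference is cosmetic bookkeeping: you parameterize the $z$-factors by the entry value $k=i$ directly, whereas the paper indexes by column and then re-indexes $k\mapsto\tau_l+1-k$.
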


\begin{proof}
The first equality comes from Theorem \ref{heqfS}. For the second, we look at the terms in $f_{\S_1}(\overline{x}(\S_1))$ coming from the $k^{\text{th}}$ column, for $1\leq k \leq \tau_l$. For $1\leq i < j<l$, the entries from $\overline{x}\left(\S_{1}\right)$ corresponding to $\S_{1}(i,k)$, $\S_{1}(j,k)$ and $\S_{1}(l,k)$ are $t^{1-k}y_{i}$, $t^{1-k}y_{j}$ and $z_{\tau_{l}+1-k}$, respectively. Moreover, in $\S_{1}$ every entry in the $i^{\text{th}}$ row is larger than any entry in higher numbered rows, and therefore the exponent $R_{\S_1}(\S_{1}(i,k),\S_{1}(j,k))=\tau_{i}-k+1$. Putting this together, the corresponding factors in $f_{\S_{1}}\left(\overline{x}\left(\S_{1}\right)\right)$ are
$\left(t^{\tau_{i}-k+1}t^{1-k}y_{j}-t^{1-k}y_{i}\right) $, with $1\leq
k\leq\tau_{j}$, and $\left( t^{\tau_{i}-k+1}z_{\tau_{l}+1-k}-t^{1-k}%
y_{i}\right) $, with $1\leq k\leq\tau_{l}$. Thus
\begin{eqnarray*}
f_{\S_{1}}\left(\overline{x}\left(\S_{1}\right)\right)=\prod_{k=1}^{\tau_{l}}\prod_{i=1}^{l-1}\left( t^{1-k}\left(t^{\tau_{i}}z_{\tau_{l}+1-k}-y_{i}\right) \right) \cdot \prod_{i=1}^{l-2}\prod_{j=i+1}^{l-1}\prod_{k=1}^{\tau_{j}}\left( t^{1-k}\left( t^{\tau_{i}-k+1}y_{j}-y_{i}\right) \right) .
\end{eqnarray*}
Collecting the $t^{1-k}$ factors and using the sum $\displaystyle{\sum_{k=1}^{\tau_{j}}\left( 1-k\right) =-\frac{1}{2}\tau_{j}\left( \tau_{j}-1\right)}$ we obtain the term $t^{-E\left( \tau\right) }$. Also, change the index $k$ to $\tau_{l}+1-k$ in the first term. 
\end{proof}
For example, take $\tau=\left( 4,3,2,2\right) $ and
$\S_{1}=
\scalebox{0.7}{
\begin{tikzpicture}
\draw (0,0) rectangle (0.5,0.5);
\draw (0.5,0) rectangle (1,0.5);
\draw (1,0) rectangle (1.5,0.5);
\draw (1.5,0) rectangle (2,0.5);
\draw (0,0.5) rectangle (0.5,1);
\draw (0.5,0.5) rectangle (1,1);
\draw (1,0.5) rectangle (1.5,1);
\draw (0,1) rectangle (0.5,1.5);
\draw (0.5,1) rectangle (1,1.5);
\draw (0,1.5) rectangle (0.5,2);
\draw (0.5,1.5) rectangle (1,2);
\node at (0.25,0.25) {11};
\node at (0.75,0.25) {10};
\node at (1.25,0.25) {9};
\node at (1.75,0.25) {8};
\node at (0.25,0.75) {7};
\node at (0.75,0.75) {6};
\node at (1.25, 0.75) {5};
\node at (0.25,1.25) {4};
\node at (0.75,1.25) {3};
\node at (0.25,1.75) {2};
\node at (0.75,1.75) {1};
\end{tikzpicture}}$, 
\begin{multline*}
f_{\S_{1}}\left( \overline{x}\left(\S_{1}\right) \right) =t^{-12}\left(
t^{4}z_{1}-y_{1}\right) \left( t^{2}z_{1}-y_{2}\right) \left( t^{3}%
z_{1}-y_{3}\right) \left( t^{4}z_{2}-y_{1}\right) \left( t^{3}z_{2}%
-y_{2}\right) \left( t^{2}z_{2}-y_{3}\right) \\
\times\left( t^{2}y_{2}-y_{1}\right) \left( t^{3}y_{2}-y_{1}\right)
\left( t^{4}y_{2}-y_{1}\right) \left( t^{3}y_{3}-y_{1}\right) \left(
t^{4}y_{3}-y_{1}\right) \left( t^{2}y_{3}-y_{2}\right) \left( t^{3}%
y_{3}-y_{2}\right) .
\end{multline*}

\subsubsection{The general polynomials of isotype $\tau$}
In this section, the previous specialization and factorization results will be extended to arbitrary polynomials of isotype $\tau$. Specifically, we consider polynomials with property SMP. The factorization and vanishing properties will be combined with the known factorizations of highest weight symmetric polynomials by means of
the projection map to obtain expressions consisting of purely linear factors.

We start by applying our previous results. Let $\left\{g_{\S}(x):\ \S\in\tab_\tau\right\} $ be a basis for a space of polynomials of isotype $\tau$. By Proposition \ref{gdiv}, $g_{\S_{0}}(x)$ is divisible by $h_{\S_{0}}(x)$ and $\left\{ g_{\S}\right\} $ satisfies the same transformation properties as $\left\{h_{\S}\right\} $, in particular the one described in \eqref{hxs}. Finally, Proposition \ref{vstep} and Theorem \ref{hS0start} also apply to $\left\{ g_{\S}\right\}$, and so we have the following result.
\begin{thm}\label{gSSzero}
Let $\S\in\tab_\tau $ and $\Y\in\rst_\tau$ such that $\Y\neq \S$ and $\inv_{0}\left(\Y\right) \leq\inv_{0}\left(\S\right)$. Then $g_{\S}\left(\overline{x}(\Y)\right) =0$.
Furthermore, if $\S\neq \S_1$, then $\inv_0(\S)\neq 0$ and $g_{\S}\left(\overline{x}\left(\S_{1}\right)\right) =0$.
\end{thm}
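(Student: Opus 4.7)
The plan is to reuse, almost verbatim, the inductive argument that proved Corollary~\ref{hSSzero} for the minimal polynomials $h_\S$; the paragraph immediately preceding the theorem has already assembled the three facts needed to transport that argument to $\{g_\S\}$. I would organize the proof into a base case, an inductive step, and a short closing argument for the ``furthermore'' clause.

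For the base case $\S=\S_0$, I would invoke Proposition~\ref{gdiv}: since $\col{\S_0}{i}=\col{\S_0}{j}$ implies $g_{\S_0}$ is divisible by $tx_i-x_j$, the polynomial $g_{\S_0}(x)$ is divisible by the full product $h_{\S_0}(x)=\prod_{i<j,\,\col{\S_0}{i}=\col{\S_0}{j}}(tx_i-x_j)$. Writing $g_{\S_0}=h_{\S_0}\cdot q$ for some polynomial $q$, Theorem~\ref{hS0start} then forces $g_{\S_0}(\overline{x}(\Y))=0$ whenever $\Y\neq \S_0$ and $\inv_0(\Y)\leq\inv_0(\S_0)$.

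For the inductive step, I would first observe that the transformation rule in \eqref{hxs}, which relates $h_\S(x)$ and $h_{\S^{(i)}}(x)$ for an adjacent pair $(\S,\S^{(i)})$, was derived purely from the action formulas (III) for $\tau(T_i)$ on the basis $\tab_\tau$ combined with the Hecke action of $T_i$ on polynomials; both of these are satisfied identically by $\{g_\S\}$, so the same identity holds with $g$ in place of $h$. Consequently, the proofs of Proposition~\ref{samerow} (using that $\overline{x}(\Y)_{i+1}=t\,\overline{x}(\Y)_i$ when $\row{Y}{i}=\row{Y}{i+1}<l$, or symmetry in the $z$-variables when $\row{Y}{i}=l$) and of Proposition~\ref{vstep} (a purely combinatorial bookkeeping of inversion counts) transfer word-for-word. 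Since every $\S\in\tab_\tau$ is obtained from $\S_0$ by a finite sequence of adjacent-pair moves decreasing $\inv_0$ by one at each step, iterating this inductive step starting from the base case yields the first claim.

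For the final assertion, I would remark that $\S_1$ is the unique element of $\tab_\tau$ with $\inv_0=0$: requiring $\row{S}{i}\geq\row{S}{j}$ for all $i<j$ forces the smallest $\tau_l$ entries to fill the top row, the next $\tau_{l-1}$ entries to fill the row below, and so on, and the row-strict decreasing condition then determines each row uniquely. Thus $\S\neq \S_1$ forces $\inv_0(\S)>0$, after which taking $\Y=\S_1$ in the first part (noting $\Y\neq \S$ and $\inv_0(\Y)=0\leq \inv_0(\S)$) gives $g_\S(\overline{x}(\S_1))=0$. I do not anticipate any genuine obstacle, since all the difficult analysis is already carried out in Theorem~\ref{hS0start} and Proposition~\ref{vstep}; the only subtlety is the initial observation $h_{\S_0}\mid g_{\S_0}$, which is exactly Proposition~\ref{gdiv} applied in sequence across each column of $\S_0$.
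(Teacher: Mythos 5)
Your proposal is correct and takes essentially the same route as the paper, which justifies the theorem exactly as you do: $h_{\S_0}$ divides $g_{\S_0}$ by Proposition \ref{gdiv}, the basis $\{g_\S\}$ satisfies the same transformation rule \eqref{hxs} because it transforms under the action formulas for $\tau(T_i)$, and hence Theorem \ref{hS0start} and Proposition \ref{vstep} carry over verbatim, giving the analogue of Corollary \ref{hSSzero}. Your explicit check that $\S_1$ is the unique element of $\tab_\tau$ with $\inv_0=0$ correctly supplies the small argument the paper leaves implicit for the ``furthermore'' clause.
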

The analog of Theorem \ref{heqfS} follows using the same argument.
\begin{thm}
For $\S\in\tab_\tau$, 
\begin{eqnarray*}
g_{\S}\left(\overline{x}(\S)\right) =\dfrac{g_{\S_{0}}\left( \overline{x}\left(\S_{0}\right)\right) }{f_{\S_{0}}\left( \overline{x}\left(\S_{0}\right) \right) } \cdot f_{\S}\left( \overline{x}(\S)\right).
\end{eqnarray*}
\end{thm}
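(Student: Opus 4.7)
The plan is to mimic the inductive proof of Theorem \ref{heqfS} almost verbatim, the only difference being that the base case introduces a nontrivial multiplicative constant which is then propagated unchanged through every inductive step. Let $C := g_{\S_0}(\overline{x}(\S_0))/f_{\S_0}(\overline{x}(\S_0))$; we aim to show $g_\S(\overline{x}(\S))=C\cdot f_\S(\overline{x}(\S))$ for every $\S\in\tab_\tau$, by induction on $\inv_0(\S_0)-\inv_0(\S)$. The base case $\S=\S_0$ is immediate from the definition of $C$. For the inductive step, given an adjacent pair $(\S,\S^{(i)})$ with $\S$ already handled, I want to derive the identity for $\S^{(i)}$.

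First I would observe that the basis $\{g_\S\}$ satisfies \emph{exactly} the same three-term exchange relation \eqref{hxs} as $\{h_\S\}$. Indeed, $\{g_\S\}$ is a basis of isotype $\tau$ for $\mathcal{H}_N(t)$, so the $T_i$-action on $g_\S$ is governed by the same action formulas for $\tau(T_i)$ that gave \eqref{hSgroup}, and rewriting $T_i$ in terms of $s_i$ as in the derivation of \eqref{hxs} yields the identical displayed formula with $h$ replaced by $g$. Next, I would use Theorem \ref{gSSzero}: since $\inv_0(\S^{(i)})=\inv_0(\S)-1<\inv_0(\S)$ and $\S^{(i)}\neq \S$, we have $g_\S(\overline{x}(\S^{(i)}))=0$.

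Evaluating the $g$-analog of \eqref{hxs} at $\overline{x}(\S^{(i)})=\overline{x}(\S)\,s_i$, the first bracketed term vanishes, leaving
\begin{equation*}
g_{\S^{(i)}}\!\left(\overline{x}(\S^{(i)})\right)=\frac{t\,\overline{x}(\S)_{i+1}-\overline{x}(\S)_i}{\overline{x}(\S)_{i+1}-\overline{x}(\S)_i}\,g_\S\!\left(\overline{x}(\S)\right).
\end{equation*}
The inductive hypothesis gives $g_\S(\overline{x}(\S))=C\cdot f_\S(\overline{x}(\S))$. The very computation carried out inside the proof of Theorem \ref{heqfS} (comparing the $h$-ratio with the explicit Lemma \ref{fSS} $f$-ratio) shows that the prefactor above equals $f_{\S^{(i)}}(\overline{x}(\S^{(i)}))/f_\S(\overline{x}(\S))$. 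Substituting yields $g_{\S^{(i)}}(\overline{x}(\S^{(i)}))=C\cdot f_{\S^{(i)}}(\overline{x}(\S^{(i)}))$, completing the induction.

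There is no genuine obstacle here beyond bookkeeping: the only subtle point is justifying that $\{g_\S\}$ obeys the same exchange formula \eqref{hxs} as $\{h_\S\}$, which is a direct consequence of Definition \ref{BasisG} together with the common identity $T_i=(1-t)x_{i+1}\partial_i+ts_i\cdots$ used to convert the $\tau(T_i)$-action into an $s_i$-action; once that is in hand, every denominator and every ratio matches term-for-term with the proof of Theorem \ref{heqfS}, and the constant $C$ simply factors through.
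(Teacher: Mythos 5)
Your proof is correct and takes essentially the same route as the paper, which simply remarks that the analog of Theorem \ref{heqfS} ``follows using the same argument'': you carry out precisely that induction, with the base case absorbed into the ratio $g_{\S_{0}}\left(\overline{x}\left(\S_{0}\right)\right)/f_{\S_{0}}\left(\overline{x}\left(\S_{0}\right)\right)$, the exchange relation \eqref{hxs} transferred to $\{g_{\S}\}$ via the isotype property, and the vanishing input supplied by Theorem \ref{gSSzero}. No gaps.
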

Note the specific formula for $g_{\S_{1}}\left( \overline{x}\left(\S_{1}\right) \right) =\dfrac{g_{S_{0}}\left( \overline{x}\left(\S_{0}\right) \right)}{f_{\S_{0}}\left( \overline{x}\left(\S_{0}\right) \right) }\cdot f_{\S_{1}}\left( \overline{x}\left(\S_{1}\right)\right)$, which follows from Proposition \ref{fS1S1}. Thus, a factorization of $\dfrac{g_{\S_{0}}\left( \overline{x}\left(\S_{0}\right)\right) }{f_{\S_{0}}\left( \overline{x}\left(\S_{0}\right) \right) }$ provides one for $g_{\S_{1}}\left(\overline{x}\left(\S_{1}\right) \right) $, or conversely.

Here is an example in terms of some singular Macdonald polynomials. The polynomial $M_{\left(2,2,2,0,0,0\right)}$ is singular for $qt^{2}=-1$ and of isotype $\left(3,3\right)$. This polynomial corresponds to $\S_{1}$ while $M_{\left( 2,0,2,0,2,0\right) }$, with $qt^{2}=-1$, corresponds to $\S_{0}$, since $\alpha(\S_0)=(2,0,2,0,2,0)$ and $\alpha(\S_1)=(2,2,2,0,0,0)$. In this case, the special points are $\overline{x}\left(\S_{0}\right)=\left( z_{1},y_{1}t^{-2},z_{2},y_{1}t^{-1},z_{3},y_{1}\right) $ and $\overline{x}\left(\S_{1}\right) =\left(z_{1},z_{2},z_{3},y_{1}t^{-2},y_{1}t^{-1},y_{1}\right)$. Also,
\begin{multline*}
f_{\S_{0}}\left( x\right)= \left( tx_{1}-x_{2}\right) \left(tx_{3}-x_{4}\right) \left( tx_{5}-x_{6}\right) \rightsquigarrow f_{\S_{0}}\left( \overline{x}\left(\S_{0}\right) \right)=\frac{\left( z_{1}t^{3}-y_{1}\right) \left( z_{2}t^{2}-y_{1}\right)\left( z_{3}t-y_{1}\right)}{t^3}, \\
f_{\S_{1}}\left( x\right) = \left( tx_{1}-x_{4}\right) \left(t^{2}x_{2}-x_{5}\right) \left( t^{3}x_{3}-x_{6}\right)\rightsquigarrow
f_{\S_{1}}\left( \overline{x}\left(\S_{1}\right) \right)=\frac{\left( t^{3}z_{1}-y_{1}\right) \left( t^{3}z_{2}-y_{1}\right)\left( t^{3}z_{3}-y_{1}\right)}{t^3} .
\end{multline*}
By computing $M_{\alpha}$ (using the Yang-Baxter graph, \cite{DL2011}, for instance) and specializing it at $q=-t^{-2}$, we get that
\begin{eqnarray*}
\frac{M_{\left( 2,0,2,0,2,0\right) }\left( \overline{x}\left(\S_{0}\right) \right) }{f_{\S_{0}}\left( \overline{x}\left(\S_{0}\right)\right) }=\frac{M_{\left( 2,2,2,0,0,0\right) }\left( \overline{x}\left(\S_{1}\right) \right) }{f_{\S_{1}}\left( \overline{x}\left(\S_{1}\right)\right) }=-t^{3}\left( y_{1}+tz_{1}\right) \left( y_{1}+tz_{2}\right)\left( y_{1}+tz_{3}\right) .
\end{eqnarray*}
Although, in general, there is no linear factorization of $\dfrac{g_{\S_{0}}\left( \overline{x}\left(\S_{0}\right) \right) }{f_{\S_{0}}\left( \overline{x}\left(\ S_{0}\right) \right) }$ necessarily.

\subsubsection{Application to projections and factorizations}

Let us consider the polynomials $M_\mu$ with 
\begin{eqnarray*}
\mu=\left( \left( \left( d+K-1\right) m\right) ^{\nu_{K}},\left(\left( d+K-2\right) m\right) ^{n-1},\ldots\left( dm\right) ^{n-1},0^{dn-1}\right) ,
\end{eqnarray*}
with $n\geq2$, $1\leq\nu_{K}\leq n-1$ and $N=\ell(\mu) = dn-1+(n-1)\left( K-1\right) +\nu_{K}$. The associated isotype is $\tau=\left( dn-1,\left( n-1\right) ^{K-1},\nu_{K}\right)$. We also consider the specialization of the parameters $q$ and $t$, $\varpi=\left( q,t\right) =\left( \omega u^{-n/g},u^{m/g}\right) $ where $g=\gcd\left( m,n\right) $ and $\omega^{m/g}$ is a primitive $g^{\text{th}}$ root of unity. 

In Section \ref{SubHeckealgebra}, we show that for quasistaircase partitions and their associated $\tau$, there is a linear space of singular polynomials of isotype $\tau$ for $\mathcal{H}_{N}\left(t\right) $, $\left\{M_{\alpha\left(\S\right) }\ |\ \S\in\tab_\tau \right\} $.  

For $\S_{1}$, we use the following notation $\nu_{0}=N$ and $\nu_{i+1}=N-\left( d-1\right) n-i\left( n-1\right) $, for $1\leq i<K$, so then $\alpha\left(\S_{1}\right)_j=0$ for $\nu_{1}<j\leq N$, and $\alpha\left(\S_{1}\right) =\left( d+i-1\right) m$, for $\nu_{i+1}<j\leq\nu_{K}$.
\begin{defi}
For free variables $\left(z_{1},\ldots,z_{\nu_{K}},y_{1},\ldots,y_{K}\right) $, we define the \emph{special point associated to $\S_1$} by $\overline{\overline{x}}_j:= \overline{x}(\S_1)_j = t^{j-\nu_{i+1}-1}y_{K-i}$, for $\nu_{i+1}<j\leq\nu_{i}$, and $z_j$, otherwise. 
\end{defi}
For instance, take parameters $N=11$, $d=2$, $n=3$ and $m=2$. Then,  $\mu=\left( 8,8,6,6,4,4,0^{5}\right)$, $\tau=\left( 5,2,2,2\right)$, $[\nu_0,\nu_1,\nu_2,\nu_3]=\left[ 11,6,4,2\right]$, and $\overline{\overline{x}}=\left( z_{1},z_{2},y_{1},ty_{1},y_{2},ty_{2},y_{3},ty_{3},t^{2}y_{3},t^{3}y_{3},t^{4}y_{3}\right)$.

For the projection application, let $\left\{ \widehat{h}_{\S}:\S\in\tab_\tau \right\} $ denote the basis of minimal degree polynomials of isotype $\tau$ for $\mathcal{H}_{N}\left( 1/t\right) $. The minimal polynomials correspond to reverse lattice permutation of $\lambda =\left( \left( K-1\right) ^{\nu_{K}},\left( K-2\right) ^{n-1},\left(K-3\right) ^{n-1},\ldots,1^{n-1},0^{dn-1}\right)$. 

We show that $F_{\tau} \rho=\sum\limits_{\S\in\tab_\tau}\dfrac{1}{\gamma\left(\S;t\right) }\widehat{h}_{\S}M_{\alpha\left(\S\right)}$ is a highest weight symmetric Macdonald polynomial of index $\lambda+\mu$, for parameters $\varpi$. Now, we evaluate $F_\tau \rho$ at the special point $\overline{\overline{x}}$. By Theorem \ref{gSSzero}, $M_{\alpha\left(\S\right) }\left(\overline{\overline{x}}\right) =0$ for $\S\neq \S_{1}$, and  therefore
\begin{eqnarray}\label{FhM}
F_{\tau}\rho\left(\overline{\overline{x}}\right) =\dfrac{1}{\gamma\left(\S_{1};t\right) }\widehat{h}_{\S_{1}}\left(\overline{\overline{x}}\right) M_{\alpha(\S_1)}\left( \overline{\overline{x}}\right) .
\end{eqnarray}
Notice that we do not claim that $\widehat{h}_{\S}\left( \overline{\overline{x}}\right) =0$ for $\S\neq \S_{1}$. In fact, this does not hold because $\widehat{h}_{\S}$ is associated with $\mathcal{H}_{N}\left(1/t\right)$. However, since $\widehat{h}_{\S_{1}}$ is symmetric for the variables in each row of $\S_{1}$, the evaluation formula applies with $t$ replaced by $1/t$ in $\overline{\overline{x}}$ because the substring $\left[t^{-k}y,t^{1-k}y,\ldots,y\right]$ is transformed to $\left[t^{k}y,t^{k-1}y,\ldots,y\right]$ which is a permutation of $u=\left[ t^{k}y^{\prime},t^{k-1}y^{\prime},\ldots y^{\prime}\right] $ where $y^{\prime}=t^{k}y$, and $u$ equals the appropriate substring of $\overline{\overline{x}}$ with a changed variable. In fact, if $u$ comes from the $i^{\text{th}}$ row of $\S_{1}$ and $k=\tau_{i}-1$, then $y_{i}$ is replaced by $y_{i}^{\prime}=t^{\tau_{i}-1}y_{i}$. This procedure leads to the desired evaluation of $\widehat{h}_{\S_{1}}\left(\overline{\overline{x}} \right) $. We also need to use the partial factorization of $M_{\mu}\left( \overline{\overline{x}}\right)$, proved in the previous section, since $M_{\mu}$ is of the form $g_{\S_{1}}$ when $\left(q,t\right) =\varpi$.

In the paper \cite[Theorem 7.3]{CDL2017}, we provide a linear factorization of $F_{\tau}\rho\left(\overline{\overline{x}}\right) $. To state the result introduce the product
\begin{eqnarray*}
\left(a,b;q\right)_{k}:=\prod_{i=1}^{k}\left( a-bq^{i-1}\right),
\end{eqnarray*}
and for the parameters $\left( m,n,K,\nu_{K};q,t\right) $, define
\begin{multline*}
G\left( m,n,K,\nu_{K};q,t;z,y\right)  :=\prod_{u=1}^{\nu_{K}}\left( z_{u},t^{dn-1}y_{K};q\right) _{dm+1}\prod_{i=1}^{K-1}\left(z_{u},t^{n-1}y_{i};q\right) _{m+1}\\
\times\prod_{i=1}^{K-1}\prod_{s=0}^{n-2}\left( t^{s}y_{i},t^{dn-1}y_{K};q\right) _{dm+1}\times\prod_{1\leq i<j\leq K-1}\prod_{s=0}^{n-2}\left( t^{s}y_{i},t^{n-1}y_{j};q\right) _{m+1}.
\end{multline*}
The last product has the $y$-indices reversed from the original statement \cite[Theorem 7.3]{CDL2017}. That
formula is not symmetric in $\left( y_{1},\ldots,y_{K-2}\right)$, for general $\left( q,t\right) $, unless $q^{m}t^{n}=1$. Here the purpose is to demonstrate the role of $\widehat{h}_{\S_{1}}\left(\overline{\overline{x}}\right) h_{\S_{1}}\left(\overline{\overline{x}}\right) $ in the factorization. Neither of $h_{\S_{1}}$ and $\widehat{h}_{\S_{1}}$ are symmetric in $\left( y_{1},\ldots,y_{K-1}\right) $ but the product of them is. The result \cite[Theorem 7.3]{CDL2017} translates as $F_{\tau}\rho\left(\overline{\overline{x}}\right) \displaystyle{\mathop=^{(*)}}G\left( m,n,K,\nu_{K};q,t;z,y\right)$. 

Our next goal is deriving formulas for $h_{\S_{1}}\left(\overline{\overline{x}} \right)$ and $\widehat{h}_{\S_{1}}\left(\overline{\overline{x}}\right)$. 
\begin{prop}
With parameters $\left( N,m,n,d\right) $ and $\left( K,\nu_{K}\right) $ and $\left( q,t\right) =\varpi$ as above,
\begin{multline*}
h_{\S_{1}}\left(\overline{\overline{x}}\right) \mathop =^{(*)}\prod_{k=1}^{\nu_{K}}\left( z_{k}-t^{dn-1}q^{dm}y_{K}\right) \prod_{i=1}^{K-1}\left( z_{k}-t^{n-1}q^{m}y_{i}\right)\prod_{i=1}^{K-1}\prod_{s=0}^{n-2}\left( t^{s}y_{i}-t^{dn-1}q^{dm}y_{K}\right) \\
\times\prod_{1\leq i<j\leq K-1}\prod_{s=0}^{n-2}\left( t^{s}y_{i}-t^{n-1}q^{m}y_{j}\right) .
\end{multline*}
\end{prop}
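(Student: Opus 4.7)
The plan is to deduce the statement directly from Proposition~\ref{fS1S1}. The task consists of (a) specializing the formula there to the quasistaircase shape $\tau=(dn-1,(n-1)^{K-1},\nu_K)$ with $l=K+1$; (b) translating between the two $y$-labellings used in $\overline{x}(\S_1)$ and $\overline{\overline{x}}$; and (c) invoking the identity $q^{m}t^{n}=1$ coming from $(q,t)=\varpi$ to match each resulting factor with the corresponding factor in the target formula.

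For (a), instantiating Proposition~\ref{fS1S1} with $\tau_1=dn-1$, $\tau_r=n-1$ for $2\leq r\leq K$, and $\tau_{K+1}=\nu_K$ splits the two outer products according to whether the row index equals $1$ or lies in $\{2,\dots,K\}$. This produces four families of linear factors in the old variables $(z,y^{\mathrm{old}})$: factors $(t^{dn-1}z_k-y_1^{\mathrm{old}})$ and $(t^{n-1}z_k-y_r^{\mathrm{old}})$ (with $2\leq r\leq K$) on the $z$-side, and factors $(t^{dn-k}y_s^{\mathrm{old}}-y_1^{\mathrm{old}})$ and $(t^{n-k}y_s^{\mathrm{old}}-y_r^{\mathrm{old}})$ (with $2\leq r<s\leq K$ and $1\leq k\leq n-1$) on the $y$-side.

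For (b), a cell-by-cell comparison of the two special-point definitions on $\S_1$ shows that $\overline{\overline{x}}_j$ equals $\overline{x}(\S_1)_j$ after the substitution $y_r^{\mathrm{old}}=t^{\tau_r-1}\,y_{K+1-r}^{\mathrm{new}}$ for $1\leq r\leq K$ (the $z_k$'s already coincide); this accounts for the reversed, bottom-up indexing of the $y$'s built into $\overline{\overline{x}}$. Performing this substitution in each of the four families, reindexing $(j,i,\sigma):=(K+1-r,\,K+1-s,\,n-k-1)$, and then using $q^{dm}t^{dn}=1$ on the factors containing $y_K^{\mathrm{new}}$ (i.e.\ those coming from the old row $r=1$) together with $q^{m}t^{n}=1$ on the remaining factors converts each factor produced in step (a) into the corresponding factor of the target formula, up to a power of $t$.

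The only real obstacle is the bookkeeping of the leftover powers of $t$, which must combine with the prefactor $t^{-E(\tau)}$ from Proposition~\ref{fS1S1} into a single nonzero monomial. Because the claim is stated only up to the symbol $\mathop=^{(*)}$, this monomial need not be tracked, and the proof reduces to the four-way case analysis sketched above.
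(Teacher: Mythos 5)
Your proposal is correct and follows essentially the same route as the paper's proof: specialize Proposition \ref{fS1S1} to $\tau=(dn-1,(n-1)^{K-1},\nu_K)$ with $l=K+1$, identify $\overline{\overline{x}}$ with $\overline{x}(\S_1)$ via the variable change $y_r^{\mathrm{old}}=t^{\tau_r-1}y_{K+1-r}$ (exactly the paper's $y_1'=t^{dn-2}y_K$, $y_{K+1-i}'=t^{n-2}y_i$), and then use $q^mt^n=1$ (hence $q^{dm}t^{dn}=1$) to match factors up to a power of $t$, which the relation $\mathop=^{(*)}$ absorbs. The only cosmetic difference is the direction of the final substitution (you convert the derived factors into the claimed ones, the paper substitutes $q^m=t^{-n}$ into the claimed formula), which is immaterial.
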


\begin{proof}
Since $\overline{\overline{x}}_j=\overline{x}(\S_1)_j$, we apply the factorization in Proposition \ref{fS1S1} to $h_{\S_1}\left(\overline{\overline{x}}\right)$. Moreover, rename the variables $y_{i}$ in that formula by $y_{i}^{\prime}$, so that we can do the change of coordinates $y_{K}=t^{2-dn}y_{1}^{\prime}$ and $y_{i}=t^{2-n}y_{K+1-i}^{\prime}$, for $1\leq i\leq K-1$. Furthermore, in this case, the parameters are $l=K+1,\tau_{l}=\nu_{K}$, $\tau_{1}=dn-1$ and $\tau_{i}=n-1$, for $2\leq i\leq K$. Therefore, the formula becomes
\begin{multline*}
h_{\S_{1}}\left(\overline{\overline{x}}\right)  \mathop =^{(*)}\prod_{k=1}^{\nu_{K}}\left( t^{dn-1}z_{k}-t^{dn-2}y_{K}\right) \prod_{i=2}^{K}\left( t^{n-1}z_{k}-t^{n-2}y_{K+1-i}\right)\times \\
 \times \prod_{k=1}^{n-1}\prod_{2\leq i<j\leq K}\left(t^{n-1-k+1+n-2}y_{K+1-j}-t^{n-2}y_{K+1-i}\right)  \times\prod_{j=2}^{K}\prod_{k=1}^{n-1}\left(t^{dn-1-k+1+n-2}y_{K+1-j}-t^{dn-2}y_{K}\right) \\
\mathop =^{(*)}\prod_{k=1}^{\nu_{K}}\prod_{i=1}^{K}\left( tz_{k}-y_{i}\right) \times\prod_{k=1}^{n-1}\prod_{1\leq i<j\leq K-1}\left( t^{n-k}y_{i}-y_{j}\right)\times\prod_{j=1}^{K-1}\prod_{k=1}^{n-1}\left(t^{n-k}y_{j}-y_{K}\right) .
\end{multline*}
Substituting $q^{m}=t^{-n}$ in the claimed formula, this agrees up to a power of $t$ with the latter formula.
\end{proof}

\begin{prop}
With parameters $\left( N,m,n,d\right) $ and $\left( K,\nu_{K}\right) $ and $\left( q,t\right) =\varpi$ as above,
\begin{multline*}
\widehat{h}_{\S_{1}}\left(\overline{\overline{x}}\right)  \mathop =^{(*)}\prod_{k=1}^{\nu_{K}}\left( z_{k}-t^{dn-1}y_{K}\right) \prod_{i=1}^{K-1}\left( z_{k}-t^{n-1}y_{i}\right) \prod_{i=1}^{K-1}\prod_{s=0}^{n-2}\left( t^{s}y_{i}-t^{dn-1}y_{K}\right) \times\\ \times \prod_{1\leq i<j\leq K-1}\prod_{s=0}^{n-2}\left( t^{s}y_{i}-t^{n-1}y_{j}\right) .
\end{multline*}
\end{prop}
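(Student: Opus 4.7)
The plan is to mirror the proof of the preceding proposition for $h_{\S_{1}}$. Since $\widehat{h}_{\S_{1}}$ is the analogue of $h_{\S_{1}}$ in the algebra $\mathcal{H}_{N}(1/t)$, Proposition~\ref{fS1S1} applied with $t$ replaced by $1/t$ yields an explicit product formula for the evaluation of $\widehat{h}_{\S_{1}}$ at the special point $\overline{x}_{1/t}(\S_{1})$ constructed with $1/t$ in place of $t$; its row-$i$ entries form the multiset $\{y_{i},ty_{i},\ldots,t^{\tau_{i}-1}y_{i}\}$, whereas the row-$i$ entries of $\overline{\overline{x}}=\overline{x}(\S_{1})$ form $\{y_{i},t^{-1}y_{i},\ldots,t^{1-\tau_{i}}y_{i}\}$.

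The key preliminary step is to establish that $\widehat{h}_{\S_{1}}$ is symmetric in the entries of each row of $\S_{1}$. Whenever $\row{S_{1}}{i}=\row{S_{1}}{i+1}$, the row case of the action formulas for $\tau(T_{i})$ in $\mathcal{H}_{N}(1/t)$ gives $\widehat{h}_{\S_{1}}T_{i}(1/t)=(1/t)\widehat{h}_{\S_{1}}$, and unpacking the definition of $T_{i}(1/t)$ on polynomials shows that this forces $\widehat{h}_{\S_{1}}(x)=\widehat{h}_{\S_{1}}(xs_{i})$. Combined with the observation that the substitution $y_{i}\mapsto t^{1-\tau_{i}}y_{i}$ carries the row-$i$ multiset of $\overline{x}_{1/t}(\S_{1})$ into that of $\overline{\overline{x}}$, this yields
\[
\widehat{h}_{\S_{1}}(\overline{\overline{x}})=\widehat{h}_{\S_{1}}\bigl(\overline{x}_{1/t}(\S_{1})\bigr)\Big|_{y_{i}\mapsto t^{1-\tau_{i}}y_{i}}.
\]

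The remainder of the proof is a direct calculation. After performing the above substitution in the $t\to 1/t$ form of the Proposition~\ref{fS1S1} formula and factoring $t^{-\tau_{i}}$ out of each linear factor, one obtains, up to an overall power of $t$, an expression of the form $\prod_{k,i}(z_{k}-ty_{i})\prod_{i<j,k}(t^{k-\tau_{j}}y_{j}-ty_{i})$. Next, apply the same change of indexing as in the proof of the preceding proposition, namely $y_{1}^{\text{orig}}=t^{dn-2}y_{K}$ and $y_{K+1-i}^{\text{orig}}=t^{n-2}y_{i}$ for $1\leq i\leq K-1$, together with the corresponding reindexing of row-labels. A routine verification converts the four families of factors into $(z_{k}-t^{dn-1}y_{K})$, $(z_{k}-t^{n-1}y_{i})$, $(t^{s}y_{i}-t^{dn-1}y_{K})$, and $(t^{s}y_{i}-t^{n-1}y_{j})$ for $s=0,\ldots,n-2$, matching the claim up to an overall power of $t$ absorbed by the $\mathop=^{(*)}$ notation.

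The main point of caution, rather than a genuine obstacle, is bookkeeping: in contrast to the proof for $h_{\S_{1}}$, where the specialization $q^{m}t^{n}=1$ conveniently collapses both $t^{dn-1}q^{dm}$ and $t^{n-1}q^{m}$ to $t^{-1}$, the polynomial $\widehat{h}_{\S_{1}}$ carries no $q$-dependence, so the two distinct powers $t^{dn-1}$ and $t^{n-1}$ survive and must be tracked separately through the substitutions. No new structural ingredient beyond row symmetry of $\widehat{h}_{\S_{1}}$ and the explicit formula of Proposition~\ref{fS1S1} is required.
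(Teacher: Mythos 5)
Your proposal is correct and follows essentially the same route as the paper: reduce to the evaluation formula of Proposition~\ref{fS1S1} with $t$ replaced by $1/t$, justified by the row symmetry of $\widehat{h}_{\S_{1}}$, then perform the variable rescaling/reindexing ($y^{\mathrm{orig}}_{1}=t^{dn-2}y_{K}$, $y^{\mathrm{orig}}_{K+1-i}=t^{n-2}y_{i}$) and track the powers of $t$ absorbed by $\mathop{=}^{(*)}$. The only difference is organizational: you spell out inside the proof the row-symmetry argument that the paper places in the discussion preceding the two propositions, and you apply the $t\to 1/t$ substitution before the reindexing rather than after, which yields the same computation.
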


\begin{proof}
First, we consider the changing the variables $y_{i}\rightarrow t^{2-n}y_{i}$, for $1\leq i\leq K-1$, and $y_{K}\rightarrow t^{2-dn}y_{K}$. and so
\begin{multline*}
\widehat{h}_{\S_{1}}\left(\overline{\overline{x}}\right)  \mathop =^{(*)}\prod\limits_{k=1}^{\nu_{K}}\left( tz_{k}
-t^{2-dn}y_{K}\right) \prod\limits_{i=1}^{K-1}\left( tz_{k}-t^{2-n} y_{i}\right) \times\prod\limits_{k=1}^{n-1}\prod\limits_{1\leq i<j\leq K-1}\left( t^{n-k+2-n}y_{i}-t^{2-n}y_{j}\right) \\
\times\prod\limits_{j=1}^{K-1}\prod\limits_{k=1}^{n-1}\left(
t^{n-k+2-n}y_{j}-t^{2-dn}y_{K}\right) .
\end{multline*}
Now, we change $t\rightarrow t^{-1}$ to get
\begin{multline*}
\widehat{h}_{\S_{1}}\left(\overline{\overline{x}}\right)\mathop=^{(*)} \prod\limits_{k=1}^{\nu_{K}}\left( z_{k}-t^{dn-1}y_{K}\right) \prod\limits_{i=1}^{K-1}\left( z_{k}-t^{n-1}y_{i}\right) \prod\limits_{k=1}^{n-1}\prod\limits_{1\leq i<j\leq K-1}\left(y_{i}-t^{n-k}y_{j}\right) \times \\ \times \prod\limits_{j=1}^{K-1}\prod\limits_{k=1}^{n-1}\left( y_{j}-t^{dn-k}y_{K}\right).
\end{multline*}
This agrees up to a power of $t$ with the stated formula, where the index $s$, with $0\leq s\leq n-2$, is changed to $k=s+1$, so that $1\leq k\leq n-1$.
\end{proof}

Recall that $\left\{M_{\alpha\left(\S\right) }\ |\ \S\in\tab_\tau \right\} $ is a basis for isotype $\tau$ polynomials when $\left( q,t\right) =\varpi$. Then, as a consequence of the formula for $F_{\tau}\rho\left(\widehat{x}\left(\S_{1}\right) \right) $ from \cite{CDL2017} and \eqref{FhM}, we derive the following complete factorization for any $\S\in\tab_\tau$.
\begin{cor}
\begin{eqnarray*}
M_{\alpha\left(\S\right) }\left( \overline{x}\left(\S\right);\varpi\right) =\frac{M_{\mu}\left(\overline{\overline{x}};\varpi\right)}{h_{\S_{1}}\left(\overline{\overline{x}}\right) } \cdot f_{\S}\left( \overline{x}\left(\S\right) \right),
\end{eqnarray*}
where we also have that
\begin{multline*}
\frac{M_{\mu}\left(\overline{\overline{x}}\right)}{h_{\S_{1}}\left(\overline{\overline{x}} \right) }\mathop = ^{(*)}\prod_{u=1}^{\nu_{K}}\left( z_{u},t^{dn-1}qy_{K};q\right) _{dm-1}\prod_{i=1}^{K-1}\left( z_{u},t^{n-1}qy_{i};q\right) _{m-1}\\
 \times\prod_{i=1}^{K-1}\prod_{s=0}^{n-2}\left( t^{s}y_{i},t^{dn-1}qy_{K};q\right) _{dm-1}\times\prod_{1\leq i<j\leq K-1}\prod_{s=0}^{n-2}\left( t^{s}y_{i},t^{n-1}qy_{j};q\right)_{m-1}.
\end{multline*}
\end{cor}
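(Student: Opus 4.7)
The plan is to combine the general specialization theorem for bases of isotype $\tau$ polynomials (the theorem immediately preceding the corollary) with the explicit factorization of $F_\tau\rho(\overline{\overline{x}})$ from \cite{CDL2017}, exploiting the fact that, under the specialization $\varpi$, the family $\{M_{\alpha(\S)}:\S\in\tab_\tau\}$ furnishes a basis for isotype $\tau$ polynomials (Theorem \ref{Hsingp}), hence plays the role of $\{g_\S\}$.

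For the first identity, I would apply the theorem $g_\S(\overline{x}(\S))=\dfrac{g_{\S_0}(\overline{x}(\S_0))}{f_{\S_0}(\overline{x}(\S_0))}\,f_\S(\overline{x}(\S))$ to the basis $g_\S:=M_{\alpha(\S)}$ specialized at $(q,t)=\varpi$. This gives
\[
M_{\alpha(\S)}(\overline{x}(\S);\varpi)=\frac{M_{\alpha(\S_0)}(\overline{x}(\S_0);\varpi)}{f_{\S_0}(\overline{x}(\S_0))}\,f_\S(\overline{x}(\S)).
\]
Specializing the same identity at $\S=\S_1$ and using $\overline{\overline{x}}=\overline{x}(\S_1)$ together with Theorem \ref{heqfS} (which gives $h_{\S_1}(\overline{\overline{x}})=f_{\S_1}(\overline{\overline{x}})$) yields
\[
\frac{M_{\alpha(\S_0)}(\overline{x}(\S_0);\varpi)}{f_{\S_0}(\overline{x}(\S_0))}=\frac{M_{\alpha(\S_1)}(\overline{\overline{x}};\varpi)}{f_{\S_1}(\overline{\overline{x}})}=\frac{M_{\mu}(\overline{\overline{x}};\varpi)}{h_{\S_1}(\overline{\overline{x}})},
\]
which, substituted into the previous display, produces the first stated formula.

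For the explicit product expression, I would start from equation \eqref{FhM}, namely $F_\tau\rho(\overline{\overline{x}})=\gamma(\S_1;t)^{-1}\,\widehat{h}_{\S_1}(\overline{\overline{x}})\,M_\mu(\overline{\overline{x}})$, and combine it with the result of \cite[Theorem 7.3]{CDL2017} that $F_\tau\rho(\overline{\overline{x}})\stackrel{(*)}{=}G(m,n,K,\nu_K;q,t;z,y)$. This produces
\[
\frac{M_\mu(\overline{\overline{x}})}{h_{\S_1}(\overline{\overline{x}})}\stackrel{(*)}{=}\frac{G(m,n,K,\nu_K;q,t;z,y)}{\widehat{h}_{\S_1}(\overline{\overline{x}})\,h_{\S_1}(\overline{\overline{x}})}.
\]
Now I would use the two propositions just proved giving the explicit factorizations of $\widehat{h}_{\S_1}(\overline{\overline{x}})$ and $h_{\S_1}(\overline{\overline{x}})$. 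Each factor appearing in $G$ is a $q$-Pochhammer product $(a,b;q)_{k+1}=\prod_{i=0}^{k}(a-bq^{i})$; the factor with $i=0$ is precisely the corresponding factor of $\widehat{h}_{\S_1}(\overline{\overline{x}})$ (those without any power of $q$), while the factor with $i=k$ is the corresponding factor of $h_{\S_1}(\overline{\overline{x}})$ (those with the maximal power of $q$, using $q^{dm}=t^{-dn}$ and $q^{m}=t^{-n}$ from the specialization $\varpi$). Cancelling the extremal factors in each product replaces $(a,b;q)_{k+1}$ by $(a,bq;q)_{k-1}$, which exactly yields the four products in the claimed expression for $M_\mu(\overline{\overline{x}})/h_{\S_1}(\overline{\overline{x}})$.

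The main obstacle is bookkeeping rather than conceptual: one must verify that every linear factor of $\widehat{h}_{\S_1}(\overline{\overline{x}})h_{\S_1}(\overline{\overline{x}})$ appears as a boundary factor of exactly one Pochhammer block of $G$, and keep track of the overall power of $t$ absorbed into the equality up to $\stackrel{(*)}{=}$. This is where the symmetry observation from the paragraph preceding the proposition is critical: individually neither $h_{\S_1}$ nor $\widehat{h}_{\S_1}$ is symmetric in $(y_1,\dots,y_{K-1})$, but their product is, which is precisely what allows the matching with the symmetric factors of $G$ to go through once $q^mt^n=1$ is imposed.
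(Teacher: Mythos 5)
Your proposal is correct and follows essentially the same route the paper intends: the first identity comes from applying the preceding theorem $g_{\S}\left(\overline{x}(\S)\right)=\frac{g_{\S_{0}}\left(\overline{x}\left(\S_{0}\right)\right)}{f_{\S_{0}}\left(\overline{x}\left(\S_{0}\right)\right)}f_{\S}\left(\overline{x}(\S)\right)$ to the basis $g_{\S}=M_{\alpha(\S)}$ at $\varpi$ and identifying the constant at $\S=\S_{1}$ via $h_{\S_{1}}\left(\overline{\overline{x}}\right)=f_{\S_{1}}\left(\overline{\overline{x}}\right)$, and the product formula comes from combining \eqref{FhM} with the factorization of $F_{\tau}\rho\left(\overline{\overline{x}}\right)$ from \cite{CDL2017} and cancelling the extremal ($q^{0}$ and $q^{dm}$ or $q^{m}$) factors of each Pochhammer block against the stated factorizations of $\widehat{h}_{\S_{1}}\left(\overline{\overline{x}}\right)$ and $h_{\S_{1}}\left(\overline{\overline{x}}\right)$. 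Your bookkeeping of the cancellation, the role of Theorem \ref{Hsingp} in justifying $g_{\S}=M_{\alpha(\S)}$, and the absorption of constants into the $(*)$-equality all match the paper's (much terser) derivation.
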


\section{Conclusion and perspective\label{sec:conclusion}}

By appealing to the theory of vector-valued Jack and Macdonald polynomials, we have demonstrated the close relationship between highest weight symmetric polynomials and nonsymmetric singular polynomials. The relationship depends entirely on certain representations of the symmetric groups or of the Hecke algebra. Picturesquely, these representations come from tableaux formed by stacking the steps of the quasistaircase on top of each other.

In previous work, the authors have studied clustering properties of symmetric Macdonald polynomials, that is, the factorization into linear factors of such a polynomial specialized to certain sets of points. These points typically have a number of free variables. The present work should be useful in finding and proving clustering properties of nonsymmetric Macdonald polynomials by
exploiting the projection relationship.

Clustering properties of certain (symmetric homogeneous) Jack polynomials are of great interest for physicist in particular in the Quantum Hall Effect theory \cite{BH2008,BH2}. The tools developed in this paper as well as in the previous one \cite{CDL2017} contribute to a better knowledge of the properties highlighted by Bernevig and Haldane. It is worth noting that these properties are illuminated when studying generalized versions of Jack's polynomials: nonsymmetric, shifted, and $q$-deformed. In this paper, we have also used a more general variety of polynomials whose coefficients belong to irreducible representations of the Hecke algebra: the vector valued Macdonald polynomials. This shows that there is still a need to develop an arsenal of theoretical tools in order to fully understand the observations of physicists and in particular if we want to be able to properly prove the two remaining conjectures of Bernevig and Haldane. This is a broad research program  that we will continue to  explore in future works. It is very likely that the vector-valued Macdonald  polynomials will play a central role.

One of our goals is to promote, among physicists, the use of $q$-deformations and vector-valued polynomials. If we could convince them that these generalizations allow us to understand the fine properties of the wave functions they  investigate, then we would consider that we would have succeeded in our mission.

\bibliographystyle{plain}  
\bibliography{paper2-bib}

\end{document}